\newtheoremstyle{thmstyleone}
{18pt plus2pt minus1pt}
{18pt plus2pt minus1pt}
{\itshape}
{0pt}
{\bfseries}
{}
{.5em}
{}
\theoremstyle{thmstyleone}
\newtheorem{theorem}{Theorem}[section]
\newtheorem{lemma}[theorem]{Lemma}
\newtheorem{proposition}[theorem]{Proposition}
\newtheorem{example}[theorem]{Example}
\newtheorem{remark}[theorem]{Remark}
\newtheorem{assumption}[theorem]{Assumption}
\numberwithin{equation}{section}
\newcommand{\Sp}{\mathrm{S}^{1}}
\newcommand{\R}{\mathbb{R}}
\newcommand{\C}{\mathbb{C}}
\newcommand{\CP}{\mathbb{CP}^{1}}
\newcommand{\Cc}{\mathrm{C}}
\begin{document}

\title[Average Winding Number]{Average Winding Number for Determinantal Curves associated with 2-Matrix Models in the Class AIII}

\author{\fnm{Mathieu} \sur{Yahiaoui}}\email{m.yahiaoui@unimelb.edu.au}

\author{\fnm{Mario} \sur{Kieburg}}\email{m.kieburg@unimelb.edu.au}

\affil[1]{\orgdiv{Department of Mathematics and Statistics}, \orgname{The University of Melbourne}, \orgaddress{\street{813 Swanston Street}, \city{Parkville}, \postcode{3052}, \state{Victoria}, \country{Australia}}}

\abstract{To classify one-dimensional disordered quantum systems with chiral symmetry, we analyse the winding number of the determinant of a parametrized non-Hermitian random matrix field over the unit circle modelling the off-diagonal block of a disordered chiral Hamiltonian. The associated partition function is computed explicitly for a broad class of additive two-matrix models extending beyond the Ginibre Unitary Ensemble. In the large-dimension limit, we derive an asymptotic expansion of the average winding number whose leading term exhibits universal features, up to the tail behaviour of the underlying random matrix ensemble, and identify a new correction term absent in the previously studied Ginibre case.}

\keywords{Random matrix fields, 2-matrix additive models, winding number, characteristic polynomials, partition function, disordered Hamiltonians}


\pacs[MSC Classification]{15B52, 82B44, 41A60}

\maketitle

\section{Introduction}\label{sec:intro}

The discovery of the quantum Hall effect in 1980 \cite{klitzingNewMethodHighAccuracy1980} revealed that, as the strength of an external magnetic field increases, the Hall conductance of a two-dimensional electron gas forms quantized plateaux. These plateaux persist even in the presence of impurities or lattice imperfections, indicating a remarkable robustness of the quantization to disorder. 

This phenomenon initiated the modern study of topological phases of matter, where insulating bulk behaviour coexists with conducting edge states. The systematic classification of such phases was achieved in the seminal work \cite{SchnyderClassificationTopologicalInsulators2008} and relies on the Altland-Zirnbauer classification \cite{AltlandNonstandardSymmetryClasses1997}, sometimes referred to as the \textit{Tenfold Way} \cite{BaezTenfoldWay2020}, which organises spectrally gapped Hamiltonians according to the Cartan classification of Riemannian symmetric spaces.

More recently, non-Hermitian Hamiltonians have attracted interest as effective models for open quantum systems where dissipation may occur. In that context, the interplay between topology and spectral structure plays a central role, see \cite{AshidaNonHermitianPhysics2020} for a recent review.

Through the bulk-boundary correspondence (see for example \cite{GrafBulkEdgeCorrespondenceDisordered2018,SawadaBulkBoundaryCorrespondenceErgodic2024}), edge phenomena observed in systems with open boundary conditions may be understood by studying the same system with closed boundary conditions often referred to as the \textit{bulk}. The corresponding bulk Hamiltonian has the aforementioned discrete translational lattice invariance. Its Fourier transform may be parametrized by $p\in\Sp$ on the unit circle when considering the one-dimensional setting. Furthermore, its eigenvectors form a complex vector bundle over $\Sp$. Non-triviality of the latter corresponds to the existence of protected edge states in the original system.

We consider one-dimensional quantum systems with discrete translational symmetry and study their Hamiltonians in the quasi-momentum representation. This formulation, natural for crystalline lattices, follows from Bloch's theorem which states that eigenstates of the Schr\"odinger operator can be written as
\begin{align}\label{th:bloch-d1}
    \psi_{\mathrm{k}}\left(x\right)=e^{i\mathrm{k}x}u_{\mathrm{k}}\left(x\right),
\end{align}
where $\mathrm{k}$ is the quasi-momentum varying over the Brillouin zone $[-\pi,\pi]\cong\Sp$ and $u_{\mathrm{k}}$ is periodic with respect to the lattice spacing set to unity for simplicity. In what follows, we identify $p=e^{i\mathrm{k}}\in\Sp$.

Among the ten Altland-Zirnbauer symmetry classes, the chiral \textbf{class AIII} exhibits a particularly rich structure in odd dimensions due to the presence of an energy gap at the Fermi level. It corresponds to quantum systems that are neither time-reversal invariant nor particle-hole invariant yet remain symmetric under the composition of time-reversal and particle-hole conjugation, see \cite[Section~E]{ChiuClassificationTopologicalQuantum2016}. In such models, the determinant of the off-diagonal block of the Hamiltonian defines a map from the circle $\Sp$ to the punctured complex plane $\C^{*}$ whose winding number around the origin serves as a characteristic integer invariant. 

As noted in \cite{braunWindingNumberStatistics2022,hahnUniversalCorrelations2024}, this invariant given by the determinantal curve remains meaningful under random perturbations even though averaging may smear out the spectral gap. Indeed, each realization of the random Hamiltonian retains a finite gap whose typical length scales with the local mean level spacing.

We model the system by a random field $H$ of complex chiral Hermitian matrices over the unit circle, representing a one-dimensional disordered crystal in quasi-momentum space. The randomness encodes disorder and random matrix theory provides a tractable framework for the corresponding ensemble averages. The invariant of interest is the winding number of the determinant of the off-diagonal block of the Hamiltonian. While this quantity is strictly defined for periodic Bloch Hamiltonians, disorder generally breaks periodicity. To recover a well-defined invariant one may assume that the disorder is periodic over a large yet finite window whose length is eventually taken to infinity.

Our analysis extends the computation of the partition function associated with the random winding number, previously derived for the Ginibre Unitary Ensemble (\textbf{GinUE}) additive $2$-matrix model \cite{braunWindingNumberStatistics2022,hahnWindingNumberStatistics2023,hahnUniversalCorrelations2024}, to a broader class of non-Hermitian additive $2$-matrix models for which exact formulas can still be obtained, allowing us to extract large-dimension asymptotics for the average winding number.

Section~\ref{sec:main} introduces the model and the associated constructions including the winding number, the partition function and P\'olya ensembles of multiplicative type~\cite{kieburgExactRelationSingular2016,kieburgProductsRandomMatrices2019,forsterPolynomialEnsemblesPolya2021}. In this section, we also summarise our main results: \autoref{theorem1} provides an exact expression valid for finite size $N$ and for all P\'olya $2$-matrix models whereas \autoref{theorem2} gives the asymptotic expansion of the mean winding number for the Muttalib-Borodin ensemble~\cite{borodinBiorthogonalEnsembles1998} of Laguerre type. As a comparison, \autoref{proposition.gin} confirms that the result of \cite{braunWindingNumberStatistics2022,hahnWindingNumberStatistics2023} for the GinUE additive 2-matrix model extends to more general matrix-valued Gaussian random fields. Proofs of the finite and large-$N$ results are presented in Sections~\ref{sec1} and~\ref{sec:asymp}, respectively, and the auxiliary integral identities used therein are collected in the appendices. We conclude in Section~\ref{sec13}.

\section{Preliminaries and main results}\label{sec:main}

\subsection{Winding number of a determinantal curve}\label{sec:winding-number}

As aforementioned, we aim to study a specific classification of topological phases of random Hamiltonians. More precisely, we focus our attention on class AIII in the one-dimensional setting, for which the invariant happens to be a winding number.
In a chiral basis, a class AIII Hamiltonian has the following form
\begin{align}\label{chiral.Hamiltonian}
    H\left(p\right)=\begin{pmatrix}
0_{N} & K\left(p\right) \\
K^{\dagger}\left(p\right) & 0_{N}
\end{pmatrix},
\end{align}
where $(\cdot)^{\dagger}$ denotes the Hermitian conjugate and we recall that $p$ denotes the quasi-momentum varying over the Brillouin zone, which is isomorphic to the unit circle. The matrix-valued random field $K$ is assumed to be analytic entrywise over $\mathrm{S}^{1}$.

Topology becomes relevant when assuming the existence of an energy gap around zero. The presence of a spectral gap gives rise to several topological properties of this set of Hamiltonians, which can be captured by different invariants.

In a naive approach, the determinantal curve $p\mapsto \det\left(H\left(p\right)\right)$ would not yield any significant topological information since $H$ is Hermitian and its determinant takes values in the homotopically trivial space $\R_{+}$. Following~\cite{MaffeiTopologicalCharacterizationChiral2018}, one may instead consider the determinantal curve over $\Sp$ associated with the sub-block $K$. We note that the eigenvalues of $K$ are not necessarily periodic and some might be permuted, see \autoref{fig:eigenflow}.

\begin{figure}[t!]
  \centering
  \includegraphics[width=\linewidth,height=0.42\textheight,keepaspectratio]{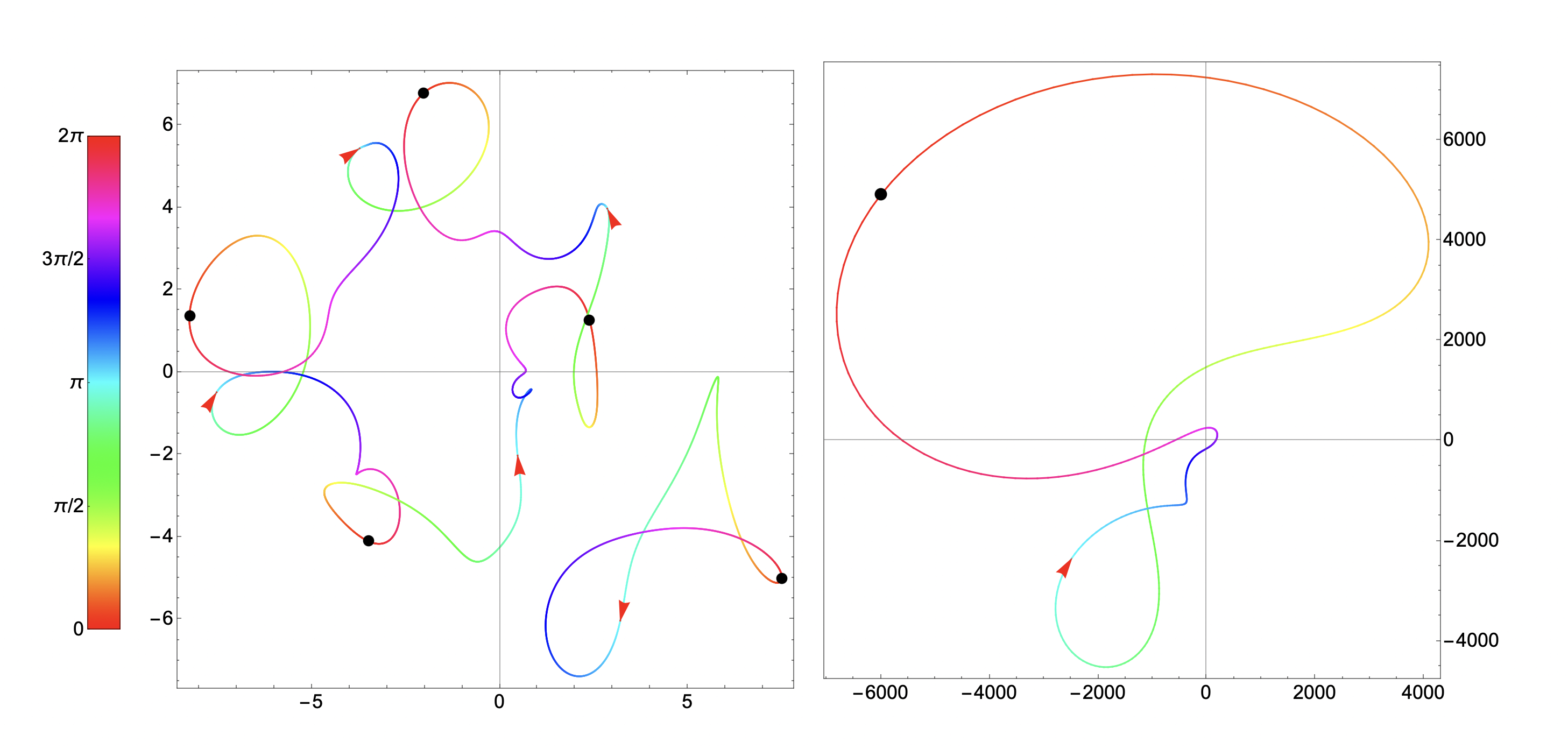}
  \caption{On the left: eigenvalue flow of $K(p)=(\frac{1}{2i}(p-p^{-1})+p)K_{1}+(3-\frac{1}{10}(p+p^{-1})+p^{-2})K_{2}$ where the two source random matrices $K_{1}$ and $K_{2}$ are independently drawn from the $\mathrm{GinUE}(5)$. On the right: its associated determinantal curve. The black dots indicate the $5$ eigenvalues of $K(0)$ and $\det(K(0))$ while the red arrows mark the points where $\mathrm{Arg}(p)=\pi$. The winding number around the origin of the determinantal curve equals $+1$. The eigenvalue flow is not $2\pi$-periodic: as $p$ winds once around $\Sp$, one eigenvalue traces a closed orbit whereas the remaining four form a $4$-cycle. As a result, the flow is $8\pi$-periodic. Plots generated with Mathematica}
  \label{fig:eigenflow}
\end{figure}

The determinant of $K$ is the simplest $2\pi$-periodic function that encodes the entire spectrum, to that end we define the determinantal curve
\begin{align}
    \mathscr{C}:p\mapsto \det\left(K\left(p\right)\right).
\end{align}
The energy gap of an individual chiral Hamiltonian~\eqref{chiral.Hamiltonian} is in direct correspondence with the spectral gap at the origin of $K$ since $\det H=|\det K|^2$.
We thus consider the corresponding winding number of $\det\left(K\left(p\right)\right)$ around the origin explicitly given by
\begin{align}\label{windingnumber1}
    \mathrm{Wind}_{N}\left(\mathscr{C},0\right)=\frac{1}{2\pi i}\oint_{\mathrm{S}^{1}}^{}w\left( p \right)\mathrm{d}p,
\end{align}
where $w$ is the winding number density defined as
\begin{align}\label{wdnumberdensity1}
    w\left( p \right)=\frac{\mathrm{d}}{\mathrm{d}p}\log\left( \det\left( K\left( p \right) \right) \right)=\frac{1}{\det\left( K\left( p \right)\right)}\frac{\mathrm{d}}{\mathrm{d}p} \det\left( K\left( p \right) \right). 
\end{align}
This quantity will be our object of interest and we will study its $m$-point correlation functions with $m\geq 1$ and $N\geq 1$.
The arguments of $\mathrm{Wind}_{N}$ shall highlight the chosen curve $\mathscr{C}$ and the point it winds around. For short we denote the winding number by $\mathrm{Wind}_{N}=\mathrm{Wind}_{N}\left(\mathscr{C},0\right)$.

To investigate the distribution of that random winding number, a natural quantity to consider is the expectation of its $m$th moment as it opens the way for a possible central limit theorem, i.e.,
\begin{align}
    \mathbb{E}\Big(\mathrm{Wind}_{N}^{m}\Big)=(2\pi i)^{-m}\oint_{\mathbb{T}^{m}}\mathbb{E}\Bigg(\prod_{k=1}^{m} w\left( p_{k} \right)\Bigg)\mathrm{d}p_{1}\cdots \mathrm{d}p_{m},\label{moment.def}
\end{align}
where $\mathbb{T}^{m}:=\Sp\times\cdots\times\mathrm{S^{1}}$ denotes the $m$-dimensional unit torus.
The expectation value inside the integral can be traced back to the partition function
\begin{align}\label{partitionfunction1}
    \mathscr{Z}^{\left(N\right)}_{m}\left ( \mathbf{p},\mathbf{q} \right )=\mathbb{E}\Bigg(\prod\limits_{k=1}^{m}\frac{\det\left ( K\left ( p_{k} \right ) \right )}{\det\left ( K\left ( q_{k} \right ) \right )}\Bigg)
\end{align}
with $\mathbf{p}=\left(p_{1},\ldots,p_{m}\right)$ and $\mathbf{q}=\left(q_{1},\ldots,q_{m}\right)$ two collections of points on $\Sp$.
The associated $m$-point correlation function 
\begin{align}
    \Cc_{m}^{(N)}(\mathbf{p})=\mathbb{E}\Bigg(\prod_{k=1}^{m} w\left( p_{k} \right)\Bigg)
\end{align}
is then obtained by taking successive derivatives of $\mathscr{Z}^{\left(N\right)}_{m}$ and setting $\mathbf{p}=\mathbf{q}$. For example the 1-point and 2-point correlation functions are given by
\begin{align}
        &\Cc^{\left ( N \right )}_{1}\left ( p \right )=\mathbb{E}\left ( w\left ( p \right ) \right )=\left.\frac{\mathrm{d}}{\mathrm{d}p} \mathscr{Z}^{\left(N\right)}_{\mathrm{1}}\left(p,q\right) \right|_{q=p},
        \label{C1.def}\\
        &\Cc^{\left ( N \right )}_{2}\left ( p_{1},p_{2} \right )=\mathbb{E}\left ( w\left ( p_{1} \right )w\left ( p_{2} \right ) \right )=\left.\frac{\mathrm{d}^{2}}{\mathrm{d}p_{1}\mathrm{d}p_{2}} \mathscr{Z}^{\left(N\right)}_{\mathrm{2}}\left(p_{1},p_{2},q_{1},q_{2}\right) \right|_{\substack{q_{1}=p_{1}\\q_{2}=p_{2}}}.
\end{align}
As shown above, the integrating $\Cc^{\left ( N \right )}_{m}$ over $\mathbb{T}^{m}$ gives the $m$th moment of $\mathrm{Wind}_{N}$.

\subsection{Additive 2-matrix models}\label{sec:random-matrix}

The 2-matrix model for which both random matrices are independently drawn from the GinUE, which is defined by the requirement that all entries are independent standard complex Gaussian random variables and was first introduced in \cite{ginibreStatisticalEnsemblesComplex1965}, has been studied in detail recently in \cite{hahnWindingNumberStatistics2023}. A recent survey may be found in \cite{byunProgressStudyGinibre2025}.

We summarise these results here briefly. To that end, we fix two complex-valued and differentiable parameter functions $a$ and $b$ over the unit circle $\Sp$ and set 
\begin{align}\label{functions1}
    \nu\left(p\right)=\begin{pmatrix}
a\left ( p \right )
\\ 
b\left ( p \right )
\end{pmatrix}\in\C^2 ,\qquad\kappa\left(p\right)=\frac{a\left ( p \right )}{b\left ( p \right )}\in\CP,\qquad\mathrm{J}=\begin{pmatrix}
        0 & 1 \\ 
        -1 & 0
        \end{pmatrix},
\end{align}
where $\CP$ denotes the complex projective line.
To construct the additive 2-matrix model, we draw two random matrices $K_{1}$ and $K_{2}$ and form 
\begin{align}\label{def:2matrixmodel}
    K(p)=a\left(p\right)K_{1}+b\left(p\right)K_{2}.
\end{align}
\begin{assumption}\label{assumption1}
    For the winding number to be well-defined we require that the two parameter functions $a$ and $b$ never vanish simultaneously over $\Sp$. Furthermore, each of them is assumed to have isolated zeroes, ensuring that their logarithmic derivatives remain well-defined and yield only simple poles over $\Sp$. This guarantees the tractability of subsequent computations.
\end{assumption}

In the special case where the source matrices are drawn from the GinUE of size $N$ the result for the partition function is quite simple and was derived in~\cite[Sec.~IV]{hahnWindingNumberStatistics2023}:
\begin{align}\label{partitiongaussian}
    \mathscr{Z}^{\left(N\right)}_{m}\left ( \mathbf{p},\mathbf{q} \right )=\frac{\det\Big(\frac{1}{\nu\left ( p_{i} \right )^{\top}J\nu\left ( q_{j} \right )}\left ( \frac{\nu\left ( q_{j} \right )^{\dagger}\nu\left ( p_{i} \right )}{\nu\left ( q_{j} \right )^{\dagger}\nu\left ( q_{j} \right )} \right )^{N}\Big)_{1\leq i,j\leq m}}{\det\Big(\frac{1}{\nu\left ( p_{i} \right )^{\top}J\nu\left ( q_{j} \right )}\Big)_{1\leq i,j\leq m}}.
\end{align}
This result highlights an elegant and algebraically rich structure for the partition function associated with the random winding number.

We aim to expand this result to a larger class of random matrix ensembles and to include complex conjugates in the partition function \eqref{partitionfunction1} in order to capture the real and imaginary parts of the winding number density \eqref{wdnumberdensity1}. This class of random matrices is called \textit{multiplicative P\'olya ensembles on} the complex general linear group $\mathrm{GL}_{N}(\C)$ and will be reviewed in the next subsection.

\subsection{P\'olya ensembles of multiplicative type on $\mathrm{GL}_{N}(\C)$}\label{sec:polya}

The cornerstone of the computation in~\cite{hahnWindingNumberStatistics2023} was to reduce the 2-matrix model to a single matrix model which consists of the product of the first matrix with the inverse of the second one. This reduction is what ultimately makes the closed-form result~\eqref{partitiongaussian} possible. To generalize this idea, we need a class of random matrices that is closed under product and inverse while retaining suitable integrable structure.

Such a class has been constructed \cite{kieburgProductsRandomMatrices2019,forsterPolynomialEnsemblesPolya2021} in the past decade and is referred to as \textit{P\'olya ensembles on} $\mathrm{GL}_{N}\left(\C\right)$. It has many convenient analytical and algebraic properties, for instance both its collections of eigenvalues and of singular values form determinantal point processes.
We will give a brief introduction, an in-depth treatment can be found in~\cite{kieburgExactRelationSingular2016,kieburgProductsRandomMatrices2019,forsterPolynomialEnsemblesPolya2021}. 
 
To define P\'olya ensembles (of multiplicative type on $\mathrm{GL}_{N}\left(\C\right)$), we introduce the following Euler-type differential operators
\begin{align}
    \Theta^{\left ( 0 \right )}= \mathrm{Id},\quad\Theta^{\left ( n \right )}=\Big(-x\frac{\mathrm{d}}{\mathrm{d}x}\Big)^{n}\quad\text{ for }n\geq 1.
\end{align}
Let $I$ be an interval of $\R$ containing $1$ and $n$ a strictly positive integer, we introduce the functional set 
\begin{align}
    \mathrm{L}^{1,n}_{I}\left ( \R_{+} \right )=&\biggl\{ f\in\mathrm{L}^{1}\left ( \R_{+} \right )\cap\mathscr{C}^{n}\left( \R_{+} \right),\,\forall \left (s,k  \right )\in I\times\left [ 0,n \right ]:\nonumber\\
    &\int_{0}^{\infty}\left | t^{s-1}\Theta^{\left ( k \right )}\left (f  \right )\left ( t \right ) \right |\mathrm{d}t<+\infty \biggl \},
\end{align}
where $\mathrm{L}^{1}\left ( \R_{+} \right )$ and $\mathscr{C}^{n}\left( \R_{+} \right)$ are respectively the Lebesgue integrable and the $n$-times continuously differentiable functions on the set of positive real numbers $\R_+$.

In what follows, $f$ is an element of $\mathrm{L}^{1}\left(\R_{+}\right)$ and $A>0$. The multiplicative P\'olya ensembles are intimately related to the Mellin transform which alongside the incomplete Mellin transform is given by
\begin{align}\label{mellintransform1}
    \mathcal{M}\left [ f \right ]:z\mapsto \int_{0}^{+\infty}t^{z-1}f\left ( t \right )\mathrm{d}t, &&\textbf{Mellin transform},
    \\
    \mathcal{M}\left [ f \right ]\left(\,\cdot\,,A\right):z\mapsto \int_{0}^{A}t^{z-1}f\left ( t \right )\mathrm{d}t,&&\textbf{incomplete Mellin transform},\label{mellintransform1.inc}
\end{align}
While the exact domain of definition of the Mellin transform in the complex plane can be quite complicated, it always contains at least the line $i\R+1:=\left \{ z\in\C,\,\mathfrak{Re}\left ( z \right )=1 \right \}$ due to the absolute integrability of $f\in\mathrm{L}^{1}\left(\R_{+}\right)$.

Related to the Mellin transform is the multiplicative convolution of two functions $f,g\in\mathrm{L}^{1}\left(\R_{+}\right)$ given by
\begin{align}\label{multiplicativeconvolution}
    f\circledast g:x\mapsto \int_{0}^{+\infty}f\left ( \frac{x}{y} \right )g\left ( y \right )\frac{\mathrm{d}y}{y}.
\end{align}
Notably the Mellin transform is injective over $\mathrm{L}^{1}\left(\R_{+}\right)$ and for a complex number $z$ where both $\mathcal{M}\left [ f \right ]$ and $\mathcal{M}\left [ g \right ]$ are defined we have
\begin{align}
    \mathcal{M}\left [ f\circledast g \right ]\left(z\right)=\mathcal{M}\left [ f\right ]\left(z\right)\mathcal{M}\left [ g\right ]\left(z\right).
\end{align}
Additionally, we have the following property
\begin{align}\label{id:mellintheta}
    \mathcal{M}\left [ \Theta^{(n)}f\right ]\left(z\right)=z^{n}\mathcal{M}\left [ f\right ]\left(z\right).
\end{align}
A more complete treatment of the properties of the Mellin transform can be found in~\cite[Sec.~2]{kieburgProductsRandomMatrices2019}.

The class of \textit{P\'olya ensembles} is based on the notion of a \textit{P\'olya frequency function} which sits at the core of these random matrix ensembles.
A measurable function in $\mathrm{L}^{1}\left ( \R \right )$ is called a \textit{P\'olya frequency function of order} $n$, denoted by $f\in\mathrm{PF}_{n}$, if for all collections of real numbers $(x_{i}),\,(y_{j})$ in $\R^{n}$ and for all integer $1\leq k\leq n$, $f$ satisfies
    \begin{align}
        \Delta_{k}\left ( x_{1},\cdots,x_{k} \right )\Delta_{k}\left ( y_{1},\cdots,y_{k} \right )\det\left ( f\left ( x_{i}-y_{j} \right ) \right )_{1\leq i,j\leq k}\geq 0,
    \end{align}
where we have used the Vandermonde determinant given by
\begin{equation}
    \Delta_{k}\left ( x_{1},\cdots,x_{k} \right )=\det\left(x_{i}^{j-1}\right)_{1\leq i,j\leq k}=\prod_{1\leq i<j\leq k}(x_{j}-x_{i}).
\end{equation}
A function $f$ is said to be a \textit{P\'olya frequency function of infinite order}, denoted by $f\in\mathrm{PF}_{\infty}$, if and only if $f$ is a P\'olya frequency function of all orders so that one may write
    \begin{align}
      \mathrm{PF}_{\infty}=\bigcap\limits_{n\geq 1}\mathrm{PF}_{n}.  
    \end{align}
While a general description of $\mathrm{PF}_{n}$ for $n>3$ is not known, a few elementary and structural results are available. Firstly, for all integers $n\geq 1$ 
\begin{align}
    \mathrm{PF}_{\infty}\subset\mathrm{PF}_{n+1}\subset\mathrm{PF}_{n}.
\end{align}
In particular, the first level of the hierarchy is simply given by
\begin{align}
    \mathrm{PF}_{1}=\left \{ f\in\mathrm{L}^{1}\left ( \R \right ):\forall x\in\R,\, f\left ( x \right )\geq 0 \right \}.
\end{align}
The second level $\mathrm{PF}_{2}$ also admits a complete characterization: a continuous and integrable function $f$ on $\R$ is in $\mathrm{PF}_{2}$ if and only if $f$ is positive and log-concave. A proof may be found in Ref.~\cite[Section~11]{saumardLogconcavityStrongLogconcavity2014}. We also signal that a complete, albeit more involved, characterization of $\mathrm{PF}_{3}$ is available in \cite{WeinbergerCharacterizationPolyaFrequency1983}.

Sometimes coined \textit{Polynomial Ensembles of Derivative Type} in the literature, \textit{P\'olya ensembles} on $\mathrm{GL}_{N}\left(\C\right)$ are families of matrix-valued random variables parametrized by a P\'olya frequency function $\omega$ which are included in polynomial ensembles firstly introduced in~\cite{kuijlaarsTransformationsPolynomialEnsembles2016}, themselves included in the larger family of bi-unitary invariant (sometimes called \textit{isotropic}) random matrices \cite{kieburgProductsRandomMatrices2019,forsterPolynomialEnsemblesPolya2021}. 
The bi-unitary invariance means that their probability measure on $\mathrm{GL}_{N}\left(\C\right)$ is invariant under the adjoint action of the unitary group $\mathrm{U}(N)$ in the following sense
\begin{align}\label{def:biUinvariance}
    \forall M\in\mathrm{GL}_{N}\left(\C\right),\,\forall\, U_{1},U_{2} \in\mathrm{U}(N):\quad \mathrm{d}\mathbb{P}\left ( U_{1}MU_{2} \right )=\mathrm{d}\mathbb{P}\left ( M \right )
\end{align}
This property implies that the associated joint probability density function (\textbf{jpdf}) of the  complex eigenvalues denoted by $f_{\mathrm{EV}}$ is rotationally-invariant. It also implies that the singular values are independent of their singular vectors, each of the latter being Haar distributed on the unit sphere of $\C^{N}$. It happens that the distribution of singular values looks simpler when one squares them, the corresponding squared singular values jpdf will be denoted by  $f_{\mathrm{SV}}$. 

We say that a function $\omega\in\mathrm{L}^{1,N-1}_{\left [ 1,N \right ]}\left ( \R_{+} \right )$ gives rise to a \textit{P\'olya Ensemble} on ${\rm Gl}_{N}\left(\C\right)$ as in  \cite[Eq.~(2.17)]{forsterPolynomialEnsemblesPolya2021} (equivalently that $\omega$ is a \textit{P\'olya weight}), if and only if the logarithmic deformation
\begin{align}\label{def:polyaweight}
    \widetilde{\omega}:x\mapsto e^{-x}\omega\left ( e^{-x} \right )
\end{align}
is in $\mathrm{PF}_{N}$ and the squared singular values jpdf is
\begin{equation}\label{def:SVjpdf}
    f^{\left ( N \right )}_{\mathrm{SV}}\left [ \omega \right ](\mathbf{x})= C^{\left ( N \right )}_{\mathrm{SV}}\left [ \omega \right ]\Delta_{N}\left ( \mathbf{x} \right )\det\left(\Theta^{\left(j-1\right)}\left [ \omega \right ]\left ( x_{i} \right )\right)_{1\leq i,j\leq N}.
\end{equation}
The normalisation constants corresponds to the first term of 
\begin{align}\label{normalization1}
    C^{\left ( N \right )}_{\mathrm{SV}}\left [ \omega \right ]=\Big(\prod\limits_{k=1}^{N}k!\mathcal{M}\left [ \omega \right ]\left ( k \right )\Big)^{-1}, \quad C^{\left ( N \right )}_{\mathrm{EV}}\left [ \omega \right ]=\Big(\pi^N N!\prod\limits_{k=1}^{N}\mathcal{M}\left [ \omega \right ]\left ( k \right )\Big)^{-1}.
\end{align}
The second term normalizes the complex eigenvalues jpdf given by
\begin{equation}\label{def:EVjpdf}
            f^{\left ( N \right )}_{\mathrm{EV}}\left [ \omega \right ](\mathbf{z})= C^{\left ( N \right )}_{\mathrm{EV}}\left [ \omega \right ]\left |\Delta_{N}\left ( \mathbf{z} \right )  \right |^{2}\prod\limits_{k=1}^{N}\omega\left ( \left | z_{k} \right |^{2} \right ).
\end{equation}
\begin{remark}
Without loss of generality, we normalise the P\'olya weights so that $\mathcal{M}[\omega](1)=1$.
\end{remark}
We say that a random matrix $X$ is drawn from the P\'olya ensemble (of multiplicative type on $\mathrm{GL}_{N}(\C)$ with weight $\omega$) if its probability density on $\mathrm{GL}_{N}(\C)$ is bi-unitarily invariant in the sense of \eqref{def:biUinvariance} and if the  jpdfs of the squared singular values as well as complex eigenvalues are given respectively by \eqref{def:SVjpdf} and \eqref{def:EVjpdf}. In that case, we write $X\sim\mathrm{P\acute{o}l}_{N}\left [ \omega \right ]$.

It should be noted that a P\'olya weight $\omega$, such that $\widetilde{\omega}$ is at least in $\rm{PF}_{2}$, can be represented with a logarithmic change of variables so that, for all $t>0$
\begin{align}\label{id:polyarep}
    \omega(t)=\frac{\exp(-\psi(-\rm{ln}(t)))}{t},
\end{align}
where $\psi$ is a continuous convex function on $\R$. Additionally, logarithmic deformations of P\'olya weights in $\mathrm{PF}_{2}$ decay at least exponentially near infinity, see for example Lemma ~A.1 in \cite{DumbgenMaximumLikelihoodEstimation2009}.

Before giving an example, we conclude this section with a crucial result that sheds light on the multiplicative structure of the class of random matrices formed by \textit{P\'olya ensembles}. For this purpose, we choose two P\'olya weights $\omega_{1}$ and $\omega_{2}$ such that the corresponding functions $\widetilde{\omega}_{1}$ and $\widetilde{\omega}_{2}$ defined in~\eqref{def:polyaweight} belong to $\mathrm{PF}_{\infty}$. Moreover, we draw two independent P\'olya random $N\times N$ matrices $X\sim\mathrm{P\acute{o}l}_{N}\left [ \omega_{1} \right ]$ and $Y\sim\mathrm{P\acute{o}l}_{N}\left [ \omega_{2} \right ]$ and set
        \begin{align}
            \check{\omega}_{1}(x)=\frac{1}{\mathcal{M}[\omega_{1}](N)x^{N+1}}\omega_{1}\left ( \frac{1}{x} \right ).
        \end{align}     
        
\begin{proposition}[see Eq.~(3.10), Rem.~3.5 in~\cite{kieburgProductsRandomMatrices2019}]\label{polya1}
        We have
        \begin{align}
            XY\sim\mathrm{P\acute{o}l}_{N}\left [ \omega_{1}\circledast \omega_{2} \right ] \qquad{\rm and}\qquad X^{-1}\sim\mathrm{P\acute{o}l}_{N}\left [ \check{\omega}_{1} \right ].
        \end{align}
        In particular, for the generalized ratio of $X$ and $Y$ we have $X^{-1}Y\sim\mathrm{P\acute{o}l}_{N}\left [ \check{\omega}_{1}\circledast \omega_{2} \right ]$ and for every complex number $z$ where both $\mathcal{M}\left [ \omega_{1} \right ](z)$ and $\mathcal{M}\left [ \omega_{2} \right ](N+1-z)$ exist the following holds
        \begin{align}
            \mathcal{M}\left [ \check{\omega}_{1}\circledast \omega_{2} \right ]\left ( z \right )=\frac{\mathcal{M}\left [ \omega_{1} \right ]\left ( N+1-z \right )\mathcal{M}\left [ \omega_{2} \right ]\left ( z \right )}{\mathcal{M}[\omega_1](N)}.
        \end{align}
\end{proposition}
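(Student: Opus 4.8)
The plan is to reduce the proposition to two structural inputs: bi-unitary invariance on $\mathrm{GL}_N(\C)$ is trivially stable under matrix multiplication and inversion; and, at the level of squared singular values, the product of two independent bi-unitarily invariant matrices corresponds to a matrix-type multiplicative convolution of their densities, which on the Pólya class is exactly the multiplicative convolution $\circledast$ of the weights, while inversion is the map $\omega\mapsto\check\omega$. The product identity at the level of the singular- and eigenvalue jpdfs is the content of Eq.~(3.10) and Rem.~3.5 of \cite{kieburgProductsRandomMatrices2019}, so the genuinely self-contained steps are: (i) checking that $\omega_1\circledast\omega_2$ and $\check\omega_1$ are again admissible Pólya weights; (ii) the bi-unitary bookkeeping; (iii) the inverse case; (iv) reading off the Mellin identity.

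For (i), under the substitution $x=e^{-u}$ one finds $\widetilde{\omega_1\circledast\omega_2}=\widetilde\omega_1*\widetilde\omega_2$, the ordinary convolution on $\R$, while $\widetilde{\check\omega_1}$ is (up to the positive factor $\mathcal{M}[\omega_1](N)^{-1}$) a reflection $u\mapsto-u$ composed with an exponential tilt of $\widetilde\omega_1$. The class $\mathrm{PF}_\infty$ is closed under convolution, under reflection, and under multiplication by $e^{cu}$ --- all immediate from the Schoenberg representation of $\mathrm{PF}_\infty$ functions as bilateral Laplace transforms of reciprocals of entire functions of genus at most one with only real zeros, up to a Gaussian-exponential factor --- so the deformations of the new weights stay in $\mathrm{PF}_\infty\subset\mathrm{PF}_N$. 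The integrability conditions defining $\mathrm{L}^{1,N-1}_{[1,N]}(\R_+)$ are preserved because the Mellin transforms simply multiply, resp.\ get inverted via $z\mapsto N+1-z$, and remain finite on the relevant strip; existence of $\mathcal{M}[\omega_1](N)$ is precisely what guarantees $\check\omega_1\in\mathrm{L}^1(\R_+)$.

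For (ii)--(iv): for fixed $U_1,U_2\in\mathrm{U}(N)$ write $U_1(XY)U_2=(U_1X)(YU_2)$; independence of $X$ and $Y$ together with their individual bi-unitary invariance gives $(U_1X,YU_2)\overset{d}{=}(X,Y)$, hence $U_1XYU_2\overset{d}{=}XY$, and likewise $U_1X^{-1}U_2=(U_2^{-1}XU_1^{-1})^{-1}\overset{d}{=}X^{-1}$. Since a bi-unitarily invariant law on $\mathrm{GL}_N(\C)$ is determined by its squared singular value density (and this in turn fixes the eigenvalue jpdf), it remains only to identify those densities. For $XY$ this is the cited product theorem: on the determinantal form $\Delta_N(\mathbf{x})\det(\Theta^{(j-1)}\omega(x_i))$ the convolution is diagonalized by the Mellin transform, the $j$-th channel carrying $z^{j-1}\mathcal{M}[\omega](z)$ by \eqref{id:mellintheta}, so the product acquires channels $z^{j-1}\mathcal{M}[\omega_1](z)\mathcal{M}[\omega_2](z)=z^{j-1}\mathcal{M}[\omega_1\circledast\omega_2](z)$, i.e.\ it is $f^{(N)}_{\mathrm{SV}}[\omega_1\circledast\omega_2]$, the normalization matching through \eqref{normalization1}. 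For $X^{-1}$, its squared singular values are the reciprocals of those of $X$; the substitution $t\mapsto 1/t$ in \eqref{mellintransform1} gives at once $\mathcal{M}[\check\omega_1](z)=\mathcal{M}[\omega_1](N+1-z)/\mathcal{M}[\omega_1](N)$, and pushing $x_i\mapsto 1/x_i$ through $f^{(N)}_{\mathrm{SV}}[\omega_1]$ --- Jacobian $\prod_i x_i^{-2}$, the Vandermonde identity $\Delta_N(1/\mathbf{x})=(-1)^{\binom{N}{2}}\Delta_N(\mathbf{x})\prod_i x_i^{-(N-1)}$, and column operations to absorb the factors $\mathcal{M}[\omega_1](N)^{-1}x_i^{-(N+1)}$ into the columns --- reproduces the Pólya form with weight $\check\omega_1$, the normalization again following from \eqref{normalization1}. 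Combining, $X^{-1}Y$ is the product of the independent Pólya ensembles $\mathrm{P\acute{o}l}_N[\check\omega_1]$ and $\mathrm{P\acute{o}l}_N[\omega_2]$, hence $X^{-1}Y\sim\mathrm{P\acute{o}l}_N[\check\omega_1\circledast\omega_2]$, and $\mathcal{M}[\check\omega_1\circledast\omega_2](z)=\mathcal{M}[\check\omega_1](z)\mathcal{M}[\omega_2](z)$ gives the displayed identity wherever both Mellin transforms converge. The real obstacle is the product theorem itself --- closure of the Pólya class under matrix products with the multiplicatively convolved weight --- which rests on the biorthogonal/polynomial-ensemble machinery of \cite{kieburgProductsRandomMatrices2019,forsterPolynomialEnsemblesPolya2021} and is legitimately invoked here; the remaining content is the bookkeeping above.
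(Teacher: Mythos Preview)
The paper does not supply its own proof of this proposition: it is stated as a result quoted from \cite{kieburgProductsRandomMatrices2019} (Eq.~(3.10), Rem.~3.5) and immediately used without further argument. Your proposal therefore does more than the paper itself, giving a sketch of why the cited result holds.

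Your sketch is sound. You correctly isolate the nontrivial input --- closure of the Pólya class under matrix products with convolved weight --- and defer it to the cited machinery, while supplying the elementary pieces yourself: stability of bi-unitary invariance under product and inversion, closure of $\mathrm{PF}_\infty$ under convolution/reflection/exponential tilt via Schoenberg, the direct Mellin computation $\mathcal{M}[\check\omega_1](z)=\mathcal{M}[\omega_1](N+1-z)/\mathcal{M}[\omega_1](N)$ from the substitution $t\mapsto 1/t$, and the pushforward of $f^{(N)}_{\mathrm{SV}}[\omega_1]$ under $x_i\mapsto 1/x_i$. These are all correct. Since the paper treats the proposition as a black-box citation, there is no meaningful divergence in approach to report; you have simply unpacked what the paper leaves packed.
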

The previous proposition is crucial in the ensuing analysis and allows us to explicitly reduce the 2-matrix model to a 1-matrix model. For a given P\'olya weight $\omega$ we define the self-inverse-convoluted P\'olya weight by
\begin{equation}\label{omegahat}
    \widehat{\omega}(z)= \check{\omega}\circledast \omega\left(z\right).
\end{equation}
The latter may be written explicitly as follows
\begin{equation}\label{combi}
\widehat\omega(x)=\int_0^\infty \omega(y)\left(\frac{y}{x}\right)^{N+1}\frac{\omega\left(y/x\right)}{\mathcal{M}[\omega](N)}  \frac{\mathrm{d}y}{y}\overset{y\to xy}{=}\int_0^\infty \frac{\omega(xy)\omega(y)}{\mathcal{M}[\omega](N)} y^{N} \mathrm{d}y,
\end{equation}
it should be noted that $\widehat{\omega}$ implicitly depends on $N$.

A self-inverse-convoluted P\'olya weight has the following symmetry
\begin{equation}\label{symmetry}
\widehat\omega(x^{-1})=\int_0^\infty  \frac{\omega(y/x)\omega(y)}{\mathcal{M}[\omega](N)} y^{N}\mathrm{d}y\underset{y=tx}{=}x^{N+1}\int_0^\infty \frac{\omega(t)\omega(xt)}{\mathcal{M}[\omega](N)} t^{N} \mathrm{d}t=x^{N+1}\widehat\omega(x).
\end{equation}
Consequently, the Mellin transform of a self-inverse-convoluted P\'olya weight, together with its incomplete analogue, satisfies the following reflection symmetries:
    \begin{align}
    \mathcal{M}\left[\widehat{\omega}\right]\left(z\right)=&\frac{\mathcal{M}\left [ \omega \right ]\left ( N-z+1 \right )\mathcal{M}\left [ \omega \right ]\left ( z \right )}{\mathcal{M}[\omega](N)}=\mathcal{M}\left[\widehat{\omega}\right]\left(N-z+1\right),\label{Mellin.reflect}
        \\
        \mathcal{M}\left[\widehat{\omega}\right]\left(z,A\right)=&\mathcal{M}\left[\widehat{\omega}\right]\left(z\right)-\int_{A}^\infty \widehat{\omega}(x) x^{z-1}\mathrm{d}x\nonumber\\
        \overset{x\to x^{-1}}{=}&\mathcal{M}\left[\widehat{\omega}\right]\left(z\right)-\mathcal{M}\left[\widehat{\omega}\right]\left(N-z+1,A^{-1}\right).\label{Mellin.icomplete.reflect}
    \end{align}

\begin{example}\label{ex:muttalib-borodin}
A central non-Gaussian example is the Muttalib-Borodin ensemble of Laguerre type~\cite{borodinBiorthogonalEnsembles1998} closely connected to quantum transport models via the Dorokhov-Mello-Pereyra-Kumar (\textbf{DMPK}) equation~\cite{dorokhovTransmissionCoefficientLocalization1982, melloMacroscopicApproachMultichannel1988}. The corresponding choice of the P\'olya weight, after re-labelling to avoid confusion with our own notation, is
\begin{align}\label{def:MB-weight}
    \omega_{\mathrm{MB}}\left(t\right)=\frac{\alpha\gamma}{\Gamma((\delta+1)/\gamma)}(\alpha t)^{\delta}e^{-(\alpha t)^{\gamma}},\,\text{where }\,\alpha>0,\,\gamma>0,\ \delta>-1
\end{align}
Special choices of parameters recover familiar ensembles: setting $\alpha=1$, $\delta=0$ and $\gamma=1$ yields the GinUE while $\alpha=1$, $\delta\neq 0$ and $\gamma=1$ corresponds to the induced GinUE (see for example ~\cite{fischmannInducedGinibreEnsemble2012}).
The Muttalib-Borodin ensemble of Laguerre type has been studied extensively in \cite[Sec.~4]{borodinBiorthogonalEnsembles1998} and the corresponding  jpdf of its complex eigenvalues is
\begin{align}\label{muttalibborodinev}
   f^{\left ( N \right )}_{\mathrm{EV}}\left [ \omega_{\mathrm{MB}} \right ]\left(z_{1},\ldots,z_{N}\right)\propto |\Delta_{N}\left( z_{1},\ldots,z_{N} \right)|^2\prod\limits_{j=1}^{N}|z_{j}|^{2\delta}e^{-\alpha |z_{j}|^{2\gamma}},
\end{align}
where the proportionality constant can be computed, after multiplication by $\alpha^{N(\delta+1)}$, using~\eqref{normalization1} and the Barnes G-function.

The self-inverse-convoluted P\'olya weight defined at \eqref{omegahat} associated with the generalized ratio $X^{-1}Y$ where $X$ and $Y$ are two random matrices of size $N$ drawn from the Muttalib-Borodin ensemble of Laguerre type, whose P\'olya weight $\omega_{\rm{MB}}$ was previously defined in~\eqref{def:MB-weight}, is given by
\begin{align}\label{id:generalizedratioweight}
    \widehat{\omega}_{\mathrm{MB}}\left(t\right)=\frac{\gamma\Gamma\left( (N + 2\delta + 1)/\gamma \right)}{\Gamma((\delta+N)/\gamma)\Gamma((\delta+1)/\gamma)} \, t^{\delta} \left(1 + t^\gamma \right)^{ -(N+2\delta+1)/\gamma}
\end{align}
and the corresponding Mellin transform is given by
\begin{align}\label{Mellin-MBI}
    \mathcal{M}[\widehat{\omega}_{\mathrm{MB}}](z)=\frac{\Gamma((\delta+z)/\gamma)\Gamma((N+\delta+1-z)/\gamma)}{\Gamma((\delta+1)/\gamma)\Gamma((N+\delta)/\gamma)}.
\end{align}
We note that the width of the original distribution determined by $\alpha$ completely drops out which is not surprising since the corresponding product random matrix $X^{-1}Y$ is scale-invariant when $X$ and $Y$ are drawn from the same ensemble. This observation will certainly hold true for general P\'olya ensembles beyond Muttalib-Borodin ensembles.
\end{example}

We conclude this section by considering the $2$-matrix model~\eqref{def:2matrixmodel} with $K_1$ and $K_2$ independently drawn from the same P\'olya ensemble associated to the weight $\omega$. In order to safely consider its generalized partition function, we present a short lemma to ensure the existence thereof.

\begin{lemma}\label{lemma:existence}
    Let $\omega$ be a P\'olya weight such that $\widetilde{\omega}$ is a P\'olya frequency function of order $2$, for any $j\in[0,2N-2]$ 
    \begin{equation}
        \int_{\C}|z|^{j}\omega(|z|^2)\mathrm{d}^2z<\infty.
    \end{equation}
    Let $m$ be a strictly positive integer, $(\alpha_{\ell})$ be $m$ complex numbers and $(\beta_{\ell})$ be $m$ pairwise distinct  non-zero complex numbers.
    Then, the following holds
    \begin{equation}
        \int_{\C^{N}}\left|\Delta_{N}(\mathbf{z})\right|^{2}\prod\limits_{j=1}^{N}\left(\omega(\left|z_{j}\right|^{2})\prod\limits_{\ell=1}^{m}\left|\frac{z_{j}-\alpha_{\ell}}{z_{j}-\beta_{\ell}}\right|\right)\mathrm{d}^{2}\mathbf{z}<+\infty.
    \end{equation}
\end{lemma}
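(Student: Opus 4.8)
The plan is to reduce everything to a one-variable estimate and then control the Vandermonde factor by absorbing each off-diagonal pair into the weight. First I would observe that the integrand is nonnegative, so we may freely split $|\Delta_N(\mathbf z)|^2$ and distribute it. The standard trick is to expand $|\Delta_N(\mathbf z)|^2 = \det(z_i^{j-1})\overline{\det(z_i^{k-1})}$ and then permute: by the usual Andr\'eief/Heine-type symmetrisation the multiple integral factorises, up to combinatorial constants, into a determinant (or at least is dominated by a sum of products) of single integrals of the form $\int_{\C} z^{a}\bar z^{b}\,\omega(|z|^2)\prod_{\ell=1}^m\bigl|(z-\alpha_\ell)/(z-\beta_\ell)\bigr|\,\mathrm d^2 z$ with $0\le a,b\le N-1$. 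Hence it suffices to prove that each such single integral is finite, for all exponents in $[0,2N-2]$ after combining $a+b$.

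Second I would do the one-variable bound. Write the integral in polar coordinates and split $\C$ into three regions: a small disc around each pole $\beta_\ell$, the far field $|z|>R$, and the remaining bounded region. Near each $\beta_\ell$ the singularity is $|z-\beta_\ell|^{-1}$, which is locally integrable in two dimensions (since $\int_{|u|<\varepsilon}|u|^{-1}\mathrm d^2u<\infty$), and all other factors are bounded there because the $\beta_\ell$ are distinct and nonzero; so that contribution is finite. On the bounded complement of these discs the integrand is continuous hence integrable. For the far field, note $|(z-\alpha_\ell)/(z-\beta_\ell)|\to 1$, so this factor is bounded by a constant for $|z|$ large; thus the tail is controlled by $\int_{|z|>R}|z|^{j}\omega(|z|^2)\,\mathrm d^2z$, which is exactly the hypothesis $\int_\C|z|^j\omega(|z|^2)\,\mathrm d^2z<\infty$ (itself justified by the stated exponential decay of logarithmic deformations of $\mathrm{PF}_2$ P\'olya weights, via the representation \eqref{id:polyarep} and the cited decay lemma). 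Here I should be a little careful that $j$ ranges up to $2N-2$: the largest monomial degree appearing is $(N-1)+(N-1)=2N-2$, so the hypothesis is tailored to exactly this.

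The step I expect to be the main obstacle is making the symmetrisation in the first paragraph fully rigorous \emph{before} integrability is known --- one cannot a priori apply Andr\'eief or Fubini to an integrand that has not yet been shown to be absolutely integrable. The clean way around this is to work from the start with the \emph{modulus} of every term: bound $|\Delta_N(\mathbf z)|^2 \le \bigl(\sum_{\sigma}\prod_i |z_i|^{\sigma(i)-1}\bigr)\bigl(\sum_{\tau}\prod_i|z_i|^{\tau(i)-1}\bigr)$ and expand, so the full integrand is dominated by a finite nonnegative sum of products $\prod_i |z_i|^{c_i}\omega(|z_i|^2)\prod_\ell|(z_i-\alpha_\ell)/(z_i-\beta_\ell)|$ with $c_i\le 2N-2$; Tonelli then applies to each nonnegative product and factorises it into $N$ copies of the one-variable integral already shown finite. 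A secondary point to handle is confirming the first displayed inequality of the lemma (finiteness of $\int_\C|z|^j\omega(|z|^2)\,\mathrm d^2z$), which follows from $\int_\C|z|^j\omega(|z|^2)\,\mathrm d^2z = \pi\int_0^\infty r^{j/2}\omega(r)\,\mathrm dr$ up to constants and the exponential decay of $\widehat{\omega}$-type weights near infinity together with the local integrability of $\omega$ at $0$ coming from $\omega\in\mathrm L^1(\R_+)$ after the change of variables.
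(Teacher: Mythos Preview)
Your proof is correct and follows essentially the same route as the paper: bound the Vandermonde so that the $N$-fold integral factorises into one-variable integrals, then control each by splitting $\C$ into small discs around the $\beta_\ell$ (where the $|z-\beta_\ell|^{-1}$ singularity is locally integrable in two dimensions) and the complement (where the rational factor is uniformly bounded and the moment hypothesis $\int_\C|z|^j\omega(|z|^2)\,\mathrm d^2z<\infty$ applies). The only cosmetic difference is that the paper uses the cruder but immediately product-form bound $|\Delta_N(\mathbf z)|^2\le\prod_j(1+|z_j|)^{2(N-1)}$ rather than your permutation expansion; both give exponents up to $2N-2$ and lead to the same one-variable problem.
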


\begin{proof}
    Making use of the triangular inequality, we have the following rough bound
    \begin{equation}
        \left|\Delta_{N}(\mathbf{z})\right|^{2}\leq \prod\limits_{j=1}^{N}(1+\left|z_{j}\right|)^{2(N-1)}.
    \end{equation}
    Let us fix $r>0$ small enough so that all closed discs $\overline{\mathrm{D}}(\beta_{j},r)\subset\C$ are pairwise disjoint and do not intersect with the origin. We then set
    \begin{equation}
        \Omega:=\left(\C\setminus \bigcup\limits_{j=1}^{m}\overline{\mathrm{D}}(\beta_{j},r)\right)^{N}.
    \end{equation}
    For $\mathbf{z}\in\Omega$, each coordinate satisfies $\left|z_{j}-\beta_{\ell}\right|\geq r$ for all $1\leq \ell\leq m$. Therefore, uniformly in $z_{j}$ where $j$ is fixed, we have
    \begin{equation}
        \prod\limits_{\ell=1}^{m}\left|\frac{z_{j}-\alpha_{\ell}}{z_{j}-\beta_{\ell}}\right|=\prod\limits_{\ell=1}^{m}\left|1+\frac{\beta_{\ell}-\alpha_{\ell}}{z_{j}-\beta_{\ell}}\right|\leq \prod\limits_{\ell=1}^{m}\left(1+\frac{|\beta_{\ell}-\alpha_{\ell}|}{r}\right).
    \end{equation}
    By assumption $\int_{\C}|z|^{j}\omega(|z|^2)\mathrm{d}^2z<\infty$ for any $j\in[0,2N-2]$, the integral 
    \begin{equation}
    \begin{split}
        &\int_{\Omega}\left|\Delta_{N}(\mathbf{z})\right|^{2}\prod\limits_{j=1}^{N}\left(\omega(\left|z_{j}\right|^{2})\prod\limits_{\ell=1}^{m}\left|\frac{z_{j}-\alpha_{\ell}}{z_{j}-\beta_{\ell}}\right|\right)\mathrm{d}^{2}\mathbf{z}
        \\
        \leq&\left(\int_{\C\setminus \bigcup\limits_{j=1}^{m}\overline{\mathrm{D}}(\beta_{j},r)}(1+\left|z\right|)^{2(N-1)}\prod\limits_{\ell=1}^{m}\left(1+\frac{|\beta_{\ell}-\alpha_{\ell}|}{r}\right)\omega(|z|^2)\mathrm{d}^2z\right)^N<\infty
    \end{split}
    \end{equation}
    exists on $\Omega$.

    On the complement $\Omega^{\complement}$, there exists an index pair $(j,\ell)$ such that $\left|z_{j}-\beta_{\ell}\right|< r$. We work inside the compact closed disc 
    \begin{equation}
        \overline{\mathrm{D}}(\beta_{\ell},r)=\left\{ z\in\mathbb{C}:\,\left| z-\beta_{\ell} \right|\leq r \right\}.
    \end{equation}
    Because $\omega$ is a P\'olya weight, it is positive, continuous and integrable on $\mathbb{R}_{+}$. In particular, any singularity must arise at a boundary point say $t_{0}\in\mathbb{R}_{+}$: there exists $\alpha\in[0,1[$ and $c_{1}>0$ such that, on a neighbourhood of $t_{0}$
    \begin{equation*}
        \omega(t)\leq c_{1}\left|t-t_{0}\right|^{-\alpha}
    \end{equation*}
    If $\left|\beta_{\ell}\right|^{2}\neq t_{0}$, we simply have $\alpha=0$ in what follows. Assume there exist an index $\ell$ such that $\left|\beta_{\ell}\right|^{2}= t_{0}$. Because $\beta_{\ell}$ is non-zero and the function $z\mapsto \left|z\right|^{2}$ has a non-vanishing gradient at $z=\beta_{\ell}$, for $r>0$ sufficiently small there exist constants $c_{2},c_{3}>0$ such that
    \begin{equation*}
        c_{2}\left|z-\beta_{\ell}\right|\leq \left|\left|z\right|^{2}-\left|\beta_{\ell}\right|^{2}\right| \leq c_{3}\left|z-\beta_{\ell}\right|.
    \end{equation*}
    Consequently, on $\overline{\mathrm{D}}(\beta_{\ell},r)$ there exists $c_{4}>0$ such that
    \begin{equation*}
        \omega(\left|z\right|^{2})\leq c_{4}\left|z-\beta_{\ell}\right|^{-\alpha}.
    \end{equation*}
    All factors in the integrand that remain away from their own singularities are bounded on this disc. In particular,
    \begin{itemize}
        \item all denominators $\left|z_{j}-\beta_{k}\right|$ with $k\neq \ell$ stay uniformly bounded away from zero since the discs are chosen disjoints,
        \item the Vandermonde term is bounded by
        \begin{equation*}
            \left|\Delta_{N}(\mathbf{z})\right|^{2}\leq \prod_{j=1}^{N}(1+\left|z_{j}\right|)^{2(N-1)}.
        \end{equation*}
    \end{itemize}
    It follows that, for $z_{j}\in \overline{\mathrm{D}}(\beta_{\ell},r)$, the factor
    \begin{equation*}
        (1+\left|z_{j}\right|)^{2(N-1)}\omega(\left|z_{j}\right|^{2})\left|z_{j}-\alpha_{\ell}\right|\prod\limits_{k\neq\ell}^{}\left|\frac{z_{j}-\alpha_{k}}{z_{j}-\beta_{k}}\right|
    \end{equation*}
    is bounded, up to a constant, by $\left|z_{j}-\beta_{\ell}\right|^{-\alpha}$. Hence, there exists $c_{5}>0$ such that the full integrand on $\overline{\mathrm{D}}(\beta_{\ell},r)$ is dominated by
    \begin{equation*}
        c_{5}\left|z_{j}-\beta_{\ell}\right|^{-(1+\alpha)}.
    \end{equation*}
    In polar coordinates around $\beta_{\ell}$, that is $z=\beta_{\ell}+\rho e^{i\theta}$, on this closed disc we have
    \begin{equation*}
        \int_{\overline{\mathrm{D}}(\beta_{\ell},r)}\left|z_{j}-\beta_{\ell}\right|^{-(1+\alpha)}\mathrm{d}^{2}z=\int_{0}^{2\pi}\mathrm{d}\theta\int_{0}^{r}\rho^{-(1+\alpha)}\rho\mathrm{d}\rho=2\pi\int_{0}^{r}\rho^{-\alpha}\mathrm{d}\rho <\infty,
    \end{equation*}
    since $\alpha<1$. Therefore, the integral over $\Omega^{\complement}$ is finite which allows us to conclude.
    
\end{proof}

We now define the generalized version of the partition function of our model \eqref{partitionfunction1}, which has been extended to allow the capture of possible real and imaginary parts of the winding number density in the correlation functions
\begin{align}\label{partitionfunction2}
    \mathscr{Z}^{\left(N\right)}_{m_{1},m_{2}}\left(\textbf{p},\textbf{q},\widetilde{\textbf{p}},\widetilde{\textbf{q}}\right)=\mathbb{E}\Bigg(\prod\limits_{j=1}^{m_{1}}\frac{\det\left ( K\left ( p_{j} \right ) \right )}{\det\left ( K\left ( q_{j} \right ) \right )}\prod\limits_{\ell=1}^{m_{2}}\frac{\overline{\det\left ( K\left ( \widetilde{p}_{\ell} \right ) \right )}}{\overline{\det\left ( K\left ( \widetilde{q}_{\ell} \right ) \right )}}\Bigg),
\end{align}
where $\mathbf{p},\mathbf{q},\widetilde{\mathbf{p}}$ and $\widetilde{\mathbf{q}}$ are four collections of points on the unit circle $\Sp$ such that 
\begin{align}\label{id:pointsseparation}
    \text{for }\,1\leq j<\ell\leq m_{1}: a(q_j)b(q_{\ell})\neq a(q_{\ell})b(q_j)\nonumber,
    \\
    \text{for }\,1\leq j<\ell\leq m_{2}: a(\widetilde{q}_j)b(\widetilde{q}_{\ell})\neq a(\widetilde{q}_{\ell})b(\widetilde{q}_j),
    \\
    \text{for }\,1\leq j\leq m_{1},\,1\leq \ell\leq m_{2}: a(q_j)b(\widetilde{q}_{\ell})\neq a(\widetilde{q}_{\ell})b(q_j)\nonumber,
\end{align}
and we recall that $K$ as in \eqref{def:2matrixmodel} is an additive 2-matrix model where the two source matrices $K_{1}$ and $K_{2}$ are independently drawn from two P\'olya ensembles with respective weights $\omega_{1}$ and $\omega_{2}$. Observing that for any pair of points on $\Sp$, say $p$ and $q$, one has
\begin{equation}
    \mathbb{E}\Bigg(\frac{\det(K(p))}{\det(K(q))}\Bigg)=\Bigg(\frac{b(p)}{b(q)}\Bigg)^{N}\mathbb{E}\Bigg(\frac{\det(\kappa(p)I_{N}+K_{1}^{-1}K_{2})}{\det(\kappa(q)I_{N}+K_{1}^{-1}K_{2})}\Bigg),
\end{equation}
where $\kappa$ has been introduced at \eqref{functions1}. By construction, the distribution of the random matrix $K_{1}^{-1}K_{2}$ is that of a P\'olya ensemble, this time with a P\'olya weight given by $\check{\omega}_{1}\circledast\omega_{2}$ so we may rewrite the previous generalized partition function as
\begin{equation}\label{partitionfunction3}
\begin{split}
    &\mathscr{Z}^{\left(N\right)}_{m_{1},m_{2}}\left(\textbf{p},\textbf{q},\widetilde{\textbf{p}},\widetilde{\textbf{q}}\right)=\Bigg( \prod\limits_{j=1}^{m_{1}}\frac{b\left ( p_{j} \right )}{b\left ( q_{j} \right )}\prod\limits_{\ell=1}^{m_{2}}\frac{\overline{b\left ( \widetilde{p}_{\ell} \right )}}{\overline{b\left ( \widetilde{q}_{\ell} \right )}} \Bigg)^{N} 
    \\
    \times&\mathbb{E}\Bigg(\prod_{j=1}^{m_{1}}\frac{\det\left ( \kappa\left ( p_{j} \right )I_{N}+K_{1}^{-1}K_{2} \right )}{\det\left ( \kappa\left ( q_{j} \right )I_{N}+K_{1}^{-1}K_{2} \right )}\prod_{\ell=1}^{m_{2}}\frac{\det\left ( \overline{\kappa\left ( \widetilde{p}_{\ell} \right )}I_{N}+\overline{K_{1}^{-1}K_{2}  } \right )}{\det\left ( \overline{\kappa\left ( \widetilde{q}_{\ell} \right )}I_{N}+\overline{K_{1}^{-1}K_{2}} \right )}\Bigg).
\end{split}
\end{equation}
Making use of the change-of-variables formula for expectations, one may then apply \autoref{lemma:existence} with 
\begin{align*}
    \alpha_{j}=-\kappa(p_{j})\,\text{ if }\,1\leq j\leq m_{1},\,\text{ and }\,\alpha_{j}=-\overline{\kappa(\widetilde{p}_{j-m_{1}})}\,\text{ if }\,m_{1}+1\leq j\leq m_{1}+m_{2},
    \\
    \beta_{j}=-\kappa(q_{j})\,\text{ if }\,1\leq j\leq m_{1},\,\text{ and }\,\beta_{j}=-\overline{\kappa(\widetilde{q}_{j-m_{1}})}\,\text{ if }\,m_{1}+1\leq j\leq m_{1}+m_{2},
\end{align*}
the set of conditions \eqref{id:pointsseparation} ensuring that the $\beta_{j}$ are pairwise distinct and $\check{\omega}_{1}\circledast\omega_{2}$ playing the role of $\omega$ to safely ensure the existence of the generalized partition function \eqref{partitionfunction2}. The case of vanishing $a(p)$ or $b(p)$ will be understood in terms of analytic continuation to these points if the P\'olya weight allows this. Indeed, we consider a class for which this continuation is possible.

One naturally recovers~\eqref{partitionfunction1} by setting $m_{2}=0$ with the convention that the product over an empty index set is $\prod_{\emptyset}X=1$.

\subsection{Main results}\label{sec:main-results}

To state our first main result, we define the Cauchy-like kernel
\begin{align}\label{cauchykernel}
    \mathrm{Q}_{m}(\mathbf{p}, \mathbf{q})=\Bigg(\frac{1}{\nu\left ( p_{i} \right )^{\top}\mathrm{J}\nu\left ( q_{j} \right )}\Bigg)_{1\leq i,j\leq m}
\end{align}
whose determinant already appeared in the denominator of \eqref{partitiongaussian} and that satisfies the anti-symmetry relation $\mathrm{Q}_{m}(\mathbf{p}, \mathbf{q})^{\top}=-\mathrm{Q}_{m}(\mathbf{q}, \mathbf{p})$.
We also define the following auxiliary function
\begin{align}\label{ypsilon.def}
    \Upsilon_{N}\left(u,v\right)=\sum\limits_{k=1}^{N}\frac{\mathcal{M}\left [ \widehat{\omega} \right ]\left ( k,\left |v \right |^{2} \right )}{\mathcal{M}\left [ \widehat{\omega} \right ]\left ( k \right )}\Big(\frac{u}{v}\Big)^{k}
\end{align}
which in turn we use to define the following deformed kernel
\begin{align}\label{polyakernel}
    \widetilde{\mathrm{Q}}^{(N)}_{m}[\omega](\mathbf{p}, \mathbf{q}) = \left( 
\frac{\left( b(p_{i})/b(q_{j}) \right)^N}{\nu(p_{i})^\top \mathrm{J} \nu(q_{j})} 
\left[ 1 + \Big(1-\frac{\kappa\left( q_{j} \right)}{\kappa\left( p_{i} \right)}\Big) 
\Upsilon_N\big( \kappa(p_{i}), \kappa(q_{j}) \big) 
\right]
\right)_{1 \leq i,j \leq m},
\end{align}
where $\nu,\,\kappa$ and $J$ are as defined in \eqref{functions1}.
\begin{remark}
    The auxiliary function $\Upsilon_{N}$ satisfies the following symmetry relations
    \begin{align}\label{ypsilon.sym.1}
        \Upsilon_{N}\left( u,v \right)=& \frac{u}{v} \Bigg(\frac{\left( u/v \right)^{N}-1}{u/v-1}-\left( \frac{u}{v} \right)^{N}\Upsilon_{N}\left( u^{-1},v^{-1} \right)\Bigg) &&\text{if $u\neq v$},
        \\
        \Upsilon_{N}\left( u,u \right)=&N-\Upsilon_{N}\left( u^{-1},u^{-1} \right),\label{ypsilon.sym.2}
    \end{align}
which are direct consequences of~\eqref{Mellin.reflect} and~\eqref{Mellin.icomplete.reflect} by reflecting $k\leftrightarrow N-k+1$ in the summation index.
\end{remark}
\begin{remark}
    Expression \eqref{polyakernel} is not manifestly symmetric under the exchange of $a$ and $b$, nevertheless the underlying 2-matrix model \eqref{def:2matrixmodel} is symmetric and this invariance is preserved in a non-obvious way. More precisely, combining \eqref{Mellin.reflect} and \eqref{ypsilon.sym.1} allows to rewrite the deformed kernel $\widetilde{\mathrm{Q}}^{(N)}_{m}$ as a sum of eight terms for which the functions $a$ and $b$ clearly play symmetrical roles. This expression is, however, too complicated to manipulate in further computations, so we deliberately retain the simpler asymmetric form which proves more natural and effective for the subsequent asymptotic analysis.
\end{remark}
Our first main result is the following theorem.
\begin{theorem}\label{theorem1}
The partition function \eqref{partitionfunction1} associated with an additive 2-matrix model where the two source matrices $K_{1}$ and $K_{2}$ are independently drawn from a Pólya ensemble of size $N$ with weight $\omega$ is given by
    \begin{align}
            \mathscr{Z}^{\left(N\right)}_{m}\left ( \mathbf{p},\mathbf{q} \right )=\frac{\det\left ( \widetilde{\mathrm{Q}}^{(N)}_{m}[\omega](\mathbf{p}, \mathbf{q})\right)}{\det\left ( \mathrm{Q}_{m}\left ( \mathbf{p},\mathbf{q} \right ) \right )}.
        \end{align}
\end{theorem}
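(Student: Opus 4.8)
The plan is to follow the reduction indicated in the excerpt: first collapse the two-matrix model to a single P\'olya ensemble via \autoref{polya1}, then evaluate the resulting average of a ratio of characteristic polynomials against the eigenvalue jpdf \eqref{def:EVjpdf}, and finally recognise the answer as the stated determinantal ratio. Concretely, using \eqref{partitionfunction3} with $m_2=0$ and $\omega_1=\omega_2=\omega$, the partition function equals a prefactor $\prod_j (b(p_j)/b(q_j))^N$ times the expectation of $\prod_j \det(\kappa(p_j)I_N+Y)/\det(\kappa(q_j)I_N+Y)$ where $Y=K_1^{-1}K_2\sim\mathrm{P\acute{o}l}_N[\check\omega\circledast\omega]=\mathrm{P\acute{o}l}_N[\widehat\omega]$ (up to the normalisation $\mathcal{M}[\omega](N)$, which is why $\widehat\omega$ is the natural object). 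So the core task is to compute, for a P\'olya ensemble with weight $\widehat\omega$,
\begin{align}\label{eq:coreavg}
    \mathbb{E}\Bigg(\prod_{j=1}^m\frac{\det(x_j I_N + Y)}{\det(y_j I_N + Y)}\Bigg)
    = C_{\mathrm{EV}}^{(N)}[\widehat\omega]\int_{\C^N}|\Delta_N(\mathbf{z})|^2\prod_{k=1}^N\widehat\omega(|z_k|^2)\prod_{j=1}^m\frac{\prod_k(x_j+z_k)}{\prod_k(y_j+z_k)}\,\mathrm{d}^2\mathbf{z},
\end{align}
with $x_j=\kappa(p_j)$, $y_j=\kappa(q_j)$; the integral converges by \autoref{lemma:existence}.

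The standard route is an Andr\'eief/Borodin-type computation. Write $|\Delta_N(\mathbf{z})|^2 = \Delta_N(\mathbf{z})\overline{\Delta_N(\mathbf{z})}$, expand the holomorphic Vandermonde $\Delta_N(\mathbf{z})$ together with the numerator $\prod_{k}\prod_j (x_j+z_k)$ as a single $(N+m)\times(N+m)$ structure, and use the Cauchy-type identity that turns $\Delta_N(\mathbf z)\prod_{j,k}(x_j+z_k)\big/\prod_{j,k}(y_j+z_k)$ into a ratio of determinants (a confluent Cauchy determinant in the $x$'s and $y$'s times a Vandermonde-like block in the $z$'s). After integrating out the $z$-variables against the rotationally invariant weight $\widehat\omega(|z|^2)$ — only the radial moments $\int_\C |z|^{2(a-1)}(x_j+z)(\overline{\text{stuff}})\,\widehat\omega(|z|^2)\mathrm{d}^2z$ survive by angular integration, producing Mellin transforms $\mathcal{M}[\widehat\omega](a)$ and incomplete ones $\mathcal{M}[\widehat\omega](a,|y_j|^{-2})$ or similar — one is left with an $m\times m$ determinant whose entries are geometric-type sums in $(x_i/y_j)^k$ weighted by $\mathcal{M}[\widehat\omega](k,\cdot)/\mathcal{M}[\widehat\omega](k)$. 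This is exactly the function $\Upsilon_N$ of \eqref{ypsilon.def}. Matching the algebra: the $k=0$ or "geometric" piece should assemble into the bare Cauchy kernel $1/(\nu(p_i)^\top \mathrm{J}\,\nu(q_j))$ — using $\nu(p)^\top\mathrm{J}\,\nu(q)=a(p)b(q)-a(q)b(p)$ and $b(p_i)b(q_j)(\kappa(p_i)-\kappa(q_j))=-\nu(p_i)^\top\mathrm{J}\,\nu(q_j)$ — and the remainder should reproduce the bracket $[\,1+(1-\kappa(q_j)/\kappa(p_i))\Upsilon_N(\kappa(p_i),\kappa(q_j))\,]$, with the $(b(p_i)/b(q_j))^N$ factors absorbing the prefactor from \eqref{partitionfunction3}. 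Dividing by $\det \mathrm{Q}_m(\mathbf p,\mathbf q)$ then yields the claimed formula, and one checks the Gaussian special case ($\widehat\omega$ with $\mathcal{M}[\widehat\omega](k)$ constant in the relevant range so $\Upsilon_N$ telescopes) reduces \eqref{polyakernel} to the numerator of \eqref{partitiongaussian}, as a consistency check.

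The main obstacle is the combinatorial bookkeeping in the Cauchy-type expansion: organising the product of numerator and denominator characteristic polynomials together with $\Delta_N$ so that the $z$-integration factorises cleanly, and then correctly identifying which terms carry a \emph{complete} Mellin transform versus an \emph{incomplete} one. The incomplete Mellin transform arises precisely because the denominator factors $1/(y_j+z_k)$ must be expanded as geometric series whose convergence depends on whether $|z_k|\lessgtr|y_j|$, splitting the radial integral at $|z_k|=|y_j|$; tracking this split across all $m$ poles and reassembling it into the closed form $\Upsilon_N$ (with its argument $|v|^2=|\kappa(q_j)|^2$ in the incomplete Mellin transform) is the delicate step. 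A clean way to sidestep part of the difficulty is to first prove the $m=1$ case directly by the above angular-integration argument, then bootstrap to general $m$ using the Andr\'eief identity and multilinearity of the determinant, exactly as the Gaussian case was handled in \cite{hahnWindingNumberStatistics2023}; the symmetry relations \eqref{ypsilon.sym.1}--\eqref{ypsilon.sym.2} provide a useful internal check that the asymmetric form obtained is in fact the correct (symmetric) answer. Finally, the extension to vanishing $a(p)$ or $b(p)$ is handled by analytic continuation, as already noted in the excerpt.
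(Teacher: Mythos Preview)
Your proposal is correct and follows essentially the same route as the paper: reduce to the single P\'olya ensemble $Y\sim\mathrm{P\acute{o}l}_N[\widehat\omega]$ via \autoref{polya1}, then evaluate the ratio-of-characteristic-polynomials average against \eqref{def:EVjpdf} by an Andr\'eief-type determinantal computation, with the incomplete Mellin transforms arising exactly from the angular integral over $1/(\beta-\bar z)$ (the paper phrases this as a residue rather than a geometric-series split, but it is the same mechanism; see \eqref{id:compint1}). The one tactical difference is that the paper does not bootstrap from $m=1$ but instead invokes the ready-made identity \cite[Eq.~(4.35)]{kieburgDerivationDeterminantalStructures2010} to obtain the $m\times m$ determinantal structure \eqref{part.nonnorm} in one stroke, after which only the single kernel $\widetilde{\mathscr{Z}}^{(N)}_{\frac{1/1}{0/0}}$ (your $m=1$ computation) needs to be worked out explicitly.
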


In fact, we will prove a more general form of this result (\autoref{theorem3}) in \autoref{sec1}. We nevertheless state this simpler version here for brevity. From that result, we next derive the asymptotic expansion for the first moment of the winding number~\eqref{windingnumber1}.
 
In the Gaussian case, the following exact expression holds for the 2-matrix model~\eqref{def:2matrixmodel} corresponding to the partition function~\eqref{partitiongaussian}
\begin{align}\label{asymptoticwinding1}
    \mathbb{E}\left(\mathrm{Wind}_{N} \right)=\Bigg(\frac{1}{2\pi i}\oint_{\mathrm{S}^{1}}^{}\frac{\nu\left( p \right)^{\dagger}\nu'\left( p \right)}{\left\| \nu\left( p \right) \right\|^{2}}\mathrm{d}p\Bigg)N.
\end{align}
 
The factor in parentheses corresponds to the \textit{Aharonov-Anandan phase angle} associated with the complex line bundle
\begin{equation}
    \pi:\mathrm{span}_{\C}(\nu)\to \Sp,
\end{equation}
whose fiber at $p$ is the complex line generated by $\nu(p)$.

To make this geometric interpretation explicit, write
\begin{equation}
    r(p):=\left\| \nu(p) \right\|,\quad \psi(p):=\frac{\nu(p)}{\left\| \nu(p) \right\|},
\end{equation}
so that
\begin{equation}
    \frac{\nu^{\dagger}\nu'}{\nu^{\dagger}\nu}=\frac{(r\psi)^{\dagger}(r'\psi+r\psi')}{r^{2}}=\frac{r'}{r}+\psi^{\dagger}\psi'.
\end{equation}
The first term on the right hand side integrates to zero since $r$ is a smooth, positive function on $\Sp$, i.e.,
\begin{equation}
    \oint_{\Sp}\frac{r'(p)}{r(p)}\mathrm{d}p=0.
\end{equation}
Differentiating the identity $\psi^{\dagger}\psi=1$ with respect to $p$ shows that $\psi^{\dagger}\psi'$ is purely imaginary. Hence, there exists a real-valued function $\mathcal{A}$ such that $\psi^{\dagger}\psi'=i\mathcal{A}$ which defines the \textit{Aharonov-Anandan connection}
\begin{equation}
    \mathcal{A}:=i\left<\psi(p),\frac{\mathrm{d}}{\mathrm{d}p}\psi(p)\right>=i\psi(p)^{\dagger}\psi'(p),
\end{equation}
and consequently
\begin{equation}
    \frac{1}{2\pi i}\oint_{\mathrm{S}^{1}}^{}\frac{\nu\left( p \right)^{\dagger}\nu'\left( p \right)}{\left\| \nu\left( p \right) \right\|^{2}}\mathrm{d}p=\frac{1}{2\pi}\oint_{\Sp}\mathcal{A}(p)\mathrm{d}p.
\end{equation}

This expression sheds light on the geometric nature of the coefficient in \eqref{asymptoticwinding1}: once multiplied by $2\pi$ and exponentiated it yields the holonomy (i.e., total phase change) acquired by a normalised section of the line bundle $\pi$ transported once along the unit circle.
In the adiabatic limit, this phase reduces to the \textit{Berry phase}. In the present setting, however, we do not restrict ourselves to adiabatic evolution so it should be interpreted as an Aharonov-Anandan phase.

A thorough treatment of these notions: Aharonov-Anandan and Berry phases, holonomy, vector bundles and their realisations in quantum systems may be found in \cite[Chap.~4-7,12]{BohmGeometricPhaseQuantum2003}. 
 
Moving beyond the Gaussian case, we will choose the P\'olya weight \eqref{def:polyaweight} to be
\begin{align}
    \omega_{\mathrm{MB}}\left(t\right)=\frac{\gamma}{\Gamma((\delta+1)/\gamma)}t^{\delta}e^{-t^{\gamma}},
\end{align}
to get an explicit expression. As aforementioned, it corresponds to the Muttalib-Borodin Ensemble of Laguerre type, see~\eqref{ex:muttalib-borodin}. It should be noted that $\omega$ can be rewritten as \eqref{id:polyarep} with
\begin{align*}
    \psi(x)=e^{-\gamma x}+(\delta-1)x,\,\delta>-1\text{ and }\gamma>0, 
\end{align*}
which is a strictly convex function on $\R$ so that $\omega_{\rm{MB}}$ is indeed in $\rm{PF}_{2}$ and thus a P\'olya weight.

When we introduce the modified vector field
\begin{align}\label{id:nugamma}
    \nu_{\gamma}\left(p\right)=\begin{pmatrix}
        |a\left(p\right)|^{\gamma-1}a(p)
        \\
        |b\left(p\right)|^{\gamma-1}b(p)
    \end{pmatrix},
\end{align}
we find that for the Muttalib-Borodin case the winding number has a very similar form compared to the Gaussian case but has also its differences. Its asymptotic expansion as $N$ goes to infinity is summarised in our second main result.

\begin{theorem}\label{theorem2}
The average winding number associated with the determinantal curve of an additive $2$-matrix model where the two source matrices are independently drawn from a Muttalib-Borodin of Laguerre type ensemble admits, as $N$ goes to infinity, the following asymptotic expansion
    \begin{align}\label{windingasymptotic2}
\begin{split}
    \mathbb{E}\left(\mathrm{Wind}_{N} \right)=&\Bigg(\frac{1}{2\pi i}\oint_{\mathrm{S}^{1}}^{}\frac{\nu_{\gamma}\left( p \right)^{\dagger}\nu_{\gamma}'\left( p \right)}{\left\| \nu_{\gamma}\left( p \right) \right\|^{2}}\mathrm{d}p\Bigg)N\\
    &+\frac{\gamma-1-2\delta}{2}\Bigg(\frac{1}{2\pi i}\oint_{\mathrm{S}^{1}}^{}\frac{\kappa'\left( p \right)}{\kappa\left( p \right)}\frac{| a\left( p \right) |^{2\gamma}-| b\left( p \right) |^{2\gamma}}{| a\left( p \right) |^{2\gamma}+| b\left( p \right) |^{2\gamma}}\mathrm{d}p\Bigg)+o\left(1\right).
\end{split}    
\end{align}
where the coefficient appearing in the leading order term may be rewritten as
\begin{align}\label{windingasymptotic2.b}
    \frac{1}{2\pi i}\oint_{\mathrm{S}^{1}}^{}\frac{\nu_{\gamma}\left( p \right)^{\dagger}\nu_{\gamma}'\left( p \right)}{\left\| \nu_{\gamma}\left( p \right) \right\|^{2}}\mathrm{d}p=\frac{1}{2\pi i}\oint_{\mathrm{S}^{1}}^{}\frac{|a\left( p \right)|^{2\gamma-2} a'\left( p \right)\overline{a\left( p \right)}+|b\left( p \right)|^{2\gamma-2} b'\left( p \right)\overline{b\left( p \right)}}{|a\left( p \right)|^{2\gamma} +|b\left( p \right)|^{2\gamma} }\mathrm{d}p.
\end{align}
\end{theorem}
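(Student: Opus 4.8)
The plan is to evaluate $\mathbb{E}(\mathrm{Wind}_N)$ from \autoref{theorem1} at $m=1$, reduce it to a scalar finite sum, identify that sum with a sum of incomplete Beta functions in the Muttalib--Borodin case, and extract its large-$N$ expansion by Euler--Maclaurin together with a Laplace/calibration argument.

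\emph{Reduction to a finite sum.} For $m=1$ the Cauchy-like kernel \eqref{cauchykernel} is the scalar $\mathrm{Q}_1(p,q)=1/(\nu(p)^\top\mathrm{J}\nu(q))$, so \autoref{theorem1} and \eqref{polyakernel} give $\mathscr{Z}^{(N)}_1(p,q)=(b(p)/b(q))^N\big[1+(1-\kappa(q)/\kappa(p))\,\Upsilon_N(\kappa(p),\kappa(q))\big]$. Applying $\tfrac{\mathrm d}{\mathrm dp}(\cdot)\big|_{q=p}$ as prescribed in \eqref{C1.def}: the term in which the derivative acts on the prefactor $(b(p)/b(q))^N$ and the one in which it acts on the factor $1-\kappa(q)/\kappa(p)$ survive, the remaining term vanishing because $1-\kappa(q)/\kappa(p)=0$ at $q=p$. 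This gives $\Cc_1^{(N)}(p)=N\,b'(p)/b(p)+(\kappa'(p)/\kappa(p))\,\Upsilon_N(\kappa(p),\kappa(p))$, with $\Upsilon_N(\kappa(p),\kappa(p))=\sum_{k=1}^{N}\mathcal{M}[\widehat\omega](k,|\kappa(p)|^2)/\mathcal{M}[\widehat\omega](k)$ by \eqref{ypsilon.def}. Inserting the Muttalib--Borodin weight through \eqref{id:generalizedratioweight}--\eqref{Mellin-MBI} and carrying out the substitutions $t\mapsto t^\gamma\mapsto t^\gamma/(1+t^\gamma)$ in the incomplete Mellin integral, each summand becomes a regularised incomplete Beta function, so $\Upsilon_N(\kappa(p),\kappa(p))=S_N(y(p))$ where $S_N(y):=\sum_{k=1}^{N}I_y\!\big(\tfrac{k+\delta}{\gamma},\tfrac{N+\delta+1-k}{\gamma}\big)$ and $y(p)=|a(p)|^{2\gamma}/(|a(p)|^{2\gamma}+|b(p)|^{2\gamma})$; the reflection $k\leftrightarrow N-k+1$ reproduces the symmetry \eqref{ypsilon.sym.2}. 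Integrating over $\Sp$ along \eqref{windingnumber1} reduces the theorem to understanding $\tfrac{1}{2\pi i}\oint_{\Sp}(\kappa'/\kappa)\,S_N(y(p))\,\mathrm dp$ plus the elementary $N\,\tfrac{1}{2\pi i}\oint_{\Sp}(b'/b)\,\mathrm dp$.

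\emph{Asymptotics of $S_N$.} For $y$ in a compact subset of $(0,1)$ the summand of $S_N(y)$ is an exponentially sharp transition (exponentially close to $1$ for $k\lesssim yN$, to $0$ for $k\gtrsim yN$), so Euler--Maclaurin with exponentially small boundary corrections replaces $S_N(y)$ by $\gamma\int_0^{c}I_y(a,c-a)\,\mathrm da$, where $c=(N+2\delta+1)/\gamma$, up to an explicit $O(1)$ term from the mismatch between the summation range and $[0,c]$. Rescaling $a=ct$ gives $\gamma c\int_0^1 I_y(ct,c(1-t))\,\mathrm dt$, and $\int_0^1 I_y(ct,c(1-t))\,\mathrm dt$ depends only on $c$ and $y$; its expansion $\int_0^1 I_y(ct,c(1-t))\,\mathrm dt=y+(\tfrac12-y)/c+o(1/c)$ follows either from a second-order Laplace analysis of the underlying Beta integral (the leading normal approximation of $I_y$ contributes nothing at this order, so a refined term is genuinely required) or, more economically, from the exact Gaussian identity $\sum_{k=1}^{N}I_y(k,N+1-k)=\mathbb{E}[\mathrm{Bin}(N,y)]=Ny$ (the case $\gamma=1,\delta=0,c=N+1$) used as a calibration. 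Collecting terms, $S_N(y)=Ny-\tfrac{\gamma-1-2\delta}{2}(2y-1)+o(1)$, uniformly on compact subsets of $(0,1)$.

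\emph{Assembling the expansion.} Substituting $S_N(y(p))$ and using $\kappa'/\kappa=a'/a-b'/b$, the $O(N)$ part is $N\big(b'/b+(\kappa'/\kappa)y\big)=N(|a|^{2\gamma-2}\overline a\,a'+|b|^{2\gamma-2}\overline b\,b')/(|a|^{2\gamma}+|b|^{2\gamma})$, which is precisely $N\,\nu_\gamma(p)^\dagger\nu_\gamma'(p)/\|\nu_\gamma(p)\|^2$ for $\nu_\gamma$ as in \eqref{id:nugamma}, while the $O(1)$ part, through $2y-1=(|a|^{2\gamma}-|b|^{2\gamma})/(|a|^{2\gamma}+|b|^{2\gamma})$, yields the second integrand of \eqref{windingasymptotic2}. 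Integrating over $\Sp$ gives \eqref{windingasymptotic2}, and \eqref{windingasymptotic2.b} follows from the pointwise identity $\nu_\gamma^\dagger\nu_\gamma'/\|\nu_\gamma\|^2=\tfrac{\gamma-1}{2\gamma}(\log\|\nu_\gamma\|^2)'+(|a|^{2\gamma-2}\overline a\,a'+|b|^{2\gamma-2}\overline b\,b')/\|\nu_\gamma\|^2$ together with $\oint_{\Sp}(\log\|\nu_\gamma\|^2)'\,\mathrm dp=0$. I expect the main obstacle to be the $O(1)$ correction in the asymptotics of $S_N$: the leading $Ny$ is immediate, but the constant requires the $1/c$ coefficient of $\int_0^1 I_y(ct,c(1-t))\,\mathrm dt$, which the leading-order approximation of $I_y$ does not detect, so one must push Laplace's method to second order or transport the Gaussian calibration to non-integer $c$. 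A secondary difficulty is uniformity near points $p_0$ with $a(p_0)=0$ or $b(p_0)=0$: there $y(p_0)\in\{0,1\}$, the transition of $S_N$ reaches the edge of the summation range and the expansion is no longer uniform, so one must bound the contribution of a shrinking neighbourhood of $p_0$, or first prove the result for $a,b$ non-vanishing on $\Sp$ and extend by the analytic-continuation argument mentioned after \eqref{partitionfunction3}.
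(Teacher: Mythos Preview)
Your reduction to $\Cc_1^{(N)}(p)=N\,b'/b+(\kappa'/\kappa)\,\Upsilon_N(\kappa,\kappa)$ and the identification $\Upsilon_N(\kappa,\kappa)=\sum_k I_y\big((k+\delta)/\gamma,(N+\delta+1-k)/\gamma\big)$ with $y=|a|^{2\gamma}/(|a|^{2\gamma}+|b|^{2\gamma})$ are exactly what the paper does: its $\varphi_N(\tau,A)$ in \eqref{phi.def} \emph{is} that regularised incomplete Beta function after the substitution $t=e^{x/\gamma}$, and its $\tau_0$ is your $y$. The strategies then diverge tactically. The paper does not apply Euler--MacLaurin to the full sum; it first splits $\sum_{k=1}^N=S_-+S_0+S_+$ about a window of width $N^{\varepsilon-1/2}\sqrt{\tau_0(1-\tau_0)}$, dispatches $S_\pm$ by an elementary exponential bound on $F_\tau$, and only in the transition window $S_0$ expands the numerator integral to second Laplace order (producing an $\mathrm{erfc}$ plus a Gaussian with explicit coefficients $C_{1,1},C_{1,3}$) before running Euler--MacLaurin on that finite piece. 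The subleading constant emerges as $\sqrt{\gamma}(C_{1,1}+3C_{1,3})=\tfrac{\gamma-1-2\delta}{2}(2y-1)$, so note the sign: your stated $S_N(y)=Ny-\tfrac{\gamma-1-2\delta}{2}(2y-1)$ should read $+$, consistent with your own final assembly. Your global Euler--MacLaurin plus ``calibrate with the exact $\gamma=1,\delta=0$ identity $\sum_k I_y(k,N{+}1{-}k)=Ny$'' is a genuinely nicer shortcut to the leading term, but as you already flag, it does not by itself pin down the $1/c$ coefficient for non-integer $c$; the paper avoids this by doing the honest second-order Laplace expansion.

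The second point you flag is also where the paper spends real effort. Near a zero $p_0$ of $a$ (or $b$) of order $m$, the paper does not analytically continue but instead (i) symmetrises $\Cc_1^{(N)}(p_0e^{i\delta\varphi})+\Cc_1^{(N)}(p_0e^{-i\delta\varphi})$ to kill the simple pole of $\kappa'/\kappa$, and (ii) runs a separate estimate on an $N$-dependent arc $\{N\tau_0(1-\tau_0)\le N^\eta\}$, bounding each $\varphi_N((2k-1)/2N,A)$ via Stirling on the MB Gamma prefactors and a crude maximum on the truncated Gamma integral, to show the arc contributes $o(1)$ provided $\eta<\min(12/(31m\gamma+12),1/2)$. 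This is what turns the subleading integral in \eqref{windingasymptotic2} into a Cauchy principal value. Your proposal would be complete once you either reproduce this boundary analysis or restrict to non-vanishing $a,b$ as you suggest; the analytic-continuation escape route is not available here because the limiting integrand $(\kappa'/\kappa)(|a|^{2\gamma}-|b|^{2\gamma})/(|a|^{2\gamma}+|b|^{2\gamma})$ is genuinely singular at such $p_0$.
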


\begin{remark}
    The integral appearing in the subleading order correction term must be interpreted in the sense of a Cauchy principal value integral whenever $a$ or $b$ vanishes. Since we assume that both functions have only isolated zeroes on $\Sp$, their logarithmic derivatives possess at most simple poles. Consequently, the Cauchy principal value ensures that these singularities remain integrable.
\end{remark}

It should be noted that $\left\| \nu_{\gamma}\left( p \right) \right\|$ is non-zero for all $p\in\Sp$, since by assumption the parameter functions $a$ and $b$ never vanish simultaneously. The term \eqref{windingasymptotic2.b} remains an Aharonov-Anandan phase angle, now associated with the deformed vector field $\nu_{\gamma}$. The additional constant term is genuinely new and does not occur in the Gaussian case $\gamma=1$, cf.~\eqref{asymptoticwinding1}, indeed it vanishes in the joint limit $\gamma\to 1$ and $\delta\to 0$. Interestingly, the tail behaviour and the repulsion from the origin can compensate each other in the subleading order contribution when the parameters are tuned such that $\gamma-1=2\delta$.

More interestingly, when neither $a$ nor $b$ vanishes on $\Sp$, the subleading order term admits a geometric interpretation as a deformed winding number of the curve traced by $\kappa=a/b$ around the origin.
The deformation is controlled by a weight that measures the local asymmetry between $a$ and $b$. The contribution takes values in $[-1,1]$, vanishes wherever $|a(p)|=|b(p)|$ and approaches unity when $\left|\ln(\kappa(p))\right|$ is large, that is, when one component of the vector $\nu$ dominates in magnitude.

Finally, this interpretation as a weighted winding number around the origin ceases to apply when either $a$ or $b$ vanishes, since $\kappa$ takes values in the projective space $\mathbb{CP}^{1}\cong\mathrm{S}^{2}$ which is homotopically trivial. In that case, a winding number around a given point is no longer well-defined.

\begin{example}
   We conclude by illustrating the previous \autoref{theorem2} with a simple and explicit choice of parameter functions $a$ and $b$,
    \begin{align}
        a\left(p\right)=r_{1}p^n
        \qquad{\rm and}\qquad b\left(p\right)=r_{2}p^m,
    \end{align}
    where $r_{1},r_{2}>0$ and $n,m\in\mathbb{Z}$ are constants. A direct calculation shows that for $N\to\infty$ the following expansion holds
    \begin{align}
        \mathbb{E}\left(\mathrm{Wind}_{N} \right)=\Bigg(\frac{nr_{1}^{2\gamma}+mr_{2}^{2\gamma}}{r_{1}^{2\gamma}+r_{2}^{2\gamma}}\Bigg)N+\frac{\left( \gamma-1-2\delta \right)\left( n-m \right)}{2}\Bigg(\frac{r_{1}^{2\gamma}-r_{2}^{2\gamma}}{r_{1}^{2\gamma}+r_{2}^{2\gamma}}\Bigg)+o\left( 1 \right).
    \end{align}
    In the particular case where $r_{1}=r_{2}$, the average winding number becomes a half-integer which is noteworthy, although this behaviour does not necessarily hold in the general case.
\end{example}
\begin{remark}
    In the previous example it appears that the leading order may vanish while the subleading order term persists, for example setting $r_{1}=2^{1/2\gamma},\,r_{2}=1$ and $2n=-m$ yields an average winding number asymptotically equal to $n(\gamma-1-2\delta)/2$ as $N$ goes to infinity.
\end{remark}

\subsection{The Ginibre random field}\label{sec2.1}

We now analyse the partition function \eqref{partitionfunction1} in the case $m=1$ and its associated 1-point correlation function \eqref{C1.def} beyond the additive $2$-matrix model framework and consider a matrix-valued Gaussian random field over the unit circle $\Sp$, whose entries are independent and identically distributed complex Gaussian random fields. 

A complex Gaussian random field is uniquely determined by its mean, covariance and pseudo-covariance kernels. In the present case the mean function is identically zero and the covariance/pseudo-covariance kernels are given by
\begin{equation}\label{matrixgrf}
    \mathbb{E}\left(K_{i,j}\left( p \right)\overline{K_{k,\ell}}\left( q \right)  \right)=\Cc(p,q)\delta_{i,k}\delta_{j,\ell}\,\text{ and }\,\mathbb{E}\left( K_{i,j}\left( p \right)K_{k,\ell}\left( q \right) \right)=0.
\end{equation}
for all indices $i,j,k,\ell\in\left[ 1,N \right]$ and points $p,q\in\mathrm{S}^{1}$.
The covariance kernel $\Cc$ is Hermitian and positive-definite, in particular:
\begin{equation}
    \Cc(q,p)=\overline{\Cc(p,q)}\,\text{ and }\,\Cc(p,p)\geq 0.
\end{equation}
At first glance the choice of a null pseudo-covariance function might seem restrictive but it actually reflects the independence of the real and imaginary parts of each entry in the definition of the Ginibre Unitary Ensemble. This setting naturally generalizes the construction studied in \cite{hahnWindingNumberStatistics2023,hahnUniversalCorrelations2024}, which may be recovered by specifying
\begin{equation}
    \Cc(p,q)=\nu(q)^{\dagger}\nu(p).
\end{equation}

A random field $K$ as in \eqref{matrixgrf} may be referred to as a \textit{complex Ginibre random field} (of size $N$ with covariance kernel $\Cc$ over $\Sp$). To analyse functionals of this random field, we sample it at finitely many points on the base manifold $\Sp$, say $n\geq 1$. The result joint distribution of the random matrices $K(p_{1}),\ldots,K(p_{n})$ admits a finite-dimensional representation in terms of $n$ independent GinUE-drawn random matrices. This construction is analogous to a Karhunen-Lo\`eve expansion, see for example \cite[Sec.~3.2]{taylorRandomFieldsGeometry2007}.
\begin{lemma}\label{technicallemma1}
    Let $n$ be a strictly positive integer and $\mathbf{p}=\left(p_{1},\cdots,p_{n}\right)$ be a collection of $n$ points on the unit circle $\Sp$. We assume that the covariance function $\Cc$ is twice differentiable on $\mathbb{T}^{2}$.
    
    Then, there exist a unique lower-triangular matrix $\mathrm{L}$ of size $n$  whose entries are twice differentiable complex-valued functions on $\mathbb{T}^{n}$ such that for all indices $i,j$ in $[1,n]$
    \begin{equation}
        \Cc(p_{i},p_{j})=\sum_{k=1}^{\mathrm{min}(i,j)}\mathrm{L}_{i,k}(\mathbf{p})\overline{\mathrm{L}_{j,k}(\mathbf{p})}.
    \end{equation}
    If $G_{1},\ldots,G_{n}$ are $n$ random matrices identically and independently drawn from the ${\rm GinUE}$ of size $N$, then
    \begin{equation}\label{K.Gauss}
        \Big( K\left( p_{1} \right),\cdots,K\left( p_{n} \right) \Big)\overset{\rm d}{=}\left(\sum\limits_{j=1}^{n}\mathrm{L}_{1,j}(\mathbf{p})G_{j},\ldots,\sum\limits_{j=1}^{n}\mathrm{L}_{n,j}(\mathbf{p})G_{j}\right).
    \end{equation}
\end{lemma}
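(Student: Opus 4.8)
The plan is to recognise $\mathrm{L}(\mathbf{p})$ as the Cholesky factor of the sampled covariance matrix and then to verify the distributional identity \eqref{K.Gauss} by matching the defining Gaussian data. Fix $\mathbf{p}=(p_{1},\ldots,p_{n})\in\mathbb{T}^{n}$ and set $\Sigma(\mathbf{p})=\big(\Cc(p_{i},p_{j})\big)_{1\le i,j\le n}$. Since $\Cc(q,p)=\overline{\Cc(p,q)}$, this matrix is Hermitian, and positive-definiteness of the kernel $\Cc$ makes it positive semidefinite, indeed positive definite whenever the coordinates of $\mathbf{p}$ are pairwise distinct. The requested relation, for a lower-triangular $\mathrm{L}$, is precisely the Cholesky factorisation $\Sigma(\mathbf{p})=\mathrm{L}(\mathbf{p})\mathrm{L}(\mathbf{p})^{\dagger}$, because lower-triangularity truncates $(\mathrm{L}\mathrm{L}^{\dagger})_{i,j}=\sum_{k}\mathrm{L}_{i,k}\overline{\mathrm{L}_{j,k}}$ at $k=\min(i,j)$. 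Existence and uniqueness of $\mathrm{L}$, under the standard normalisation that its diagonal entries be real and positive (which is what pins down uniqueness), are the classical facts for Hermitian positive-definite matrices and I would simply quote them.

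The regularity claim, together with its breakdown on a small locus, is the delicate point. The Cholesky factor obeys the explicit recursion $\mathrm{L}_{j,j}=\big(\Cc(p_{j},p_{j})-\sum_{k<j}|\mathrm{L}_{j,k}|^{2}\big)^{1/2}$ and $\mathrm{L}_{i,j}=\mathrm{L}_{j,j}^{-1}\big(\Cc(p_{i},p_{j})-\sum_{k<j}\mathrm{L}_{i,k}\overline{\mathrm{L}_{j,k}}\big)$ for $i>j$. The map $\mathbf{p}\mapsto\Cc(p_{i},p_{j})$ is $C^{2}$ on $\mathbb{T}^{n}$, being the $C^{2}$ kernel composed with a smooth coordinate projection; the quantity under the square root is the $j$-th pivot, equal to a ratio of two consecutive leading principal minors of $\Sigma(\mathbf{p})$, hence strictly positive on the open set where $\Sigma(\mathbf{p})$ is positive definite. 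Since $\sqrt{\,\cdot\,}$ is smooth on $(0,\infty)$, reciprocals of non-vanishing $C^{2}$ functions are $C^{2}$, and sums and products of $C^{2}$ functions are $C^{2}$, an induction on the column index gives $\mathrm{L}_{i,j}\in C^{2}$ on that open set (alternatively one may invoke the implicit function theorem for the submersion $\mathrm{L}\mapsto\mathrm{L}\mathrm{L}^{\dagger}$ from lower-triangular matrices with positive diagonal onto Hermitian positive-definite matrices). The hard part is precisely the degenerate configurations: when two base points coincide — or, for a low-rank kernel such as $\Cc(p,q)=\nu(q)^{\dagger}\nu(p)$, on a larger set — $\Sigma(\mathbf{p})$ ceases to be positive definite, a pivot vanishes, and the square root loses differentiability; there $\mathrm{L}$ should be understood as defined on the open positive-definite locus relevant to the correlation-function computations, or extended across the degenerate set by a separate continuity argument.

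Finally, for the distributional identity let $G_{1},\ldots,G_{n}$ be i.i.d.\ $\mathrm{GinUE}$ matrices of size $N$ and put $\widetilde{K}_{i}=\sum_{j=1}^{n}\mathrm{L}_{i,j}(\mathbf{p})G_{j}$. Each $\widetilde{K}_{i}$ is a fixed linear combination of independent centred complex Gaussian matrices, so $(\widetilde{K}_{1},\ldots,\widetilde{K}_{n})$ is a centred complex Gaussian vector, whose law is determined by its covariance and pseudo-covariance kernels. From $\mathbb{E}\big((G_{j})_{a,b}\overline{(G_{j'})_{a',b'}}\big)=\delta_{j,j'}\delta_{a,a'}\delta_{b,b'}$ and $\mathbb{E}\big((G_{j})_{a,b}(G_{j'})_{a',b'}\big)=0$ one gets $\mathbb{E}\big((\widetilde{K}_{i})_{a,b}\overline{(\widetilde{K}_{i'})_{a',b'}}\big)=\big(\sum_{j}\mathrm{L}_{i,j}\overline{\mathrm{L}_{i',j}}\big)\delta_{a,a'}\delta_{b,b'}=\Cc(p_{i},p_{i'})\delta_{a,a'}\delta_{b,b'}$ by the defining property of $\mathrm{L}$, together with $\mathbb{E}\big((\widetilde{K}_{i})_{a,b}(\widetilde{K}_{i'})_{a',b'}\big)=0$. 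These coincide with the covariance and pseudo-covariance of $\big(K(p_{1}),\ldots,K(p_{n})\big)$ in \eqref{matrixgrf}, so the two vectors are equal in distribution. Note that triangularity of $\mathrm{L}$ plays no role in this last step — any matrix square root of $\Sigma(\mathbf{p})$ would do — and is imposed only to pick out a representative that depends smoothly on $\mathbf{p}$.
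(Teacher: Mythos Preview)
Your proposal is correct and follows essentially the same route as the paper: identify $\mathrm{L}(\mathbf{p})$ as the Cholesky factor of the sampled covariance matrix $\Sigma(\mathbf{p})=(\Cc(p_i,p_j))$, deduce regularity from the explicit Cholesky recursion, and verify \eqref{K.Gauss} by matching the covariance and pseudo-covariance of the Gaussian vector. The paper argues the last step by noting that $\mathrm{L}(\mathbf{p})^{-1}K_{i,j}(\mathbf{p})$ is standard complex Gaussian, whereas you build $\widetilde{K}_i$ directly and compute second moments, but these are two phrasings of the same calculation. Your discussion of the degenerate locus where $\Sigma(\mathbf{p})$ fails to be positive definite is actually more careful than the paper, which simply asserts positive definiteness without qualification.
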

\begin{proof}
    Making use of the entrywise independence of the random field $K$, we start by fixing two indices $i,j$ in $[1,n]$ and defining the random vector
    \begin{equation*}
        K_{i,j}(\mathbf{p})=(K_{i,j}(p_{1}),\ldots,K_{i,j}(p_{n}))^{\top}\in\C^{n}.
    \end{equation*}
    Its covariance matrix is $\Cc_{n}(\mathbf{p})=(\Cc(p_{k},p_{\ell}))_{1\leq k,\ell\leq n}$ which is Hermitian and positive definite. Therefore the Cholesky decomposition ensures the existence of a unique lower-triangular matrix $\mathrm{L}(\mathbf{p})$ with positive diagonal entries such that $\Cc_{n}(\mathbf{p})=\mathrm{L}(\mathbf{p})\mathrm{L}(\mathbf{p})^{\dagger}$.

    Consequently, the random vector $\mathrm{L}(\mathbf{p})^{-1}K_{i,j}(\mathbf{p})$ is formed of independent complex standard Gaussian random variables, from which the stated distributional equality \eqref{K.Gauss} follows.

    The differentiability of $\mathrm{L}_{i,j}$ in $\mathbf{p}$ is inherited from the second-order differentiability of $\Cc$ through the recursive construction of the Cholesky coefficients. Explicitly, for $1\leq j<i\leq n$ we have
    \begin{equation}\label{L.constr}
    \begin{split}
        \mathrm{L}_{1,1}(\mathbf{p})=&\sqrt{\Cc(p_1,p_1)},\\
        \mathrm{L}_{i,i}(\mathbf{p})=&\sqrt{\Cc(p_{i},p_{i})-\sum_{k=1}^{i-1}|\mathrm{L}_{i,k}(\mathbf{p})|^2},\\
        \mathrm{L}_{i,j}(\mathbf{p})=&\mathrm{L}_{i,i}(\mathbf{p})^{-1}\left(\Cc(p_{i},p_{j})-\sum_{k=1}^{j-1}\overline{\mathrm{L}_{i,k}(\mathbf{p})}\mathrm{L}_{j,k}(\mathbf{p})\right),
    \end{split}
    \end{equation}
    coupled to the fact that $\Cc$ being positive-definite: all the principal minors of $\Cc_{n}(\mathbf{p})$ are positive. Therefore the square-roots as well as the inverses are well-defined.
\end{proof}

\begin{proposition}\label{proposition.gin}
Let $K$ be a centered complex Ginibre random field of size $N$ with covariance kernel $\Cc$ and null pseudo-covariance kernel over $\Sp$, and $p,q$ be two points on the unit circle. Then the partition function and the average winding number are given by
\begin{equation}
    \mathscr{Z}^{\left(N\right)}_{1}\left(p,q\right)=\Bigg(\frac{\Cc\left( p,q \right)}{\Cc\left( q,q \right)}\Bigg)^{N}\text{ and }\,\,\mathbb{E}\left( \mathrm{Wind}_{N} \right)=\Bigg(\frac{1}{2\pi i}\oint_{\Sp}^{}\frac{\partial_{1}\Cc\left( p,p \right)}{\Cc\left( p,p \right)}\mathrm{d}p\Bigg)N \label{eq1}.
\end{equation}
\end{proposition}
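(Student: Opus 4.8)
The plan is to reduce the random field at the two relevant points to a finite-dimensional Gaussian model via \autoref{technicallemma1}, and then to compute the resulting expectation by exploiting the rotational invariance of the $\mathrm{GinUE}$. First I would apply \autoref{technicallemma1} with $n=2$ and the points ordered as $(p_1,p_2)=(q,p)$, so that the denominator point comes first in the Cholesky factorisation. For $2\times 2$ matrices the Cholesky data is explicit: $\mathrm{L}_{1,1}=\sqrt{\Cc(q,q)}$, $\mathrm{L}_{2,1}=\Cc(p,q)/\sqrt{\Cc(q,q)}$ and $\mathrm{L}_{2,2}=\sqrt{\Cc(p,p)-|\Cc(p,q)|^2/\Cc(q,q)}$ (here $\Cc(q,q)>0$ is part of the spectral-gap assumption, and $\mathrm{L}_{2,2}>0$ for $p\neq q$ by strict Cauchy--Schwarz for the positive-definite kernel $\Cc$). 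The lemma then gives $(K(q),K(p))\overset{\mathrm d}{=}(\mathrm{L}_{1,1}G_1,\,\mathrm{L}_{2,1}G_1+\mathrm{L}_{2,2}G_2)$ with $G_1,G_2$ independent $\mathrm{GinUE}(N)$ matrices, and since $\det G_1\neq0$ almost surely we can factor it out:
\[
\mathscr{Z}^{(N)}_1(p,q)=\mathbb{E}\Bigg(\frac{\det K(p)}{\det K(q)}\Bigg)=\frac{1}{\mathrm{L}_{1,1}^N}\,\mathbb{E}\Big(\det\big(\mathrm{L}_{2,1}I_N+\mathrm{L}_{2,2}\,G_1^{-1}G_2\big)\Big).
\]

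The heart of the argument is then the identity $\mathbb{E}\big(\det(\alpha I_N+\beta M)\big)=\alpha^N$ for $M=G_1^{-1}G_2$ and any constants $\alpha,\beta$. Writing $\det(\alpha I_N+\beta M)=\sum_{k=0}^N\alpha^{N-k}\beta^k e_k(\lambda(M))$, where $e_k$ is the $k$-th elementary symmetric polynomial of the eigenvalues, one notes that rotational invariance of the $\mathrm{GinUE}$ gives $M\overset{\mathrm d}{=}e^{i\theta}M$ for every $\theta$ (replace $G_2$ by $e^{i\theta}G_2$), hence $\mathbb{E}(e_k(\lambda))=e^{ik\theta}\mathbb{E}(e_k(\lambda))$ and therefore $\mathbb{E}(e_k(\lambda))=0$ for all $1\le k\le N$, leaving only the $k=0$ term. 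Substituting $\alpha=\mathrm{L}_{2,1}$, $\beta=\mathrm{L}_{2,2}$ and using $\mathrm{L}_{2,1}/\mathrm{L}_{1,1}=\Cc(p,q)/\Cc(q,q)$ yields $\mathscr{Z}^{(N)}_1(p,q)=(\Cc(p,q)/\Cc(q,q))^N$. Alternatively, once the Cholesky reduction is in place the sampled model is literally the $\mathrm{GinUE}$ additive $2$-matrix model with $\nu(p_1)=(\mathrm{L}_{2,1},\mathrm{L}_{2,2})^\top$ and $\nu(q_1)=(\mathrm{L}_{1,1},0)^\top$, so \eqref{partitiongaussian} with $m=1$ delivers the same result (the pairing $\nu(p_1)^\top\mathrm{J}\nu(q_1)=-\mathrm{L}_{1,1}\mathrm{L}_{2,2}$ being nonzero for $p\neq q$).

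With the finite-$N$ formula in hand, the winding number follows from the bookkeeping of \eqref{windingnumber1}--\eqref{C1.def}. Differentiating $\mathscr{Z}^{(N)}_1(p,q)=(\Cc(p,q)/\Cc(q,q))^N$ with respect to $p$ and then setting $q=p$ gives $\Cc^{(N)}_1(p)=N\,\partial_1\Cc(p,p)/\Cc(p,p)$: the prefactor $(\Cc(p,q)/\Cc(q,q))^{N-1}$ equals $1$ on the diagonal, and the factor $\Cc(q,q)^{-N}$ is inert since $q$ is held fixed during the $p$-differentiation. Integrating over $\Sp$ as in \eqref{windingnumber1} then produces $\mathbb{E}(\mathrm{Wind}_N)=N\cdot\frac{1}{2\pi i}\oint_{\Sp}\partial_1\Cc(p,p)/\Cc(p,p)\,\mathrm{d}p$, and one checks consistency with the Gaussian additive model \eqref{asymptoticwinding1} by specialising $\Cc(p,q)=\nu(q)^\dagger\nu(p)$.

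The main technical point to be careful about is the integrability that legitimises the rotational-invariance step: the ratio $\det K(p)/\det K(q)$ diverges where $G_1$ has a small singular value, so before concluding $\mathbb{E}(e_k)=0$ one must know $\mathbb{E}\big(|\det(\alpha I_N+\beta G_1^{-1}G_2)|\big)<\infty$. This is exactly the content of \autoref{lemma:existence} applied with $m=1$ and $\omega=\check\omega_1\circledast\omega_2$ the P\'olya weight of the spherical ensemble $G_1^{-1}G_2$; equivalently, one checks directly that $\det(\alpha I_N+\beta\,\cdot\,)$ is a degree-$N$ polynomial integrated against the eigenvalue jpdf of $G_1^{-1}G_2$, whose per-eigenvalue density decays like $(1+|z|^2)^{-(N+1)}$, which beats the bounded-degree Vandermonde. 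A further routine check is that the differentiation and the diagonal limit $q\to p$ in the last step cause no trouble, since $(\Cc(p,q)/\Cc(q,q))^N$ is smooth across the diagonal of $\mathbb{T}^2$.
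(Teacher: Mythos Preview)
Your proposal is correct and follows essentially the same route as the paper: apply \autoref{technicallemma1} at the two points to reduce to $\det(\mathrm{L}_{2,1}I_N+\mathrm{L}_{2,2}G_1^{-1}G_2)$, then use the $\mathrm{U}(1)$-invariance $G_1^{-1}G_2\overset{\mathrm d}{=}e^{i\theta}G_1^{-1}G_2$ to kill all but the scalar term, and finally differentiate and integrate. Your write-up is more detailed (explicit Cholesky entries, the elementary-symmetric-polynomial expansion, the integrability check via \autoref{lemma:existence}), but the underlying argument is the same.
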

\begin{proof}
From \autoref{technicallemma1}, the pair $(K(p),K(q))$ has the same distribution as
\begin{equation*}
    \left(\sqrt{\Cc(q,q)} G_1,\frac{\Cc(p,q)G_1+\sqrt{\Cc(q,q)\Cc(p,p)-|\Cc(p,q)|^2}\,G_2}{\sqrt{\Cc(q,q)}}\right)
\end{equation*}
where $G_{1},G_{2}$ are independently drawn from the GinUE of size $N$. Substituting this representation into the definition of the partition function yields
\begin{align*}
        \mathscr{Z}_1^{\left(N\right)}\left(p,q\right)=\Cc(q,q)^{-N}\mathbb{E}\Big(\det(\Cc(p,q)I_N+\sqrt{\Cc(q,q)\Cc(p,p)-|\Cc(p,q)|^2}\,G_1^{-1}G_2)\Big).
\end{align*}
The distribution of the matrix $Y=G_{1}^{-1}G_{2}$ coincides with that of the complex spherical ensemble. In particular, its law is $\mathrm{U}(1)$-invariant in the following sense: for all $\theta\in[0,2\pi[$, $Y\overset{d}{=}e^{i\theta}Y$. This implies that only the scalar term $\Cc(p,q)$ contributes in the expectation, leading to the first identity \eqref{eq1}. The second identity follows directly by plugging~\eqref{eq1} into~\eqref{windingnumber1} and~\eqref{C1.def}. 
\end{proof}

\begin{remark}
    As mentioned previously, one recovers the Ginibre additive 2-matrix model by setting $C(p,q)=\nu(q)^{\dagger}\nu(p)$. In that case, a short computation shows that
    \begin{equation}
        \frac{\partial_{1}\Cc\left( p,p \right)}{\Cc\left( p,p \right)}=\frac{\nu(p)^{\dagger}\nu'(p)}{\nu(p)^{\dagger}\nu(p)},
    \end{equation}
    which agrees with the identity \eqref{asymptoticwinding1}. The Karhunen-Lo\`eve-type construction used in this subsection is, however, essentially restricted to the case $m=1$: for partition functions involving more than a single pair of points on the unit circle, a distributional identity of the form \eqref{K.Gauss} is no longer available. Furthermore, a comparison with the result~\eqref{windingasymptotic2} for the 2-matrix Muttalib-Borodin ensemble suggests that $\nu_{\gamma}\left( q \right)^{\dagger}\nu_{\gamma}\left( p \right)$ plays the role of an effective covariance kernel $C(p,q)$ beyond the Gaussian setting.
\end{remark}

\section{Partition function and proof of \autoref{theorem1}}\label{sec1}

The proof of \autoref{theorem1} mostly relies on standard complex analysis and determinantal identities. In fact, we extend the scope of our analysis to allow the possible capture of real and imaginary parts of our correlation functions by starting from~\eqref{partitionfunction2} under the conditions specified earlier for the coefficients.

We temporarily assume that for any indices $1\leq j_{1}\leq m_{1}$ and $1\leq j_{2}\leq m_{2}$ we have $b(p_{j_{1}})\neq 0$ and $b(\widetilde{p}_{j_{2}})\neq 0$, which can later be lifted by analytic continuation since the finite-$N$ result behaves regularly at these points.

 Following~\cite{hahnWindingNumberStatistics2023}, the partition function $\mathscr{Z}^{\left(N\right)}_{m_{1},m_{2}}$, see~\eqref{partitionfunction2}, can be recast into the form
\begin{align}\label{part}
    &\mathscr{Z}^{\left(N\right)}_{m_{1},m_{2}}\left(\textbf{p},\textbf{q},\widetilde{\textbf{p}},\widetilde{\textbf{q}}\right)=\Bigg( \prod\limits_{j=1}^{m_{1}}\frac{b\left ( p_{j} \right )}{b\left ( q_{j} \right )}\prod\limits_{\ell=1}^{m_{2}}\frac{\overline{b\left ( \widetilde{p}_{\ell} \right )}}{\overline{b\left ( \widetilde{q}_{\ell} \right )}} \Bigg)^{N} \nonumber
    \\
    \times&\mathbb{E}\Bigg(\prod_{j=1}^{m_{1}}\frac{\det\left ( \kappa\left ( p_{j} \right )I_{N}+K_{1}^{-1}K_{2} \right )}{\det\left ( \kappa\left ( q_{j} \right )I_{N}+K_{1}^{-1}K_{2} \right )}\prod_{\ell=1}^{m_{2}}\frac{\det\left ( \overline{\kappa\left ( \widetilde{p}_{\ell} \right )}I_{N}+\overline{K_{1}^{-1}K_{2}  } \right )}{\det\left ( \overline{\kappa\left ( \widetilde{q}_{\ell} \right )}I_{N}+\overline{K_{1}^{-1}K_{2}} \right )}\Bigg)
\end{align}
It is then possible to exploit the closure of P\'olya ensembles under matrix multiplication and inversion, thereby reducing the 2-matrix model to an effective 1-matrix model. 
We recall that if $K_{1}$ and $K_{2}$ are two random matrices independently drawn from a P\'olya ensemble with weight $\omega$ then the generalized ratio $K_{1}^{-1}K_{2}$ is drawn from a P\'olya ensemble with weight $\widehat{\omega}$ defined at \eqref{omegahat}. Furthermore, the random matrix $K_{1}^{-1}K_{2}$ admits the following jpdf for complex eigenvalues
    \begin{align}
        f^{(N)}_{\mathrm{EV}}\left(\mathbf{z}\right)=C^{\left ( N \right )}_{\mathrm{EV}}\left [ \widehat{\omega} \right ]\left | \Delta_{N}\left ( \mathbf{z} \right ) \right |^{2}\prod\limits_{j=1}^{N} \widehat{\omega}\left ( \left | z_{j} \right |^{2} \right ),
    \end{align}
see~\eqref{def:EVjpdf} and~\eqref{normalization1}.
The computation of the generalized partition function \eqref{partitionfunction2} therefore boils down to
\begin{align}\label{partitionfunction2.b}
    &\mathbb{E}\Bigg(\prod_{j=1}^{m_{1}}\frac{\det\left ( \kappa\left ( p_{j} \right )I_{N}+K_{1}^{-1}K_{2} \right )}{\det\left ( \kappa\left ( q_{j} \right )I_{N}+K_{1}^{-1}K_{2} \right )}\prod_{\ell=1}^{m_{2}}\frac{\det\left ( \overline{\kappa\left ( \widetilde{p}_{\ell} \right )}I_{N}+\overline{\left [K_{1}^{-1}K_{2}  \right ]} \right )}{\det\left ( \overline{\kappa\left ( \widetilde{q}_{\ell} \right )}I_{N}+\overline{\left [K_{1}^{-1}K_{2}  \right ]} \right )}\Bigg) \nonumber
    \\
    =&C^{\left ( N \right )}_{\mathrm{EV}}\left [ \widehat{\omega} \right ]\int_{\C^{N}}^{}\left | \Delta_{N}\left ( \mathbf{z} \right ) \right |^{2}\prod\limits_{k=1}^{N}\widehat{\omega}\left ( \left | z_{k} \right |^{2} \right )\prod\limits_{j=1}^{m_{1}}\Bigg(\frac{\kappa\left ( p_{j} \right )+z_{k} }{\kappa\left ( q_{j} \right ) +z_{k} }\Bigg)\prod\limits_{\ell=1}^{m_{2}}\Bigg(\frac{ \overline{\kappa\left ( \widetilde{p}_{\ell} \right )}+\overline{z_{k}} }{\overline{\kappa\left ( \widetilde{q}_{\ell} \right )} +\overline{z_{k}} }\Bigg)\mathrm{d}^{2}\mathbf{z},
\end{align}
with $\mathrm{d}^{2}\mathbf{z}=\prod\limits_{j=1}^{N}\mathrm{d}\Re\left( z_{j} \right)\mathrm{d}\Im\left( z_{j} \right)$.
Such integrals are known and have been computed in a general setting in~\cite[Eq.~(4.18)]{kieburgDerivationDeterminantalStructures2010}.

To keep the notation simple, we introduce a few more objects. One is a partition function in terms of an integral over the complex eigenvalues without the normalization constant,
\begin{align}\label{squaredVandermonde}
\begin{split}
    &\widetilde{\mathscr{Z}}_{\frac{n_{1}/m_{1}}{n_{2}/m_{2}}}^{\left(N\right)}\left(\textbf{p},\textbf{q},\widetilde{\textbf{p}},\widetilde{\textbf{q}}\right)\\
    =&\int_{\C^{N}}^{}\left | \Delta_{N}\left ( \mathbf{z} \right ) \right |^{2}\prod\limits_{j=1}^{N}\widehat{\omega}\left ( \left | z_{j} \right |^{2} \right )\frac{\prod\limits_{k_{1}=1}^{m_{1}}\Big( -\kappa\left( p_{k_{1}} \right) -z_{j} \Big)}{\prod\limits_{\ell_{1}=1}^{n_{1}}\Big( -\kappa\left( q_{\ell_{1}} \right)-z_{j} \Big)}\frac{\prod\limits_{k_{2}=1}^{m_{2}}\Big( -\overline{\kappa\left( \widetilde{p}_{k_{2}} \right)}-\overline{z_{j}} \Big)}{\prod\limits_{\ell_{2}=1}^{n_{2}}\Big( -\overline{\kappa\left( \widetilde{q}_{\ell_{2}} \right)} -\overline{z_{j}} \Big)}\mathrm{d}^{2}\mathbf{z},
\end{split} 
\end{align}
where $m_{1},m_{2},n_{1},n_{2}$ are four integers such that $n_{1}-m_{1}+N=n_{2}-m_{2}+N\geq 0$. Our goal is to compute this integral in the sub-case $m_{1}=n_{1}$ and $m_{2}=n_{2}$. We, however, define it in greater generality as kernels related to the partition functions $\widetilde{\mathscr{Z}}_{\frac{0/1}{0/1}}$ and $\widetilde{\mathscr{Z}}_{\frac{1/0}{1/0}}$ will appear later on. To that end, we also introduce the following auxiliary matrices which are expressed using these functions,
\begin{align}
    &\mathcal{J}^{\left(N\right)}_{1,1}\left ( \widetilde{\textbf{p}},\textbf{p} \right )=\Bigg(\widetilde{\mathscr{Z}}^{\left ( N \right )}_{\frac{0/1}{0/1}}\left ( \widetilde{p}_{i},p_{j} \right )\Bigg)_{\substack{1\leq i\leq m_{2}\\1\leq j\leq m_{1}}} \label{kernel1},
    \\
    &\mathcal{J}^{\left(N\right)}_{1,2}\left ( \textbf{p},\textbf{q} \right )=\Bigg(\frac{1}{\kappa\left(p_{i}\right)-\kappa\left(q_{j}\right)}\widetilde{\mathscr{Z}}^{\left ( N \right )}_{\frac{1/1}{0/0}}\left ( p_{i},q_{j} \right )\Bigg)_{1\leq i,j\leq m_{1}} \label{kernel2},
    \\
    &\mathcal{J}^{\left(N\right)}_{2,1}\left ( \widetilde{\textbf{p}},\widetilde{\textbf{q}} \right )=\Bigg(\frac{1}{\overline{\kappa\left(\widetilde{p}_{i}\right)}-\overline{\kappa\left(\widetilde{q}_{j}\right)}}\widetilde{\mathscr{Z}}^{\left ( N \right )}_{\frac{0/0}{1/1}}\left ( \widetilde{p}_{i},\widetilde{q}_{j} \right )\Bigg)_{1\leq i,j\leq m_{2}}  \label{kernel3},
    \\
    &\mathcal{J}^{\left(N\right)}_{2,2}\left ( \textbf{q},\widetilde{\textbf{q}} \right )=\Bigg(\widetilde{\mathscr{Z}}^{\left ( N \right )}_{\frac{1/0}{1/0}}\left ( q_{i},\widetilde{q}_{j} \right )\Bigg)_{\substack{1\leq i\leq m_{1}\\1\leq j\leq m_{2}}}  \label{kernel4}.
\end{align}
Then, the application of the identity~\cite[Eq.~(4.35)]{kieburgDerivationDeterminantalStructures2010} to \eqref{squaredVandermonde} yields the compact form
\begin{align}
    &\widetilde{\mathscr{Z}}_{\frac{m_{1}/m_{1}}{m_{2}/m_{2}}}^{\left(N\right)}\left(\textbf{p},\textbf{q},\widetilde{\textbf{p}},\widetilde{\textbf{q}}\right)\nonumber\\
    =&(-1)^{m_1m_2}C_{\mathrm{EV}}^{\left ( N \right )}\left [ \widehat{\omega} \right ] ^{m_1+m_2-1}\frac{\det\left(\begin{array}{c|c}
-N\mathcal{J}^{\left(N\right)}_{1,1}\left ( \widetilde{\textbf{p}},\textbf{p} \right ) & \mathcal{J}^{\left(N\right)}_{2,1}\left ( \widetilde{\textbf{p}},\widetilde{\textbf{q}} \right ) \\ \hline
\mathcal{J}^{\left(N\right)}_{1,2}\left ( \textbf{p},\textbf{q} \right ) & \frac{1}{N+1}\mathcal{J}^{\left(N\right)}_{2,2}\left ( \textbf{q},\widetilde{\textbf{q}} \right )
\end{array}\right)}{\det\left ( \frac{1}{\kappa\left ( p_{i_{1}} \right )-\kappa\left ( q_{j_{1}} \right )} \right )\det\left ( \frac{1}{\overline{\kappa\left ( \widetilde{p}_{i_{2}} \right )}-\overline{\kappa\left ( \widetilde{q}_{j_{2}} \right )}} \right )}.\label{part.nonnorm}
\end{align}
The first determinant in the denominator is of size $m_1\times m_1$ and the second one is of size $m_2\times m_2$.

We are now left with the task of computing explicitly three kernel functions since the entries in $\mathcal{J}_{1,2}$ and $\mathcal{J}_{2,1}$ have the same exact integral structure up to complex conjugation. Each of these kernels can be evaluated explicitly using a combination of determinantal identities of Vandermonde type, the generalized Andr\'eief identity \ref{th:andreiev} and classical tools from complex analysis such as contour integrals and residue calculus. Using the definition~\eqref{ypsilon.def}, the kernels~(\ref{kernel1}-\ref{kernel3}) take the compact forms
\begin{align}
    \widetilde{\mathscr{Z}}^{\left ( N \right )}_{\frac{0/1}{0/1}}\left ( p_{i},\widetilde{p}_{j} \right )=\widetilde{\mathscr{Z}}^{\left ( N \right )}_{\frac{0/1}{0/1}}\left ( \widetilde{p}_{j},p_{i} \right )=&\int_{\C^{N}}^{} \left | \Delta_{N}\left ( \mathbf{z} \right ) \right |^{2}\prod\limits_{\ell=1}^{N}\widehat{\omega}\left ( \left | z_{\ell} \right |^{2} \right )\left ( \kappa\left(p_{i}\right)+z_{\ell} \right )\overline{\left ( \kappa\left(\widetilde{p}_{j}\right)+z_{\ell} \right )}\mathrm{d}^{2}\mathbf{z} \nonumber
    \\
    =&\frac{\mathcal{M}[\widehat{\omega}](N+1)}{C^{\left( N \right)}_{\mathrm{EV}}\left[\widehat{\omega}\right]}\sum\limits_{\ell=0}^{N}\frac{\left ( \kappa\left(p_{i}\right)\overline{\kappa\left(\widetilde{p}_{j}\right)} \right )^{\ell}}{\mathcal{M}\left [ \widehat{\omega} \right ]\left ( \ell+1 \right )},\label{kernel3.a}
    \\
    \widetilde{\mathscr{Z}}^{\left ( N \right )}_{\frac{1/1}{0/0}}\left ( p_{i},q_{j} \right )=&\int_{\C^{N}}^{}\left | \Delta_{N}\left ( \mathbf{z} \right ) \right |^{2}\prod\limits_{\ell=1}^{N}\widehat{\omega}\left ( \left | z_{\ell} \right |^{2} \right )\Bigg(\frac{ \kappa\left(p_{i}\right)+z_{\ell} }{ \kappa\left(q_{j}\right)+z_{\ell} }\Bigg)\mathrm{d}^{2}\mathbf{z}\nonumber
    \\
    =&\frac{1}{C^{\left( N \right)}_{\mathrm{EV}}\left[ \widehat{\omega} \right]}\Big(1+\Big[1-\frac{\kappa\left(q_{j}\right)}{\kappa\left(p_{i}\right)}\Big]\Upsilon_{N}\left( \kappa\left( p_{i} \right),\kappa\left( q_{j} \right) \right)\Big),\label{kernel3.b}
    \\
    \widetilde{\mathscr{Z}}^{\left ( N \right )}_{\frac{1/0}{1/0}}\left ( q_{i},\widetilde{q}_{j} \right )=\widetilde{\mathscr{Z}}^{\left ( N \right )}_{\frac{1/0}{1/0}}\left ( \widetilde{q}_{j} ,q_{i}\right )=&\int_{\C^{N}}^{}\left | \Delta_{N}\left ( \mathbf{z} \right ) \right |^{2}\prod\limits_{\ell=1}^{N}\frac{\widehat{\omega}\left ( \left | z_{\ell} \right |^{2} \right )}{\left ( \kappa\left( q_{i} \right)+z_{\ell} \right )\overline{\left ( \kappa\left( \widetilde{q}_{j} \right)+z_{\ell} \right )}}\mathrm{d}^{2}\mathbf{z}\nonumber
    \\
    =&\frac{1}{\mathcal{M}[\widehat{\omega}](N)C^{\left( N \right)}_{\mathrm{EV}}\left[ \widehat{\omega} \right]}\Bigg[\mathcal{S}[\widehat{\omega}\sigma_{r,R}]\left(\kappa(q_{i})\overline{\kappa(\widetilde{q}_{j})}\right)\\
    &-\sum_{k=1}^{N-1}\frac{\mathcal{M}[\widehat{\omega}](k,\left|\kappa(q_{i})\right|^{2})\mathcal{M}[\widehat{\omega}](k,\left|\kappa(\widetilde{q}_{j})\right|^{2})}{\mathcal{M}[\widehat{\omega}](k)(\kappa(q_{i})\overline{\kappa(\widetilde{q}_{j})})^{k}}\Bigg],\label{kernel3.c}
\end{align}
with $r:=\min(\left|\kappa(q_{i})\right|^{2},\left|\kappa(\widetilde{q}_{j})\right|^{2})$, $R:=\max(\left|\kappa(q_{i})\right|^{2},\left|\kappa(\widetilde{q}_{j})\right|^{2})$. The function $\sigma_{r,R}$ along with the Stieltjes transform $\mathcal{S}$ are defined in Remark~\ref{stieltjes}.
 For clarity and brevity, detailed proofs of these formulas are deferred to Appendices~\ref{secA1} and~\ref{secA2}.

The kernel~\eqref{kernel3.a} is, actually, the pre-kernel of the $k$-point correlation function for the eigenvalues of the P\'olya ensemble associated to the weight function $\widehat{\omega}$, see \cite[Lemma 4.1]{kieburgExactRelationSingular2016}. In particular,
    \begin{align}
        \widetilde{\mathscr{Z}}^{\left ( N \right )}_{\frac{0/1}{0/1}}\left ( p_{i},\widetilde{p}_{j} \right )=\frac{\pi\mathcal{M}[\widehat{\omega}](N+1)}{ C^{\left( N \right)}_{\mathrm{EV}}[\widehat{\omega}]}\frac{K^{\left( N \right)}_{\mathrm{EV}}\left[ \widehat{\omega} \right]\left( \kappa(p_i),\kappa(\widetilde{p}_j) \right)}{\sqrt{\widehat{\omega}\left( \left| \kappa(p_i) \right|^{2} \right)\widehat{\omega}\left( \left| \kappa(\widetilde{p}_j) \right|^{2} \right)}},
    \end{align}
which reduces to a relation to the level density when $\widetilde{p}_j,p_i\to p$, i.e.,
    \begin{align}
        \widetilde{\mathscr{Z}}^{\left ( N \right )}_{\frac{0/1}{0/1}}\left ( p,p \right )=\frac{\pi\mathcal{M}[\widehat{\omega}](N+1)}{ C^{\left( N \right)}_{\mathrm{EV}}[\widehat{\omega}]}\frac{\rho^{\left( N \right)}_{\mathrm{EV}}\left[ \widehat{\omega} \right]\left(  \left| \kappa\left( p \right) \right|^{2} \right)}{\widehat{\omega}\left( \left| \kappa\left( p \right) \right|^{2} \right)}.
    \end{align}
    
We can now combine these partition function expressions with~\eqref{part} and~\eqref{part.nonnorm} into another main result of the present work. For that purpose, we need the matrices~\eqref{cauchykernel} and~\eqref{polyakernel} as well as the matrices
\begin{equation}
    \mathrm{R}^{\left ( N \right )}_{m_{1},m_{2}}\left [ \widehat{\omega} \right ]\left ( \widetilde{\mathbf{p}},\mathbf{p} \right ) =\left(N\mathcal{M}[\widehat{\omega}](N+1)\sum\limits_{\ell=0}^{N}\frac{\left ( a\left(p_{j}\right)\overline{a\left(\widetilde{p}_{i}\right)} \right )^{\ell}\left ( b\left(p_{j}\right)\overline{b\left(\widetilde{p}_{i}\right)} \right )^{N-\ell}}{\mathcal{M}\left [ \widehat{\omega} \right ]\left ( \ell+1 \right )}\right)_{\substack{1\leq i\leq m_2\\1\leq j\leq m_1}}
\end{equation}
and
\begin{equation}
    \mathrm{T}^{\left ( N \right )}_{m_{1},m_{2}}\left [ \widehat{\omega} \right ]\left ( \mathbf{q},\widetilde{\mathbf{q}} \right ) =\left(C_{\rm EV}^{(N)}[\widehat{\omega}]\frac{\widetilde{\mathscr{Z}}^{\left ( N \right )}_{\frac{1/0}{1/0}}\left ( q_{i},\widetilde{q}_{j} \right )}{(b(q_i)\overline{b(\widetilde{q}_j)})^N}\right)_{\substack{1\leq i\leq m_1\\1\leq j\leq m_2}}.
\end{equation}
Summarising the above derivation, we have proven the following more general theorem.

\begin{theorem}[Determinantal Structure of the Partition Function]\label{theorem3}
        Let $N,m_{1},m_{2}$ be three positive integers. The partition function \eqref{partitionfunction2} associated with an additive 2-matrix model where the two source matrices $K_{1}$ and $K_{2}$ are independently drawn from a Pólya ensemble of size $N$ with weight $\omega$ is given by
\begin{align}\label{maintheorem1}
            \mathscr{Z}^{\left(N\right)}_{m_{1},m_{2}}\left(\mathbf{p},\mathbf{q},\widetilde{\mathbf{p}},\widetilde{\mathbf{q}}\right)=(-1)^{m_1m_2}\frac{\det\left(\begin{array}{c|c}
            -\mathrm{R}^{\left ( N \right )}_{m_{2},m_{1}}\left [ \widehat{\omega} \right ]\left ( \widetilde{\mathbf{p}},\mathbf{p} \right ) & \overline{\widetilde{\mathrm{Q}}^{\left ( N \right )}_{m_{2}}\left [ \widehat{\omega} \right ]}\left ( \widetilde{\mathbf{p}},\widetilde{\mathbf{q}} \right ) \\ 
            \\\hline
            \overset{}{\widetilde{\mathrm{Q}}^{\left ( N \right )}_{m_{1}}\left [ \widehat{\omega} \right ]\left ( \mathbf{p},\mathbf{q} \right )} & \mathrm{T}^{\left ( N \right )}_{m_{1},m_{2}}\left [ \widehat{\omega} \right ]\left ( \mathbf{q},\widetilde{\mathbf{q}} \right )
            \end{array}\right)}{\det\Big ( \mathrm{Q}_{m_{1}}\left ( \mathbf{p},\mathbf{q} \right ) \Big )\det\Big ( \overline{\mathrm{Q}_{m_{2}}}\left ( \widetilde{\mathbf{p}},\widetilde{\mathbf{q}} \right ) \Big )},
\end{align}
where the Cauchy-like kernel $\mathrm{Q}_{m}$ has been defined at \eqref{cauchykernel}.
\end{theorem}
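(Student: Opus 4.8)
The plan is to prove the more general \autoref{theorem3} by a two‑stage strategy: first reduce the two‑matrix average to a single integral over the complex eigenvalues of a Pólya ensemble, then evaluate that integral with the determinantal machinery of \cite{kieburgDerivationDeterminantalStructures2010} together with explicit contour‑integral computations of the resulting kernels. Throughout I would temporarily assume $b(p_{j_{1}})\neq 0$ and $b(\widetilde{p}_{j_{2}})\neq 0$ for all relevant indices, the general case then following by analytic continuation in the parameter functions since the finite‑$N$ expression is manifestly regular at those points.

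First I would recast $\mathscr{Z}^{\left(N\right)}_{m_{1},m_{2}}$ into the form \eqref{part}: pulling out the scalar factors $(b(p_{j})/b(q_{j}))^{N}$ and $(\overline{b(\widetilde{p}_{\ell})}/\overline{b(\widetilde{q}_{\ell})})^{N}$ converts each ratio of determinants of $K$ into a ratio of determinants of $\kappa(\,\cdot\,)I_{N}+K_{1}^{-1}K_{2}$. By \autoref{polya1} the matrix $K_{1}^{-1}K_{2}$ is distributed according to a Pólya ensemble with weight $\widehat{\omega}$, so its complex eigenvalue jpdf is \eqref{def:EVjpdf} with $\omega$ replaced by $\widehat{\omega}$, and the expectation becomes the integral \eqref{partitionfunction2.b}; its absolute convergence is exactly what \autoref{lemma:existence} guarantees, with $\check{\omega}_{1}\circledast\omega_{2}$ playing the role of $\omega$ and $\alpha_{\ell},\beta_{\ell}$ the various $-\kappa$‑values. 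This integral is a ratio of characteristic polynomials (and their conjugates) averaged over a squared‑Vandermonde weight, which is precisely the class handled in \cite[Eq.~(4.18)]{kieburgDerivationDeterminantalStructures2010}. Applying the identity \cite[Eq.~(4.35)]{kieburgDerivationDeterminantalStructures2010} to \eqref{squaredVandermonde} then produces the compact block‑determinant form \eqref{part.nonnorm}, with the two Cauchy determinants $\det\big(1/(\kappa(p_{i})-\kappa(q_{j}))\big)$ and its conjugate in the denominator and the $2\times 2$ block matrix assembled from the kernels $\mathcal{J}^{\left(N\right)}_{1,1},\mathcal{J}^{\left(N\right)}_{1,2},\mathcal{J}^{\left(N\right)}_{2,1},\mathcal{J}^{\left(N\right)}_{2,2}$ of \eqref{kernel1}--\eqref{kernel4}.

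The next step is to evaluate the three essentially distinct kernels $\widetilde{\mathscr{Z}}^{\left(N\right)}_{\frac{0/1}{0/1}}$, $\widetilde{\mathscr{Z}}^{\left(N\right)}_{\frac{1/1}{0/0}}$ and $\widetilde{\mathscr{Z}}^{\left(N\right)}_{\frac{1/0}{1/0}}$ in closed form, the two off‑diagonal blocks coinciding up to complex conjugation. For each one I would expand the squared Vandermonde, use the generalized Andr\'eief identity \ref{th:andreiev} to reduce the $N$‑fold integral to a single determinant of one‑dimensional integrals, and evaluate the latter by residue calculus; this brings in the (incomplete) Mellin transforms of $\widehat{\omega}$ and, after resumming, the auxiliary function $\Upsilon_{N}$ of \eqref{ypsilon.def} and the Stieltjes transform of Remark~\ref{stieltjes}, yielding \eqref{kernel3.a}, \eqref{kernel3.b}, \eqref{kernel3.c} (the detailed computations being the content of Appendices~\ref{secA1} and~\ref{secA2}). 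To reassemble \eqref{maintheorem1}, I would convert the denominators via the elementary identity $\kappa(p)-\kappa(q)=\big(\nu(p)^{\top}\mathrm{J}\,\nu(q)\big)/\big(b(p)b(q)\big)$, which turns $\det\big(1/(\kappa(p_{i})-\kappa(q_{j}))\big)$ into $\det\mathrm{Q}_{m_{1}}(\mathbf{p},\mathbf{q})$ times $\prod_{i}b(p_{i})\prod_{j}b(q_{j})$ and similarly for the conjugate block. Multilinearity of the $(m_{1}+m_{2})\times(m_{1}+m_{2})$ block determinant then lets me pull out, row by row and column by column, the scalar powers of $b$ inherited from \eqref{part}, the normalisation constants $C^{\left(N\right)}_{\mathrm{EV}}[\widehat{\omega}]$, $\mathcal{M}[\widehat{\omega}](N)$ and $\mathcal{M}[\widehat{\omega}](N+1)$, and the factors $-N$ and $1/(N+1)$ in \eqref{part.nonnorm}; these are exactly absorbed into the definitions of $\mathrm{R}^{\left(N\right)}_{m_{2},m_{1}}[\widehat{\omega}]$, $\widetilde{\mathrm{Q}}^{\left(N\right)}_{m_{1}}[\widehat{\omega}]$, $\overline{\widetilde{\mathrm{Q}}^{\left(N\right)}_{m_{2}}[\widehat{\omega}]}$ and $\mathrm{T}^{\left(N\right)}_{m_{1},m_{2}}[\widehat{\omega}]$, while the sign $(-1)^{m_{1}m_{2}}$ survives verbatim.

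The main obstacle is the explicit evaluation of the kernels, and within that the ``loop--loop'' kernel $\widetilde{\mathscr{Z}}^{\left(N\right)}_{\frac{1/0}{1/0}}$ of \eqref{kernel3.c}: there two denominators $\kappa(q_{i})+z_{\ell}$ and $\overline{\kappa(\widetilde{q}_{j})+z_{\ell}}$ act on the same eigenvalue, so the residue analysis produces a doubly‑incomplete Mellin transform together with a Stieltjes‑transform remainder whose dependence on $r=\min(|\kappa(q_{i})|^{2},|\kappa(\widetilde{q}_{j})|^{2})$ and $R=\max(|\kappa(q_{i})|^{2},|\kappa(\widetilde{q}_{j})|^{2})$ must be tracked carefully; this is deferred to the appendices. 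By comparison the reduction to the eigenvalue integral and the final reassembly are routine applications of \autoref{polya1}, \autoref{lemma:existence}, the cited determinantal identity and determinant multilinearity, the only genuine bookkeeping being the cancellation of the many normalisation constants into the clean form \eqref{maintheorem1}. Once \eqref{maintheorem1} holds under the assumption that $b$ does not vanish at the $p$‑ and $\widetilde{p}$‑points, analytic continuation in $a$ and $b$ removes that restriction and completes the proof; \autoref{theorem1} is then the special case $m_{2}=0$.
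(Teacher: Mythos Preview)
Your proposal is correct and follows essentially the same route as the paper: reduce to the eigenvalue integral of the single Pólya matrix $K_{1}^{-1}K_{2}$ via \autoref{polya1}, apply the determinantal identity \cite[Eq.~(4.35)]{kieburgDerivationDeterminantalStructures2010} to obtain \eqref{part.nonnorm}, evaluate the three kernels \eqref{kernel3.a}--\eqref{kernel3.c} by the generalized Andr\'eief identity and residue/Mellin integrals, and reassemble using $\kappa(p)-\kappa(q)=\nu(p)^{\top}\mathrm{J}\nu(q)/(b(p)b(q))$ together with multilinearity, with the $b=0$ case handled by analytic continuation. Your identification of the $\widetilde{\mathscr{Z}}^{(N)}_{\frac{1/0}{1/0}}$ kernel as the main technical obstacle also matches the paper's emphasis.
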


\autoref{theorem1} then follows directly by setting $m_{1}=m$ and $m_{2}=0$. Under the assumptions of Definition \eqref{partitionfunction2}, the three kernel functions are well-defined at all points $(q_{i},\widetilde{q}_{j})$ such that $b(q_{i})=0$ or $b(\widetilde q_{j})=0$, which guarantees the existence of the integral representation.

\section{Asymptotic expansion of the average winding number and proof of \autoref{theorem2}}\label{sec:asymp}

We now restrict the asymptotic analysis to the case where the source random matrices $K_{1}$ and $K_{2}$ are independently drawn from the Muttalib-Borodin ensemble of Laguerre type presented at \eqref{def:MB-weight}.

The associated P\'olya weight is
\begin{align}\label{MB.setup}
    \omega_{\mathrm{MB}}\left(t\right)=\frac{\gamma}{\Gamma((\delta+1)/\gamma)}t^{\delta}e^{-t^{\gamma}}\quad \text{with } \gamma>0 \text{ and } \delta>-1, 
\end{align}
where the parameters $\delta$ and $\gamma$ remain fixed as $N\to+\infty$ and we have set $\alpha=1$ since, as aforementioned, the global scaling factor drops out in this model.

In most of the literature, the parameter $\gamma$ is denoted by $\theta$ and the parameter $\delta$ is denoted by $\nu$. Here we deliberately use a different notation to avoid confusion with the complex-valued vector $\nu$ that appears throughout the paper and to emphasize that $\gamma$ plays the role of a tail parameter controlling the decay rate of the mean level density at infinity.

As we restrict ourselves to the sub-case $m_1=m$ and $m_2=0$ in \autoref{theorem3}, the asymptotic behaviour of $\Upsilon_{N}$ (defined in \eqref{ypsilon.def}) evaluated on the diagonal suffices. We focus on the first moment of the winding number, that is, in the derivative
\begin{equation}
    \Cc^{\left ( N \right )}_{1}\left ( p \right )=\left.\frac{\mathrm{d}}{\mathrm{d}p} \mathscr{Z}^{\left(N\right)}_{\mathrm{1}}\left(p,q\right) \right|_{q=p}=\frac{{\rm d}}{{\rm d}p}\left[\left(\frac{b(p)}{b(q)}\right)^N\biggl(1+\frac{\kappa(p)-\kappa(q)}{\kappa(p)}\Upsilon_N(\kappa(p),\kappa(q))\biggl)\right]_{q=p}.
\end{equation}
The derivative in the second term must be in prefactor $\kappa(p)=a(p)/b(p)$, otherwise the difference would vanish. Hence
\begin{equation}
\begin{split}\label{C1.expl.1}
    \Cc_1^{(N)}(p)=&N\frac{b'(p)}{b(p)}+\frac{\kappa'(p)}{\kappa(p)}\Upsilon_N(\kappa(p),\kappa(p))=N\frac{a'(p)}{a(p)}-\frac{\kappa'(p)}{\kappa(p)}\Upsilon_N\left(\frac{1}{\kappa(p)},\frac{1}{\kappa(p)}\right)
\end{split}
\end{equation}
where the second equality uses the symmetry relation \eqref{ypsilon.sym.2}, highlighting the interchange symmetry between $a(p)$ and $b(p)$.

Since $\Upsilon_{N}$ depends on $\kappa$ only through its modulus, we define
\begin{equation}\label{Xi.def}
    \Xi_N(A)=\frac{1}{N}\Upsilon_N(\sqrt{A},\sqrt{A})\,\text{ and }\,A=|\kappa(p)|^2=|a(p)/b(p)|^2.
\end{equation}
This normalization is for convenience: $\Xi_N(A)$ is an averaged quantity that remains of order one as $N$ goes to infinity.
\begin{remark}
    When interchanging differentiation in $p$ with integration or expectation, one must take care at points where $a(p)=0$ or $b(p)=0$. These cases never occur simultaneously by assumption \eqref{assumption1}. When one of the two coefficient functions vanishes the logarithmic derivative
\begin{equation}
    \left.\frac{\partial}{\partial p}\frac{\det K(p)}{\det K(q)}\right|_{q=p}=\left\{\begin{array}{cl} \displaystyle \frac{a'(p)}{b(p)}{\rm tr}(K_2^{-1}K_1) +N\frac{b'(p)}{b(p)}, & a(p)=0, \\ \displaystyle N\frac{a'(p)}{a(p)} +\frac{b'(p)}{a(p)}{\rm tr}(K_1^{-1}K_2), & b(p)=0,\end{array}\right.
\end{equation}
remains finite and analytic. The same holds if the expectation is taken before differentiation and in the limit $q\to p$, as seen from \eqref{C1.expl.1}. Moreover, $\Xi_N(|\kappa(p)|^2)=\Upsilon_N(\kappa(p),\kappa(p))=O(|a(p)|^{2\delta+2})$ when $a(p)\to0$ and similarly when $b(p)\to0$, ensuring pointwise convergence and integrability. Therefore, differentiation and expectation may safely be interchanged.
\end{remark}

Using previous equations \eqref{id:generalizedratioweight} and \eqref{Mellin-MBI} for Muttalib-Borodin self-inverse-convoluted P\'olya weights one obtains
\begin{equation}
    \Xi_N(A)=\frac{1}{N}\sum_{k=1}^{N}\frac{\mathcal{M}[\widehat{\omega}_{\mathrm{MB}}](k,A)}{\mathcal{M}[\widehat{\omega}_{\mathrm{MB}}](k)}=\frac{1}{N}\sum_{k=1}^{N}\varphi_{N}\left(\frac{2k-1}{2N},A\right),
\end{equation}
where $\varphi_{N}$ is defined as a Laplace-type integral
\begin{align}\label{phi.def}
    \varphi_{N}\left(\tau,A\right)=\frac{\int_{-\infty}^{\gamma\ln(A)}h\left( x \right)\exp\left( -N\,F_{\tau}\left( x \right)/\gamma \right)\mathrm{d}x}{\int_{-\infty}^{+\infty}h\left( x \right)\exp\left( -N\,F_{\tau}\left( x \right)/\gamma \right)\mathrm{d}x}=1-\varphi_{N}\left(1-\tau,A^{-1}\right)\in[0,1],
\end{align}
with
\begin{align}\label{notation}
     \tau=\frac{2k-1}{2N}\in]0,1[,\  h\left( x \right)=\left( 2\cosh(x/2)\right)^{-(2\delta+1)/\gamma},\  F_{\tau}\left( x \right)=\ln\left( 1+e^{ x} \right)-\tau x.
\end{align}
The substitution $t=e^{x/\gamma}$ in the Mellin integrals leads precisely to this form. The shift $k\mapsto k-\frac{1}{2}$ simply enforces symmetry about $\tau=1/2$, or equivalently about $k=(N+1)/2$ under the reflection $A\leftrightarrow A^{-1}$. In particular, the auxiliary functions satisfy $h(x)=h(-x)$, $F_\tau(x)=F_{1-\tau}(-x)$ and $\varphi_{N}(\tau,A)=1-\varphi_{N}(1-\tau,A^{-1})$.

For a given $\tau\in]0,1[$, the function $F_\tau$ has a unique critical point at
\begin{equation}
    x_*(\tau)=\ln\left(\frac{\tau}{1-\tau}\right) \qquad\Leftrightarrow\qquad \tau=\frac{1}{1+e^{-x_*(\tau)}}.
\end{equation}
When $x_{*}\left( \tau \right)>\gamma\ln(A)$, the main contribution arises from the upper boundary of the integration domain corresponding to
\begin{equation}\label{tau0.def}
    \tau_0=\frac{A^{\gamma}}{1+A^{\gamma}}\in]0,1[ \qquad\Leftrightarrow\qquad A=\left(\frac{\tau_0}{1-\tau_0}\right)^{1/\gamma}.
\end{equation}
The region where contributions remain unsuppressed has width of order $1/\sqrt{N\tau_0(1-\tau_0)}$ around this boundary.
Hence, the boundary regime $\min\{\tau_0,1-\tau_0\}=\mathcal{O}(N^{-1})$, or equivalently $\min\{A,A^{-1}\}=\mathcal{O}(N^{-1/\gamma})$ must be treated separately in \autoref{sec:zero}.

We therefore decompose the sum over $k$ into three parts:
\begin{equation}
    \sum_{k=1}^N=\underbrace{\sum_{k=1}^{\lfloor N(\tau_0-\Delta_N)+1/2\rfloor}}_{=S_-}+\underbrace{\sum_{k=\lfloor N(\tau_0-\Delta_N)+3/2\rfloor}^{\lceil N(\tau_0+\Delta_N)-1/2\rceil}}_{=S_0}+\underbrace{\sum_{k=\lceil N(\tau_0+\Delta_N)+1/2\rceil}^{N}}_{=S_+}
\end{equation}
where
\begin{align}\label{DeltaN.def}
 \Delta_N=N^{\varepsilon-1/2}\sqrt{\tau_0(1-\tau_0)}
\end{align}
for some fixed $\varepsilon>0$ such that $N^\varepsilon=o\left(\left(N\tau_0(1-\tau_0)\right)^{1/12}\right)$. Here $\lfloor \cdot\rfloor$ and $\lceil \cdot\rceil$ denote the floor and ceiling functions.

We assume that $N$ is large enough so that $\Delta_N<\min\{\tau_0,1-\tau_0\}$, ensuring that the summation ranges of $S_-$ and $S_+$ are non-empty. The condition \eqref{DeltaN.def} forces $\varepsilon$ to be sufficiently small so that the integration boundary $A$ lies within a neighbourhood of the critical point where the Gaussian approximation holds uniformly and the associated error terms stay negligible.

The sum $S_-$ can be related to $S_+$ through the identity
\begin{equation}\label{sum.trick}
\begin{split}
S_-=&\sum_{k=1}^{\lfloor N(\tau_0-\Delta_N)+1/2\rfloor}\varphi_{N}\left(\frac{2k-1}{2N},A\right)\\
=&\lfloor N(\tau_0-\Delta_N)+1/2\rfloor-\sum_{k=\lceil N(1-\tau_0-\Delta_N)+1/2\rceil}^N\varphi_{N}\left(\frac{2k-1}{2N},A^{-1}\right)
\end{split}
\end{equation}
which follows from the reflection property $\varphi_{N}(\tau,A)=1-\varphi_{N}(1-\tau,A^{-1})$. This relation expresses the lower tail $S_{-}$ in terms of the upper one evaluated at the inverted parameter $A^{-1}$.

\begin{itemize}
    \item The sum $S_+$ corresponds to the exponentially suppressed tail contributions and is further analysed in \autoref{sec:tails}.
    \item The expansion of the denominator integral in \eqref{phi.def} is presented in \autoref{sec:denom}.
    \item The main contribution arises from the central sum $S_{0}$: the denominator integral in \eqref{phi.def} admits a uniform Gaussian approximation in a neighbourhood of the boundary $A$, see \autoref{sec:edge}, while the discrete sum over $k$ is handled using an Euler-MacLaurin approximation scheme in \autoref{sec:sum}.
\end{itemize}

\subsection{Asymptotic behaviour of the tails}\label{sec:tails}

We now assume that $A=o(N^{1/\gamma})$. The leading contribution to the integral in the numerator of~\eqref{phi.def} for
\begin{equation}
    \tau=\frac{2k-1}{2N}>\frac{2\lceil N(\tau_0+\Delta_N)+1/2\rceil-1}{2N}>\tau_0
\end{equation}
comes from $x=\gamma\ln(A)$ since $F_\tau(x)$ is strictly decreasing for $x<\gamma\ln(A)=x_*(\tau_0)<x_*(\tau)$. Additionally, $F'(x)=1/(1+e^{-x})-\tau\leq F'(\gamma\ln(A))=\tau_0-\tau<0$ and $F'(x)$ is strictly increasing for $x\leq \gamma\ln(A)$, and $h(x)$ is bounded from above on $\R$ by $2^{-(2\delta+1)/\gamma}$.
Thus, the integrand is bounded from above by
\begin{equation}\label{bound}
\begin{split}
    h(x)\exp\left(\frac{N}{\gamma} (F(x_*(\tau))-F(x))\right)\leq&2^{-(2\delta+1)/\gamma}\exp\left(\frac{N}{\gamma} \int_{x}^{x_*(\tau)}F'(y){\rm d}y\right)\\
    \leq&2^{-(2\delta+1)/\gamma}\exp\left(-\frac{N}{\gamma}(\tau-\tau_0)(x_*(\tau)-x)\right).
\end{split}
\end{equation}
It follows that the corresponding summand satisfies
\begin{equation}
    \varphi_N(\tau,A)\underset{N\to \infty}{=}\mathcal{O}\left(\frac{1}{\sqrt{N}(\tau-\tau_0)}\exp\left[-\frac{N}{\gamma}(\tau-\tau_0)(x_*(\tau)-\gamma\ln(A))\right]\right).
\end{equation}
Since $\tau-\tau_0>\Delta_N/2$ and 
\begin{equation}
x_*(\tau)-\gamma\ln(A)=\int_{\tau_0}^\tau\frac{\mathrm{d}\rho}{\rho(1-\rho)}>\frac{\tau-\tau_0}{\max\{\tau(1-\tau),\tau_0(1-\tau_0)\}},
\end{equation}
the exponential factor behaves like
\begin{equation}
N(\tau-\tau_0)(x_*(\tau)-\gamma\ln(A))>\frac{N(\tau-\tau_0)^2}{\max\{\tau(1-\tau),\tau_0(1-\tau_0)\}}>\gamma c_2 N^{2\varepsilon},
\end{equation}
as in \eqref{DeltaN.def}. Hence, there exist constants $c_1,c_2>0$ such that the terms decay exponentially as $\mathcal{O}(N^{c_1}e^{-c_2 N^{2\varepsilon}})$.

Consequently, for the sums $S_{-}$ and $S_{+}$, by equation \eqref{sum.trick} and a similar argument under the transformation $(\tau_0,A)\to(1-\tau_0,A^{-1})$, we obtain
\begin{equation}\label{Spm.result}
S_+=\mathcal{O}(N^{1+c_1}e^{-c_2 N^{2\varepsilon}}) \quad{\rm and}\quad S_-=\lfloor N(\tau_0-\Delta_N)+1/2\rfloor-\mathcal{O}(N^{1+c_1}e^{-c_2 N^{2\varepsilon}}),
\end{equation}
where for $S_-$ we assume $A^{-1}\underset{N\to\infty}{=}o(N^{1/\gamma})$.

\subsection{Asymptotic expansion of the denominator}\label{sec:denom}

Again we assume $\max\{A,A^{-1}\}=o(N^{1/\gamma})$. A Laplace approximation naturally reveals the magnitude of the dominant contributions. For each $\tau\in]0,1[$, the function $F_{\tau}(\cdot)$ has a unique critical point
\begin{align}\label{crit.point}
    x_{*}\left(\tau\right)=\ln(\tau)-\ln(1-\tau)
\end{align}
 with 
\begin{equation}\label{2nd.der}
F''_{\tau}\left(x_{*}\right)=\frac{1}{4\cosh^2( x_*/2)}=(1-\tau) \tau>0
\end{equation}
and higher derivatives given by
\begin{equation}\label{F.expan}
\begin{split}
 F^{(3)}_{\tau}\left(x_{*}\right)=&(1-\tau)\tau(1-2\tau),\\
 F^{(4)}_{\tau}\left(x_{*}\right)=&(1-\tau)\tau[1-6(1-\tau)\tau],
 \end{split}
\end{equation}
For each integer $n\geq 2$, there exists a polynomial $P_{n}$ such that
\begin{equation}
    F^{(n)}_{\tau}(x_{*})=\tau(1-\tau)P_{n}(\tau).
\end{equation}
Consequently, there exist constants $c_{n}>0$ so that
\begin{equation}
    \left| F^{(n)}_{\tau}(x_{*})\right|\leq c_{n}\tau(1-\tau)\,\text{ for all }\tau\in[0,1].
\end{equation}

\noindent The map $\tau\mapsto x_{*}\left( \tau \right)$ is a strictly increasing $\mathscr{C}^{1}$-diffeomorphism from $\left] 0,1 \right[$ to $\R$.
Applying Laplace's method to the denominator of~\eqref{phi.def} gives
\begin{equation}\label{denom.expan}
\begin{split}
    \int_{-\infty}^{+\infty}h\left( x \right)\exp\left(-\frac{N}{\gamma}\,F_{\tau}\left( x \right) \right)\mathrm{d}x\underset{N\to \infty}{=}&\sqrt{\frac{2\pi\gamma}{N }}[(1-\tau)\tau]^{(2\delta+1-\gamma)/(2\gamma)}e^{-NF_\tau(x_*)}\\
    &\times\biggl[1+\frac{\chi(\tau,\delta,\gamma)}{24N\gamma(1-\tau)\tau }+\mathcal{O}\left(\frac{1}{N^2\tau^2(1-\tau)^2}\right)\biggl]
 \end{split}
\end{equation}
which is valid whenever $\tau$ and $1-\tau$ are much larger than $1/N$: for any fixed $0<\varepsilon<1$, the error term is uniform for $\tau\in[N^{\varepsilon-1},1-N^{\varepsilon-1}]$.

\noindent Here
\begin{equation}\label{chi.def}
\chi(\tau,\delta,\gamma)=12(1+\delta)\delta(1-2\tau)^2+2(12 \delta\gamma+6\gamma-\gamma^2-6)(1-\tau)\tau+3+2\gamma^2-6\gamma-12\delta\gamma
\end{equation}

For this expansion, we used
\begin{equation}\label{h.expan}
\begin{split}
h(x_*)=&[(1-\tau)\tau]^{(2\delta+1)/(2\gamma)},\\
h'(x_*)=&\frac{2\delta+1}{2\gamma}(1-2\tau)[(1-\tau)\tau]^{(2\delta+1)/(2\gamma)},\\
h''(x_*)=&\frac{2\delta+1}{4\gamma^2}[1+2\delta(1-2\tau)^2-4(1+\gamma)(1-\tau)\tau][(1-\tau)\tau]^{(2\delta+1)/(2\gamma)}.
\end{split}
\end{equation}
The regime $\tau=\mathcal{O}(1/N)$ can be neglected, since such terms only contribute to $S_{-}$ and $S_{+}$ both already handled in \autoref{sec:tails} under the assumption $\max\{A,A^{-1}\}\underset{N\to\infty}{=}o(N^{1/\gamma})$.

\subsection{Asymptotic expansion of the integral at the edge}\label{sec:edge}

We choose
\begin{equation}\label{edge.int}
\tau=\frac{2k-1}{2N}\in\left[\frac{2\lfloor N(\tau_0-\Delta_N)+3/2\rfloor-1}{2N},\frac{2\lceil N(\tau_0+\Delta_N)-1/2\rceil-1}{2N}\right]
\end{equation}
so that $|\tau-\tau_0|\underset{N\to \infty}{=}\mathcal{O}(\Delta_N)=o(1)$. 
Consequently,
\begin{equation}\label{x.dist}
|x_*(\tau)-\gamma\ln(A)|\underset{N\to \infty}{=}\mathcal{O}\left(\frac{\Delta_N}{\tau_0(1-\tau_0)}\right)=\mathcal{O}\left(\frac{N^\varepsilon}{\sqrt{N\tau_0(1-\tau_0)}}\right)
\end{equation}
since the distance $\left|x_*(\tau)-\gamma\ln(A)\right|$ depends on $N$ through $\left|\tau-\tau_{0}\right|$, and $\left|\tau-\tau_{0}\right|=\mathcal{O}(\Delta_{N})$.

Expanding the integrand about $x_*(\tau)$ over an interval of size
\begin{equation}
\Delta A_N=\frac{\Delta_N\ln(N\tau_0(1-\tau_0))}{\tau_0(1-\tau_0)}=\frac{N^\varepsilon\ln(N\tau_0(1-\tau_0))}{\sqrt{N\tau_0(1-\tau_0)}}
\end{equation}
shows that the leading contribution to the integral arises from the vicinity of the upper endpoint $x=\gamma\ln(A)$. The convergence of the expansion in the interval $[\gamma\ln(A)-\Delta A_N,\gamma\ln(A)]$ is uniform, since for all $n\geq 0$
\begin{equation}
\begin{split}
NF_\tau^{(n+3)}(x_*(\tau))\xi^{n+3}=&\mathcal{O}(N\tau(1-\tau)[\Delta A_N]^{n+3})\\
\underset{N\to\infty}{=}&\mathcal{O}(N^{(n+3)\varepsilon}[N\tau_0(1-\tau_0)]^{-(n+1)/2}[\ln(N\tau_0(1-\tau_0))]^{n+3})
\end{split}
\end{equation}
which indeed converges to zero. This follows from the choice of scaling \eqref{DeltaN.def} and from the limits $\Delta_N/\tau_0\to0$, $\Delta_N/(1-\tau_0)\to0$ and $N\tau_0\to\infty$.

We split the function $\varphi_{N}$ in two parts:
\begin{equation}
\varphi_N(\tau,A)=\varphi_N(\tau,e^{-\Delta A_N/\gamma}A)+\frac{\int_{\gamma\ln(A)-\Delta A_N}^{\gamma\ln(A)}h\left( x \right)\exp\left( -N\,F_{\tau}\left( x \right)/\gamma \right)\mathrm{d}x}{\int_{-\infty}^{+\infty}h\left( x \right)\exp\left( -N\,F_{\tau}\left( x \right)/\gamma \right)\mathrm{d}x}
\end{equation}

\noindent For $x<\gamma\ln(A)-\Delta A_N$, one has $x<x_*(\tau)$ with a distance $|x-x_*(\tau)|\gg\Delta_N/[\tau_0(1-\tau_0)]$, which allows us to bound the integrand uniformly similarly to \eqref{bound} with $\gamma \ln(A)$ replaced by $\gamma\ln(A)-\Delta A_N$. 

By setting
\begin{equation}
x_*(\widetilde\tau_N)=\gamma\ln(A)-\Delta A_N,
\end{equation}
we then obtain
\begin{equation}
\varphi_N(\tau,N^{-\Delta_N/\gamma}A)\underset{N\to\infty}{=}\mathcal{O}\left(\frac{1}{\sqrt{N}(\tau-\widetilde\tau_N)}\exp\left[-N(\tau-\widetilde\tau_N)\Delta A_N\right]\right).
\end{equation}
The right-hand side is exponentially small since $\tau-\widetilde\tau_N=(\tau-\tau_0)+(\tau_0-\widetilde\tau_N)$ with $\tau-\tau_{0}=\mathcal{O}(\Delta_N)$, while we estimate the second term by noting that $\Delta A_N=o(x_*(\tau_0))$ as $N$ goes to infinity which implies
\begin{equation}
\begin{split}
\tau_0-\widetilde\tau_N=&\int_{x_*(\tau_0)-\Delta A_N}^{x_*(\tau_0)}\frac{\mathrm{d}x}{4\cosh^2(x/2)}\underset{N\to\infty}{=}\mathcal{O}(\tau_0(1-\tau_0)\Delta A_N),
\end{split}
\end{equation}
since $1/[4\cosh^2(x_*(\tau_0)/2)]=\tau_0(1-\tau_0)$. Therefore, 
\begin{equation}
\varphi_N(\tau,N^{-\Delta_N/\gamma}A)\underset{N\to\infty}{=}\mathcal{O}(N^{c_{3}}e^{-c_{4} N^{2\varepsilon}[\ln(N\tau_0(1-\tau_0))]^2})
\end{equation}
for some positive constants $c_{3},c_{4}$.

We now expand the integrand around
\begin{equation}
    x=x_*(\tau)-\sqrt{\frac{\gamma}{N(1-\tau)\tau}}\xi
\end{equation}
over the interval $[\gamma\ln(A)-\Delta A_N,\gamma\ln(A)]$, using the earlier expansions \eqref{2nd.der}, \eqref{F.expan} and \eqref{h.expan}. The error remains uniform in this region.
The upper-limit in $\xi$,
\begin{equation}
    \sqrt{\frac{N(1-\tau)\tau}{\gamma}}[x_*(\tau)-\gamma\ln(A)+\Delta A_N]\underset{N\to\infty}{=} \mathcal{O}(N^\varepsilon\ln[N\tau_0(1-\tau_0)]),
\end{equation}
can be safely extended to $+\infty$ with an exponentially small error of order $\mathcal{O}(N^{c_{5}}e^{-c_{6}N^{2\varepsilon}[\ln(N\tau_0(1-\tau_0)]^2})$ for some positive constants $c_{5},c_{6}$.
However, the lower limit
\begin{equation}\label{lower.term}
   \xi_0=\sqrt{\frac{N(1-\tau)\tau}{\gamma}}[x_*(\tau)-\gamma\ln(A)]=\sqrt{\frac{N(1-\tau)\tau}{\gamma}}[x_*(\tau)-x_*(\tau_0)]
\end{equation}
must be retained, as it can be of order one or smaller.

After normalisation by the leading term from \eqref{denom.expan}, we obtain
\begin{equation}
\begin{split}
I=&\sqrt{\frac{N}{2\pi\gamma}}[(1-\tau)\tau]^{-(2\delta+1-\gamma)/(2\gamma)}\int_{\gamma\ln(A)-\Delta A_N}^{\gamma\ln(A)}h\left( x \right)\exp\left[ \frac{N}{\gamma}(F_{\tau}\left( x_*(\tau) \right)-F_{\tau}\left( x \right))\right]\mathrm{d}x\\
\underset{N\to\infty}{=}&\int_{\xi_0}^{+\infty}\biggl[1+\frac{C_{1,1}\xi+C_{1,3}\xi^3}{\sqrt{N\tau(1-\tau)}}+\frac{C_{2,2}\xi^2+C_{2,4}\xi^4+C_{2,6}\xi^6}{N\tau(1-\tau)}\biggl]e^{-\xi^2/2}\frac{\mathrm{d}\xi}{\sqrt{2\pi}}\\
&+\mathcal{O}\left(\frac{N^{9\varepsilon}[\ln(N\tau_0(1-\tau_0))]^9}{[N\tau_0(1-\tau_0)]^{3/2}}\right)
\end{split}
\end{equation}
where the coefficients are
\begin{equation}
\begin{split}
C_{1,1}=&\frac{(2\delta+1)(1-2\tau)}{2\sqrt{\gamma}},\quad C_{1,3}=\frac{\sqrt{\gamma}(2\tau-1)}{6},\\
C_{2,2}=&\frac{(2\delta+1)[1+2\delta(1-2\tau)^2-4(1+\gamma)(1-\tau)\tau]}{8\gamma},\\
C_{2,4}=&\frac{6\gamma(1-\tau)\tau-\gamma-2(2\delta+1)(1-2\tau)^2}{24},\\
C_{2,6}=&\frac{\gamma(1-2\tau)^2}{72}.
\end{split}
\end{equation}
They satisfy the identity
\begin{equation}
   C_{2,2}+3C_{2,4}+15C_{2,6}=\frac{\chi(\tau,\delta,\gamma)}{24\gamma}
\end{equation}
with $\chi$ defined in~\eqref{chi.def}.
Integrating by parts over $\xi$ and using the complementary error function $\mathrm {erfc}(x)=\frac{2}{\sqrt{\pi}}\int_x^\infty e^{-\xi^2}d\xi$, we obtain the asymptotic form
\begin{equation}\label{expand.phi.1}
\begin{split}
    \varphi_N(\tau,A)=&\frac{1}{2}\mathrm {erfc}\left(\frac{\xi_0}{\sqrt{2}}\right)+\biggl[C_{1,1}+C_{1,3}(\xi_0^2+2)+\mathcal{O}\left(\frac{\xi_0^5}{\sqrt{N\tau(1-\tau)}}\right)\biggl]\frac{e^{-\xi_0^2/2}}{\sqrt{2\pi N\tau(1-\tau)}}\\
    &+\mathcal{O}\left(\frac{N^{9\varepsilon}[\ln(N\tau_0(1-\tau_0))]^9}{[N\tau_0(1-\tau_0)]^{3/2}}\right).
\end{split}
\end{equation}
The prefactor of the complementary error function is exact, as it precisely matches the leading-order term in the denominator expansion. Errors are uniform in $\tau$ throughout the interval \eqref{edge.int}.

Since $|\tau_0-\tau|=\Delta_N\ll1$ we expand $\xi_0$, see~\eqref{lower.term}, as follows
\begin{equation}
\begin{split}
\xi_0\underset{N\to\infty}{=}&\sqrt{\frac{N}{\gamma(1-\tau_0)\tau_0}}(\tau-\tau_0)+\mathcal{O}\left(\frac{N^{3\varepsilon}}{N\tau_0(1-\tau_0)}\right)
\end{split}
\end{equation}
uniformly in $\tau$. Interestingly, the quadratic term $(\tau-\tau_0)^2$ cancels in this expansion.
Substituting into the previous result \eqref{expand.phi.1} yields the final asymptotic form:
\begin{equation}\label{expand.phi.2}
\begin{split}
    \varphi_N(\tau,A)\underset{N\to\infty}{=}&\frac{1}{2}\mathrm{erfc}\left(\sqrt{\frac{N}{2\gamma(1-\tau_0)\tau_0}}(\tau-\tau_0)\right)+\biggl[C_{1,1}+C_{1,3}\left(\frac{N(\tau-\tau_0)^2}{\gamma(1-\tau_0)\tau_0}+2\right)\biggl]\\
    &\times\frac{1}{\sqrt{2\pi N\tau_0(1-\tau_0)}}\exp\left(-\frac{N(\tau-\tau_0)^2}{2\gamma(1-\tau_0)\tau_0}\right)+\mathcal{O}\left(\frac{N^{5\varepsilon}}{N\tau_0(1-\tau_0)}\right),
\end{split}
\end{equation}
which will be used in the evaluation of the remaining sum $S_0$, where we used that $\xi_0=O(N^\varepsilon)$.

\subsection{Asymptotic expansion of the sum at the edge}\label{sec:sum}

The term involving the error function in \eqref{expand.phi.2} requires particular care. Once the discrete sum over $k$ is replaced by an integral in the scaled variable $t=\sqrt{N}(\tau-\tau_0)$ with $\tau=(2k-1)/2N$, it ceases to be integrable on the real line. To circumvent this issue, we decompose it as
\begin{equation}
\frac{1}{2}\mathrm{erfc}\left(\sqrt{\frac{N}{2\gamma(1-\tau_0)\tau_0}}(\tau-\tau_0)\right)=\mathbf{1}_{\left[0, \infty\right[}(\tau_{0}-\tau)+\frac{{\rm sign}(\tau-\tau_0)}{2}\mathrm{erfc}\left(\sqrt{\frac{N}{2\gamma(1-\tau_0)\tau_0}}|\tau-\tau_0|\right)
\end{equation}
where $\mathrm{sign}$ denotes the signum function.

The special case $N\tau_0+1/2\in\mathbb{Z}$ corresponds to an exceptional value of the continuous parameter $p$ and can therefore be ignored. We may thus assume that
\begin{equation}\label{kminus.def}
k_-=\left\lfloor N\tau_0+\frac{1}{2}\right\rfloor<k_+=\left\lceil N\tau_0+\frac{1}{2}\right\rceil.
\end{equation}
The same result extends to the integer case $k_{-}=k_{+}=k_{0}$, although one must then split the sum into two parts with end points $k_{0}=\pm 1$ and treat the central term $k=k_{0}$ separately.

We now decompose the sum $S_{0}$ as
\begin{equation}
\begin{split}
S_0=&\sum_{k=K_-}^{K_+}\varphi_N\left(\frac{2k-1}{2N},A\right)\\
=&k_--\left\lfloor N(\tau_0-\Delta_N)+\frac{1}{2}\right\rfloor+\left[\sum_{k=K_-}^{k_-}+\sum_{k=k_+}^{K_+}\right] f\left(\sqrt{N}\left[\frac{2k-1}{2N}-\tau_0\right]\right)\\
&+\mathcal{O}\left(N\Delta_N\frac{N^{5\varepsilon}}{N\tau_0(1-\tau_0)}\right)\text{ as } N \text{ goes to } \infty.
\end{split}
\end{equation}
with
\begin{equation}
    K_-=\lfloor N(\tau_0-\Delta_N)+3/2\rfloor,\,K_+=\lceil N(\tau_0+\Delta_N)-1/2\rceil
\end{equation}
and
\begin{equation}\label{f.def}
\begin{split}
f(t)=&\frac{{\rm sign}(t)}{2}\mathrm{erfc}\left(\frac{|t|}{\sqrt{2}}\right)+\frac{C_{1,1}+C_{1,3}\left(t^2+2\right)}{\sqrt{2\pi N\tau_0(1-\tau_0)}}\exp\left(-\frac{t^2}{2}\right)
\end{split}
\end{equation}
where we have set $t=\sqrt{N/[\gamma(1-\tau_0)\tau_0]}(\tau-\tau_0)$.
This splitting is essential, as the complementary error function is not differentiable at $\tau=\tau_0$, corresponding to a summation index lying between $k_{-}$ and $k_{+}$.

Using the definition of $\Delta_{N}$ \eqref{DeltaN.def}, the associated error term is 
\begin{equation}
    \mathcal{O}(N^{6\varepsilon}/\sqrt{N\tau_0(1-\tau_0)})
\end{equation}
which vanishes for our choice of $\varepsilon$.

We apply the Euler-MacLaurin formula separately for each of the two sums. For the first one we obtain
\begin{equation}
\begin{split}
&\sum_{k=K_-}^{k_-} f\left(\sqrt{\frac{N}{\gamma\tau_0(1-\tau_0)}}\left[\frac{2k-1}{2N}-\tau_0\right]\right)=\int_{K_-}^{k_-} f\left(\sqrt{\frac{N}{\gamma\tau_0(1-\tau_0)}}\left[\frac{2k-1}{2N}-\tau_0\right]\right)\mathrm{d}k\\
&+\frac{1}{2}\left[f\left(\sqrt{\frac{N}{\gamma\tau_0(1-\tau_0)}}\left[\frac{2k_--1}{2N}-\tau_0\right]\right)+f\left(\sqrt{\frac{N}{\gamma\tau_0(1-\tau_0)}}\left[\frac{2K_--1}{2N}-\tau_0\right]\right)\right]\\
&-\frac{1}{12\sqrt{N\tau_0(1-\tau_0)\gamma}}\\
&\times\left[f'\left(\sqrt{\frac{N}{\gamma\tau_0(1-\tau_0)}}\left[\frac{2k_--1}{2N}-\tau_0\right]\right)-f'\left(\sqrt{\frac{N}{\gamma\tau_0(1-\tau_0)}}\left[\frac{2K_--1}{2N}-\tau_0\right]\right)\right]\\
&+R,
\end{split}
\end{equation}
where the remainder is bounded by
\begin{equation}
\begin{split}
|R|\leq&\frac{2\zeta(2)}{N\tau_0(1-\tau_0)\gamma(2\pi)^2}\int_{K_-}^{k_-} \left|f''\left(\sqrt{\frac{N}{\gamma\tau_0(1-\tau_0)}}\left[\frac{2k-1}{2N}-\tau_0\right]\right)\right|\mathrm{d}k\\
\leq&\frac{2\zeta(2)}{\sqrt{N\tau_0(1-\tau_0)\gamma}(2\pi)^2}\int_{-\infty}^0\left|f''\left(t\right)\right|\mathrm{d}t\underset{N\to\infty}{=}\mathcal{O}\left(\frac{1}{\sqrt{N\tau_0(1-\tau_0)}}\right)
\end{split}
\end{equation}
where $\zeta$ denotes the Riemann zeta function $\zeta$, and we used the fact that
\begin{equation}\label{tpm.def}
t_-=\sqrt{\frac{N}{\gamma\tau_0(1-\tau_0)}}\left[\frac{2k_--1}{2N}-\tau_0\right]<0<\sqrt{\frac{N}{\gamma\tau_0(1-\tau_0)}}\left[\frac{2k_+-1}{2N}-\tau_0\right]=t_+.
\end{equation}
Both $f$ and $f'$ vanish exponentially as $k\to K_{-}$ since $t=O(N^\varepsilon)$. These boundary terms may therefore be neglected.

At $k=k_{-}$, one has $t=O(1/\sqrt{N\tau_{0}(1-\tau_{0})})$, the derivative term is suppressed by its prefactor whereas the value $f(k_{-})$ must be retained.
Furthermore, we replace $k$ by the integration variable $t$ appearing in the argument of $f$. The upper limit becomes $t_{-}$ while the lower limit is of order $N^{\varepsilon}$ and can therefore be extended to $-\infty$, since this modification only introduces an exponentially small error. Indeed, $f$ has a Gaussian decay near infinity.

This yields
\begin{equation}
\begin{split}
&\sum_{k=K_-}^{k_-} f\left(\sqrt{\frac{N}{\gamma\tau_0(1-\tau_0)}}\left[\frac{2k-1}{2N}-\tau_0\right]\right)\\
\underset{N\to\infty}{=}&\sqrt{N\tau_0(1-\tau_0)\gamma}\int_{-\infty}^{t_-} f\left(t\right)\mathrm{d}t+\frac{f(t_-)}{2}+\mathcal{O}\left(\frac{1}{\sqrt{N\tau_0(1-\tau_0)}}\right).
\end{split}
\end{equation}

A similar computation for the second sum gives
\begin{equation}
\begin{split}
S_0\underset{N\to\infty}{=}&k_--\left\lfloor N(\tau_0-\Delta_N)+\frac{3}{2}\right\rfloor+\sqrt{N\tau_0(1-\tau_0)\gamma}\left(\int_{-\infty}^{t_-} +\int_{t_+}^{+\infty}\right)f\left(t\right)\mathrm{d}t\\
&+\frac{f(t_-)+f(t_+)}{2}+\mathcal{O}\left(\frac{N^{6\varepsilon}}{\sqrt{N\tau_0(1-\tau_0)}}\right).
\end{split}
\end{equation}
Since $t_\pm=\mathcal{O}(1/\sqrt{N\tau_0(1-\tau_0)})$, we expand $f$ about $t=0$ using one-sided derivatives, as $f$ is discontinuous at the origin. In particular
\begin{equation}
    \lim_{t\to 0_{\pm}}f(t)=\pm\frac{1}{2}+\mathcal{O}\left(\frac{1}{\sqrt{N\tau_0(1-\tau_0)}}\right).
\end{equation}
This way, we obtain
\begin{equation}
\begin{split}
S_0\underset{N\to\infty}{=}&k_--\left\lfloor N(\tau_0-\Delta_N)+\frac{1}{2}\right\rfloor+\sqrt{N\tau_0(1-\tau_0)\gamma}\int_{-\infty}^{+\infty}f\left(t\right)\mathrm{d}t\\
&-\frac{\sqrt{N\tau_0(1-\tau_0)\gamma}(t_++t_-)}{2}+\mathcal{O}\left(\frac{N^{6\varepsilon}}{\sqrt{N\tau_0(1-\tau_0)}}\right).
\end{split}
\end{equation}
Using $k_{+}=k_{-}+1$ and the definitions of $t_{\pm}$, we have 
\begin{equation}
k_--\frac{\sqrt{N\tau_0(1-\tau_0)\gamma}(t_++t_-)}{2}=N\tau_0
\end{equation}
The integral involving the complementary error function in \eqref{f.def} vanishes identically, since the integrand is odd. The remaining Gaussian contribution can be evaluated explicitly, which gives
\begin{equation}
\begin{split}
S_0=&N\tau_0-\left\lfloor N(\tau_0-\Delta_N)+\frac{1}{2}\right\rfloor+\sqrt{\gamma}(C_{1,1}+3C_{1,3})+\mathcal{O}\left(\frac{N^{6\varepsilon}}{\sqrt{N\tau_0(1-\tau_0)}}\right).
\end{split}
\end{equation}

Combining all three contributions $S_{-}$, $S_0$ and $S_{+}$ we obtain
\begin{equation}
\Xi_N(A)=S_-+S_0+S_+=N\tau_0+\sqrt{\gamma}\left[C_{1,1}+3C_{1,3}\right]+o(1)
\end{equation}
or equivalently,
\begin{equation}
    \Xi_N(A)=\frac{N|a(p)|^{2\gamma}}{|a(p)|^{2\gamma}+|b(p)|^{2\gamma}}+\frac{\gamma-2\delta-1}{2}\frac{|a(p)|^{2\gamma}-|b(p)|^{2\gamma}}{|a(p)|^{2\gamma}+|b(p)|^{2\gamma}}+\mathcal{O}\left(\frac{N^{6\varepsilon}}{\sqrt{N\tau_0(1-\tau_0)}}\right).
\end{equation}
Substituting this result into the $1$-point correlation function \eqref{C1.expl.1} gives
\begin{equation}\label{C1-result.1}
\begin{split}
&\Cc_1^{(N)}(p)\underset{N\to\infty}{=}N\frac{a'(p)\overline{a(p)}|a(p)|^{2\gamma-2}+b'(p)\overline{b(p)}|b(p)|^{2\gamma-2}}{|a(p)|^{2\gamma}+|b(p)|^{2\gamma}}\\
&+\left(\frac{a'(p)}{a(p)}-\frac{b'(p)}{b(p)}\right)\biggl[\frac{\gamma-2\delta-1}{2}\frac{|a(p)|^{2\gamma}-|b(p)|^{2\gamma}}{|a(p)|^{2\gamma}+|b(p)|^{2\gamma}}+\mathcal{O}\left(\frac{N^{6\varepsilon}}{\sqrt{N\tau_0(p)(1-\tau_0(p))}}\right)\biggl].
\end{split}
\end{equation}
These asymptotics hold whenever $N\tau_0(1-\tau_0)\to\infty$, i.e. when $N^{1/(2\gamma)}|a(p)|\to\infty$ and $N^{1/(2\gamma)}|b(p)|\to\infty$ as $N\to\infty$.

When $p_0\in{\rm S}^1$ is a zero of order $m$ for either $a$ or $b$, we consider the symmetrised combination 
\begin{equation}
    \Cc_1^{(N)}(p_0e^{i\delta \varphi})+\Cc_1^{(N)}(p_0e^{-i\delta \varphi})
\end{equation}
which, as $N$ goes to infinity, expands as
\begin{equation}\label{Symmetrisation}
\begin{split}
&\Cc_1^{(N,0)}(p_0e^{i\delta \varphi})+\Cc_1^{(N,0)}(p_0e^{-i\delta \varphi})\\
+&\frac{imc_1-(m+1)c_2\delta\varphi}{ic_1\delta\varphi-c_2\delta\varphi^{2}}\biggl[\frac{\gamma-2\delta-1}{2}\frac{|c_1\delta\varphi^m|^{2\gamma}-|b(p_0)+ib'(p_0)\delta\varphi|^{2\gamma}}{|c_1\delta\varphi^m|^{2\gamma}+|b(p_0)+ib'(p_0)\delta\varphi|^{2\gamma}}+\mathcal{O}\left(\frac{N^{6\varepsilon}}{\sqrt{N|\delta\varphi|^{2m\gamma}}}\right)\biggl]\\
-&\frac{imc_1+(m+1)c_2\delta\varphi}{ic_1\delta\varphi+c_2\delta\varphi^{2}}\biggl[\frac{\gamma-2\delta-1}{2}\frac{|c_1\delta\varphi^m|^{2\gamma}-|b(p_0)-ib'(p_0)\delta\varphi|^{2\gamma}}{|c_1\delta\varphi^m|^{2\gamma}+|b(p_0)-ib'(p_0)\delta\varphi|^{2\gamma}}+\mathcal{O}\left(\frac{N^{6\varepsilon}}{\sqrt{N|\delta\varphi|^{2m\gamma}}}\right)\biggl]\\
=&\Cc_1^{(N,0)}(p_0e^{i\delta \varphi})+\Cc_1^{(N,0)}(p_0e^{-i\delta \varphi})-i\frac{(\gamma-2\delta-1)c_2}{2c_1}+\mathcal{O}\left(\frac{N^{6\varepsilon}}{\sqrt{N|\delta\varphi|^{2m\gamma}}}\right),
\end{split}
\end{equation}
where we expanded around $p_{0}$ such that $a(p_{0})=0$ under the assumption that there exists a strictly positive integer $m$ such that $a(p_0e^{i\delta \varphi})=c_1(i\delta\varphi)^m+c_2(i\delta\varphi)^{m+1}+\mathcal{O}(\delta\varphi^{m+2})$. The same reasoning applies if $b$ vanishes under identical conditions. We also denoted the leading order contribution as
\begin{equation}
\Cc_1^{(N,0)}(p)=N\frac{a'(p)\overline{a(p)}|a(p)|^{2\gamma-2}+b'(p)\overline{b(p)}|b(p)|^{2\gamma-2}}{|a(p)|^{2\gamma}+|b(p)|^{2\gamma}}.
\end{equation}

This symmetrisation removes the simple pole from the logarithmic derivatives of $a(p)$ and $b(p)$ provided that $|\delta \varphi|^{2\gamma m}\gg N^{-1}$. In the leading order, $\tau_{0}$ is an even function of $\delta\varphi$ as $\varphi\to 0$, which implies that $\tau_0=\mathcal{O}(|a(p)|^{2\gamma})=\mathcal{O}(|\delta\varphi|^{2\gamma m})$ when $a(p_0)=0$, and similarly, $1-\tau_0=\mathcal{O}(|b(p)|^{2\gamma})=\mathcal{O}(|\delta\varphi|^{2\gamma m})$ when $b(p)=0$. This symmetrisation underlies the appearance of the Cauchy principal value integral.

Finally, with this symmetrisation, the error term vanishes uniformly when the integration domain is restricted to a set of the form
\begin{equation}
{\rm S}^{1}_{\eta}=\{p\in {\rm S}^1|\,N\tau_0(p)(1-\tau_0(p))\geq N^{\eta}\}\,\text{ for }\eta\in]0,1].
\end{equation}
where 
\begin{equation}
\tau_0(p)(1-\tau_0(p))=\frac{|a(p)b(p)|^{2\gamma}}{(|a(p)|^{2\gamma}+|b(p)|^{2\gamma})^2}.
\end{equation}
The auxiliary exponent must satisfy $0<\varepsilon<\eta/12$.

In the next subsection, we will show that one can choose $\eta$ so that the remaining contribution from the complement ${\rm S}^{1}\setminus{\rm S}^{1}_{\eta}$ vanishes, except for the leading term in~\eqref{C1-result.1}, proportional to $N$, which reproduces the dominant contribution in \eqref{Symmetrisation}.

\subsection{Discussion for small parameter amplitudes}\label{sec:zero}

The previous analysis ceases to hold when $N\tau_0(1-\tau_0)$ is of order one. This regime corresponds to
\begin{equation}
    \min\left(NA^\gamma,NA^{-\gamma}\right)\underset{N\to\infty}{=}\mathcal{O}(1)
\end{equation}
or equivalently,
\begin{equation}
    \min\left(N^{1/(2\gamma)}|a(p)|,N^{1/(2\gamma)}|b(p)|\right)\underset{N\to\infty}{=}\mathcal{O}(1)
\end{equation}
In this situation, the discrete sum can no longer be approximated by an integral: the dominant contributions now arise from the terms near the boundaries $k=1$ or $k=N$. By symmetry, it suffices to discuss the case $NA^\gamma=\mathcal{O}(1)$. 

We assume that $N\tau_0(1-\tau_0)\leq N^{\eta}$ with $\eta\in]0,1]$. By definition of $\tau_{0}$, see \eqref{tau0.def}, it implies $A=\mathcal{O}(N^{-(1-\eta)/\gamma})$ and therefore $a(p)=\mathcal{O}(N^{-(1-\eta)/(2\gamma)})$.

Let $p_{0}\in\Sp$ be such that
\begin{equation}
    a(p_0e^{i\delta \varphi})=c_1(i\delta\varphi)^m+c_2(i\delta\varphi)^{m+1}+\mathcal{O}(\delta\varphi^{m+2})\,\text{ as }\delta\varphi\to 0.
\end{equation}
Then the size of the corresponding interval in $\delta\varphi$ is of order $\mathcal{O}(N^{-(1-\eta)/(2m\gamma)})$. Since the zeros of $a$ are isolated, one can choose $N$ large enough to ensure that the associated intervals around distinct zeros do not overlap.

We now split the sum $\Xi_N(A)$, see~\eqref{Xi.def}, into two parts:
\begin{equation}
    \sum_{k=1}^N=\underbrace{\sum_{k=1}^{\lceil N(\tau_0+\Delta_N)-1/2\rceil}}_{=\widetilde{S}_0}+\underbrace{\sum_{k=\lceil N(\tau_0+\Delta_N)+1/2\rceil}^{N}}_{=S_+}.
\end{equation}
The bound \eqref{Spm.result} for $S_+$ remains valid and shows that this term is exponentially small, uniformly for $N\tau_0\leq c N^{\eta}$ for some constant $c>0$.
Hence, only the first partial sum $\widetilde{S}_{0}$ contributes significantly, where the summation index satisfies $k=\mathcal{O}(N\Delta_N)=\mathcal{O}(N^{\varepsilon+\eta/2})$ as $N$ goes to infinity.

Using the explicit Mellin transform~\eqref{Mellin-MBI} and its corresponding weight \eqref{id:generalizedratioweight}, we write
\begin{equation}
\begin{split}
\varphi_{N}\left(\frac{2k-1}{2N},A\right)=&\frac{\mathcal{M}[\widehat{\omega}_{\mathrm{MB}}](k,A)}{\mathcal{M}[\widehat{\omega}_{\mathrm{MB}}](k)}\\
&\hspace*{-1cm}=\frac{\gamma\Gamma\left( (N + 2\delta + 1)/\gamma \right)}{\Gamma((\delta+k)/\gamma)\Gamma((N+\delta+1-k)/\gamma)}\int_0^A t^{\delta+k} \left(1 + t^\gamma \right)^{ -(N+2\delta+1)/\gamma}\frac{\mathrm{d}t}{t}.
\end{split}
\end{equation}
To apply Stirling's approximation formula to the Gamma functions, we require $k\ll N^{1/2}$. This is satisfied for any auxiliary exponents $0<\varepsilon<\eta/12<1/14$ provided that $\eta<6/7$.
Under this condition, 
\begin{equation}
 \frac{\Gamma\left( (N + 2\delta + 1)/\gamma \right)}{\Gamma((N+\delta+1-k)/\gamma)}=\left(\frac{N}{\gamma}\right)^{(k+\delta)/\gamma}\left[1+\mathcal{O}\left(N^{2\varepsilon+\eta-1}\right)\right]
\end{equation}
uniformly for all $k$ in the summation domain of $\widetilde{S}_0$.

For the integral, since $t^\gamma\leq A^\gamma=\mathcal{O}(N^{-(1-\eta)})$ as $N$ goes to infinity, we may further restrict $\eta<1/2$ so that uniformly
\begin{equation}
 \left(1 + t^\gamma \right)^{ -(N+2\delta+1)/\gamma}=\exp\left(-\frac{N}{\gamma} t^\gamma+\mathcal{O}(N^{-1+2\eta})\right).
\end{equation}
Substituting $t=(\gamma y/N)^{1/\gamma}$ with $y\in[0,1]$ gives
\begin{equation}
\begin{split}
\varphi_{N}\left(\frac{2k-1}{2N},A\right)=&\frac{1}{\Gamma((\delta+k)/\gamma)}\int_0^{NA^\gamma/\gamma} y^{(\delta+k)/\gamma}e^{- y}\frac{\mathrm{d}y}{y}\left[1+\mathcal{O}(N^{-1+2\eta})\right].
\end{split}
\end{equation}
The integrand attains its maximum at $y=(\delta+k-\gamma)/\gamma=\mathcal{O}(N^{\varepsilon+\eta/2})$, which indeed lies inside the integration domain as long as $\varepsilon+\eta/2<\eta$.
We therefore obtain the following rough upper-bound: 
\begin{equation}
\varphi_{N}\left(\frac{2k-1}{2N},A\right)\leq\frac{1}{\Gamma((\delta+k)/\gamma)}\left(\frac{\delta+k-\gamma}{\gamma e}\right)^{(\delta+k)/\gamma-1}\frac{NA^\gamma}{\gamma} \left[1+\mathcal{O}(N^{-1+2\eta})\right].
\end{equation}
For large $k$, the $k$-dependent part of this bound behaves like
\begin{equation}
\frac{1}{\Gamma((\delta+k)/\gamma)}\left(\frac{\delta+k-\gamma}{\gamma e}\right)^{(\delta+k)/\gamma}=\sqrt{\frac{\gamma}{2\pi(\delta+k)}}\left[1+\mathcal{O}\left(\frac{1}{k}\right)\right].
\end{equation}
Therefore, the sum $\widetilde{S}_0$ can be bounded from above as follows
\begin{equation}
\widetilde{S}_0=\sum_{k=1}^{\lceil N(\tau_0+\Delta_N)-1/2\rceil}\varphi_{N}\left(\frac{2k-1}{2N},A\right)\leq c\sqrt{N\Delta_N}\,NA^\gamma(1+o(1))=\mathcal{O}(N^{\varepsilon+5\eta/4})
\end{equation}
for some positive constant $c>0$. 
Substituting this bound into the symmetrised correlation function \eqref{Symmetrisation} yields
\begin{equation}
\begin{split}
\Cc_1^{(N)}(p_0e^{i\delta \varphi})+\Cc_1^{(N)}(p_0e^{-i\delta \varphi})=&\Cc_1^{(N,0)}(p_0e^{i\delta \varphi})+\Cc_1^{(N,0)}(p_0e^{-i\delta \varphi})\\
&\hspace*{-2cm}+\left[\frac{a'(p_0e^{i\delta \varphi})}{a(p_0e^{i\delta \varphi})}+\frac{a'(p_0e^{-i\delta \varphi})}{a(p_0e^{-i\delta \varphi})}\right]\mathcal{O}(N^{\varepsilon/2+5\eta/4})+\mathcal{O}(\delta\varphi).
\end{split}
\end{equation}
Multiplying the second term by the interval length which is of order $\mathcal{O}(N^{-(1-\eta)/(2m\gamma)})$, it indeed vanishes provided that
\begin{equation}
    0<\eta<\min\left(12/(31m\gamma+12),1/2\right),\,\text{ where }0<\varepsilon<\eta/12.
\end{equation}
Combining this result with the expansion obtained in \autoref{sec:sum}, and choosing $\eta$ accordingly, we decompose ${\rm S}^{1}\setminus{\rm S}^{1}_{\eta}=\Sigma_a\cup\Sigma_b$ where either $a$ or $b$ vanishes. The average winding number then, asymptotically in $N$, takes the form
\begin{equation}
\begin{split}
 \mathbb{E}\left(\mathrm{Wind}_{N} \right)=&\left[\int_{{\rm S}^{1}_{\eta}}+\int_{{\rm S}^{1}\setminus{\rm S}^{1}_{\eta}}\right]\Cc_1^{(N)}(p)\mathrm{d}p\\
 =&\int_{{\rm S}^{1}} \Cc_1^{(N,0)}(p)\mathrm{d}p+\frac{\gamma-2\delta-1}{2}\int_{{\rm S}^{1}_{\eta}} \left(\frac{a'(p)}{a(p)}-\frac{b'(p)}{b(p)}\right)\frac{|a(p)|^{2\gamma}-|b(p)|^{2\gamma}}{|a(p)|^{2\gamma}+|b(p)|^{2\gamma}}\mathrm{d}p\\
 &+o(1).
 \end{split}
 \end{equation}
 The integrals around the zeros in the second term are to be understood in the symmetrised sense, which in the limit $N\to\infty$ reduces to a Cauchy principal value integral.

\section{Conclusion}\label{sec13}

In this work we have investigated two natural extensions of the Gaussian additive 2-matrix model describing a disordered one-dimensional Hamiltonian in symmetry class AIII of the Altland-Zirnbauer classification.

First, we introduced a general complex Ginibre random field on the unit circle that recovers, as a special case, the standard Ginibre additive 2-matrix model while extending its analytical framework. To assess the universality of the Gaussian results, we then analysed a broader family of non-Gaussian models: P\'olya additive 2-matrix models at finite matrix size, focusing on the Muttalib-Borodin subclass in the large-$N$ regime. Compared to the GinUE, these random matrix ensembles exhibit different tail behaviour in their mean level density.

For both Gaussian and non-Gaussian settings, we derived explicit analytical expressions for the average winding number of the determinantal curve associated with the disordered bulk Hamiltonian as the quasi-momentum $p$ winds once around the unit circle modelling the one-dimensional Brillouin zone. We identified a geometric invariant in the leading term
\begin{align*}
    \mathbb{E}\Big(\mathrm{Wind}_{N}\Big)=\Bigg(\frac{1}{2\pi i}\oint_{{\rm S}^{1}}^{}\frac{\nu\left( p \right)^{\dagger}\nu'\left( p\right)}{\left\| \nu\left( p \right) \right\|^{2}}\mathrm{d}p\Bigg)N+\mathcal{O}\left(1\right).
\end{align*}
which essentially corresponds to an Aharonov-Anandan geometric phase angle. The subleading $\mathcal{O}\left(1\right)$ correction vanishes in the Gaussian case but generally persists for non-Gaussian P\'olya ensembles, where it can be interpreted as a re-weighted winding number around the origin. Its magnitude depends on the tail behaviour and repulsion from the origin in the mean level density, both of which determine the typical distance of the determinantal curve from the origin.

A central open question is the universality of the winding-number distribution itself. Under mild regularity conditions, one may expect a central limit theorem to hold, leading to an asymptotically Gaussian law as in the Ginibre additive 2-matrix model for which earlier derivations \cite{hahnUniversalCorrelations2024} show that the standard deviation scales as $N^{1/4}$. However, the precise variance and subleading corrections are expected to depend on ensemble-specific details as already highlighted for the mean winding number. This layered asymptotic structure is reminiscent of that observed in large-$N$ free-energy expansions of Coulomb gases, where the leading order captures universal mean-field behaviour, while coefficients in higher-order terms are of geometric nature and encode increasingly model-dependent information, see for example \cite[Th.~2]{SerfatyGaussianFluctuationsCoulomb2023}.

Extending this analysis to other symmetry classes such as CII and BDI in one dimension remains a challenging open direction. In these cases, the lack of real and quaternionic analogues of P\'olya ensembles of multiplicative type, combined with the Pfaffian rather than determinantal structure of correlation functions, renders the analysis substantially more involved, see for example \cite{hahnWindingNumberStatistics2023a}.

\backmatter

\bmhead{Acknowledgements}

MY acknowledges fruitful discussions with Matthias Allard on the topic of \textit{Pólya Ensembles on} $\mathrm{GL}_{N}(\C)$ and Sampad Lahiry on the topic of Laplace-type approximations. Both authors have been funded by the Australian Research Council via the grants DE210101581 (MY) and DP250102552 (MK).

\begin{appendices}

\section{Complex integral identities}\label{secA1}

In this appendix we present a set of three elementary complex integrals. These identities, while straightforward to derive using polar coordinates and residue calculus, will serve as key ingredients in the exact evaluation of the partition function.
 
\begin{proposition}
Let $k,\ell$ be two strictly positive integers, $f:\R_{+}\to\C$ be an integrable function whose moments up to the $(k-1)$st order exist, and let $ \alpha, \beta \in \C^* $ with $ \alpha \neq \beta $. Recalling the definition of the Mellin~\eqref{mellintransform1} and incomplete Mellin transform~\eqref{mellintransform1.inc}, the following identities hold
\begin{align}
\int_{\C} f(|z|^2)\, z^{k-1} \overline{z}^{\ell-1} \, \mathrm{d}^2 z 
&= \pi\, \mathcal{M}[f](k) \, \delta_{k,\ell}, \label{id:compint0} \\
\int_{\C} \frac{f(|z|^2)\, z^{k-1}}{\beta - \overline{z}} \, \mathrm{d}^2 z 
&= \frac{\pi}{\beta^k} \, \mathcal{M}[f](k, |\beta|^2), \label{id:compint1} \\
\int_{\C} \frac{f(|z|^2)}{(\alpha - z)(\beta - \overline{z})} \, \mathrm{d}^2 z 
&= \pi \left( \int_{0}^{\min(|\alpha|, |\beta|)^2} \frac{f(t)}{\alpha\beta - t} \, \mathrm{d}t
- \int_{\max(|\alpha|, |\beta|)^2}^{\infty} \frac{f(t)}{\alpha\beta - t} \, \mathrm{d}t \right). \label{id:compint2}
\end{align}
Here $\mathrm{d}^{2}z = \mathrm{d}\Re(z)\, \mathrm{d}\Im(z)$ denotes the Lebesgue measure on $\C$.
\end{proposition}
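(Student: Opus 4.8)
The plan is to pass to polar coordinates $z=\rho e^{i\theta}$, so that $\mathrm{d}^2z=\rho\,\mathrm{d}\rho\,\mathrm{d}\theta$ and $|z|^2=\rho^2$, and to exploit the orthogonality relation $\int_0^{2\pi}e^{in\theta}\,\mathrm{d}\theta=2\pi\delta_{n,0}$ for the angular part, while the radial part turns into an (incomplete) Mellin transform after the substitution $t=\rho^2$, $\mathrm{d}t=2\rho\,\mathrm{d}\rho$. For the two Cauchy-type kernels the extra ingredient is to expand $1/(\beta-\overline z)$ and $1/(\alpha-z)$ as geometric series on the concentric annuli delimited by $|z|$ relative to $|\alpha|$ and $|\beta|$, so that each summand becomes a pure monomial $z^a\overline z^b$ with an elementary angular integral. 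Identity \eqref{id:compint0} is then immediate: $z^{k-1}\overline z^{\ell-1}=\rho^{k+\ell-2}e^{i(k-\ell)\theta}$, the angular integral forces $k=\ell$ and yields a factor $2\pi$, and the radial integral $\int_0^\infty f(\rho^2)\rho^{2k-1}\,\mathrm{d}\rho=\tfrac12\mathcal{M}[f](k)$ after $t=\rho^2$, using only $\mathrm{L}^1$-integrability of $f$ and existence of its $(k-1)$st moment.

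For identity \eqref{id:compint1} I would split $\C=\{|z|<|\beta|\}\cup\{|z|>|\beta|\}$, the circle $|z|=|\beta|$ being a null set. On the inner disc, where $|\overline z/\beta|<1$, expand $1/(\beta-\overline z)=\beta^{-1}\sum_{n\geq0}(\overline z/\beta)^n$; multiplying by $z^{k-1}$ the angular integral selects the single term $n=k-1$, and the radial integral contributes $\tfrac12\int_0^{|\beta|^2}f(t)t^{k-1}\,\mathrm{d}t$, producing $\pi\beta^{-k}\mathcal{M}[f](k,|\beta|^2)$. On the outer region, where $|\beta/\overline z|<1$, expand $1/(\beta-\overline z)=-\sum_{n\geq0}\beta^n\overline z^{-n-1}$; each term carries angular factor $e^{i(k+n)\theta}$ with $k+n\geq1$, so the angular integral annihilates the whole region. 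Adding the two contributions gives the claim.

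For identity \eqref{id:compint2} I would assume without loss of generality $|\alpha|\leq|\beta|$ and decompose $\C$ into the three regions $|z|<|\alpha|$, $|\alpha|<|z|<|\beta|$ and $|z|>|\beta|$ (the middle one empty when $|\alpha|=|\beta|$). On the innermost region expand both factors in nonnegative powers, $1/(\alpha-z)=\alpha^{-1}\sum_m(z/\alpha)^m$ and $1/(\beta-\overline z)=\beta^{-1}\sum_n(\overline z/\beta)^n$; the angular integral keeps $m=n$, and resumming the geometric series in $t/(\alpha\beta)$ (legitimate since $t<\min(|\alpha|,|\beta|)^2\leq|\alpha\beta|$ there) yields $\pi\int_0^{\min(|\alpha|,|\beta|)^2}f(t)(\alpha\beta-t)^{-1}\,\mathrm{d}t$. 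On the outermost region expand both in negative powers, $1/(\alpha-z)=-\sum_m\alpha^m z^{-m-1}$ and $1/(\beta-\overline z)=-\sum_n\beta^n\overline z^{-n-1}$; again $m=n$ survives, and resumming in $\alpha\beta/t$ gives $-\pi\int_{\max(|\alpha|,|\beta|)^2}^\infty f(t)(\alpha\beta-t)^{-1}\,\mathrm{d}t$. On the middle annulus one factor is a series in $z$ and the other a series in $\overline z^{-1}$, so every monomial has angular factor $e^{im\theta}$ with $|m|\geq1$ of a fixed sign, and the angular integral kills this region entirely. Summing the three contributions produces the stated formula.

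The computations above are routine once the geometric expansions are in place; the only genuine care lies in justifying, first, absolute convergence of the two-dimensional integrals on the left — the factors $1/(\alpha-z)$ and $1/(\beta-\overline z)$ each have a single simple pole, hence an integrable $|z-\mathrm{pt}|^{-1}$ singularity in $\R^2$, so convergence is governed by the $\mathrm{L}^1$-bound and moment hypotheses on $f$, with the $\mathrm{d}^2z$-integral understood as a principal value across any circle on which $f$ itself is singular — and, second, the term-by-term integration, which follows from uniform convergence of each geometric series on compact radial subsets of the open annuli together with dominated convergence against $f(\rho^2)\rho\,\mathrm{d}\rho$. This convergence bookkeeping is the main (and only) obstacle; the algebraic manipulations reduce to orthogonality of characters and resummation of scalar geometric series.
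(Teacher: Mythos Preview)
Your proof is correct and follows the same global architecture as the paper's: pass to polar coordinates, evaluate the angular integral, then identify the radial integral as an (incomplete) Mellin transform. The difference lies only in how you handle the angular part. The paper treats the $\theta$-integral as a contour integral on the unit circle and evaluates it directly by the residue theorem, obtaining for instance
\[
\fint_0^{2\pi}\frac{e^{ik\theta}}{e^{i\theta}-\sqrt{r}/\beta}\,\mathrm{d}\theta
=2\pi\Bigl(\frac{r}{\beta^2}\Bigr)^{(k-1)/2}\mathbf{1}_{r<|\beta|^2}
\quad\text{and}\quad
\fint_0^{2\pi}\frac{\mathrm{d}\theta}{(\alpha-\sqrt{r}e^{i\theta})(\beta-\sqrt{r}e^{-i\theta})}
=\frac{2\pi}{\alpha\beta-r}\mathbf{1}_{r\notin[\min^2,\max^2]},
\]
so the case distinction on annuli and the value $1/(\alpha\beta-t)$ emerge in one stroke. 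You obtain the same case distinction by choosing the convergent geometric expansion on each annulus, and you must then resum $\sum_m(t/\alpha\beta)^m$ (or its reciprocal) after interchanging sum and radial integral. Both routes are standard; the residue version is a shade more direct since it bypasses the termwise-integration and resummation bookkeeping you flag at the end, while your Fourier-series version has the minor advantage of being entirely real-analytic and making the vanishing on the intermediate annulus in \eqref{id:compint2} manifest from the sign of the exponents.
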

\begin{proof}
We begin with \eqref{id:compint0}. For $ \theta \in [0, 2\pi[ $, the substitution $ z \mapsto e^{i\theta} z $ shows that the integral vanishes unless $ k = \ell $. In the case $ k = \ell $, switching to polar coordinates $ z = \sqrt{r} e^{i\theta} $ yields
\begin{align}
    \int_{\C} f(|z|^2)\, |z|^{2(k-1)} \, \mathrm{d}^2 z = \frac{1}{2}\int_{0}^{2\pi} \mathrm{d}\theta \int_{0}^{\infty} f(r)\, r^{k-1} \, \mathrm{d}r = \pi \mathcal{M}[f](k),
\end{align}
as claimed.
 
Next we prove~\eqref{id:compint1} for which we use the parametrization $ z = \sqrt{r} e^{i\theta} $ and Fubini theorem so that
\begin{equation}\label{app.eq.A5}
    \int_{\C} \frac{f(|z|^2)\, z^{k-1}}{\beta - \overline{z}} \, \mathrm{d}^2 z = \frac{1}{2\beta}\int_{0}^{\infty}\mathrm{d}r f(r) r^{\frac{k-1}{2}} \fint_{0}^{2\pi} \frac{e^{ik\theta}}{e^{i\theta} - \sqrt{r}/\beta }\mathrm{d}\theta,
\end{equation}
where the Cauchy principal value integral, denoted by $\fint$, allows us to safely handle the simple pole located at $e^{i\vartheta}=\sqrt{r}/\beta$. A contribution from the angular integral occurs only when $r=|\beta|^2$, which forms a set of Lebesgue measure zero so we may restrict to $r\in]0,\infty[\setminus\{|\beta|^2\}$.

Fixing such $r$, we evaluate the angular integral by the residue theorem. It is non-zero only when $r<|\beta|^2$ in which case 
\begin{equation}
    \fint_{0}^{2\pi}\frac{e^{ik\theta}}{e^{i\theta} - \sqrt{r}/\beta }\mathrm{d}\theta=2\pi\left(\frac{r}{\beta^2}\right)^{(k-1)/2}.
\end{equation}
Substituting this into the previous expression yields
\begin{equation}
\int_{\C} \frac{f(|z|^2)\, z^{k-1}}{\beta - \overline{z}} \, \mathrm{d}^2 z =\frac{\pi}{\beta^k} \int_{0}^{|\beta|^2} f(r)\, r^{k-1} \, \mathrm{d}r = \frac{\pi}{\beta^k} \mathcal{M}[f](k, |\beta|^2).
\end{equation}
 
Finally we prove \eqref{id:compint2}. With the same parametrization $z = \sqrt{r} e^{i\theta}$ we obtain
\begin{equation}\label{app.eq.A8}
\int_{\C} \frac{f(|z|^2)}{(\alpha - z)(\beta - \overline{z})} \, \mathrm{d}^2 z = \frac{1}{2}\int_{0}^{\infty} \mathrm{d}r f(r) \fint_{0}^{2\pi} \frac{\mathrm{d}\theta}{(\alpha - \sqrt{r} e^{i\theta})(\beta - \sqrt{r} e^{-i\theta})}.
\end{equation}
The principal value once again ensures that the poles at $r=|\alpha|^2$ and $r=|\beta|^2$ are harmless as they form a set of Lebesgue measure zero.
By the residue theorem, the angular integral is non-zero precisely when $r<\min\left\{ \left| \alpha \right|^{2},\left| \beta \right|^{2} \right\}$ or $r>\max\left\{ \left| \alpha \right|^{2},\left| \beta \right|^{2} \right\}$, and in that case
\begin{equation}
    \fint_{0}^{2\pi} \frac{\mathrm{d}\theta}{(\alpha - \sqrt{r} e^{i\theta})(\beta - \sqrt{r} e^{-i\theta})}=\frac{2\pi}{\alpha\beta-r}.
\end{equation}
Substituting this into \eqref{app.eq.A8} gives the desired identity.
\end{proof}

\begin{remark}\label{stieltjes}
    The last identity \eqref{id:compint2} may be expressed as a Stieltjes transform,
\begin{equation}\label{stiel}
\int_{\C} \frac{f(|z|^2)}{(\alpha - z)(\beta - \overline{z})} \, \mathrm{d}^2 z 
= \pi\, \mathcal{S}[f  \sigma_{r,R}](\alpha\beta),
\end{equation}
with
\begin{equation}
\sigma_{r,R}\left(t\right) = \mathbf{1}_{[0, r]}\left(t\right) - \mathbf{1}_{\left[R, \infty\right[}(t), \quad r = \min(|\alpha|^2, |\beta|^2), \quad R = \max(|\alpha|^2, |\beta|^2).
\end{equation}
We recall that for any locally integrable function $ g : \R_+ \to \C $, the Stieltjes transform of $g$, denoted by $ \mathcal{S}[g] $, is defined as follows
\begin{equation}
\mathcal{S}[g](z) = \int_{\R_{+}}^{} \frac{g(t)}{z - t} \, \mathrm{d}t, \qquad z \in \C \setminus \R_+.
\end{equation}
\end{remark}

\section{Determinantal identities and derivation of the kernels~(\ref{kernel3.a}-\ref{kernel3.c})}\label{secA2}

This appendix gathers several elementary but non-trivial determinantal identities underlying the derivation of the kernels (\ref{kernel3.a}-\ref{kernel3.c}) and used repeatedly in the main text. While closely related to the classical Vandermonde and Andr\'eief identities, some of these formulas do not appear explicitly in standard references, so we include short derivations for completeness.

The evaluation of the three kernels~(\ref{kernel3.a}-\ref{kernel3.c}) relies on a generalized version of the Andr\'eief identity~\cite{andreevNoteRelationEntre1886} derived in~\cite[Appendix C.1]{kieburgDerivationDeterminantalStructures2010} and reviewed in \cite{forresterMeetAndreiefBordeaux2019}. For completeness we restate it here in a slightly more general form, allowing integration over an arbitrary measurable space.
 
\begin{proposition}[Generalized Andr\'eief Identity]
Let $ \left(E,\mu \right) $ be a measurable space, $N\in\mathbb{N}^*$ and $n,m\in\mathbb{N}$. Fix two matrices $A\in \mathrm{M}_{m, N+m}(\C)$ and  $B\in \mathrm{M}_{n, N+n}(\C)$ and let 
\begin{align*}
    f_{1},\ldots,f_{N+m}, g_{1},\ldots,g_{N+n}:E\to\C
\end{align*}
be measurable functions in $\mathrm{L}^2(E,\mu)$. Define
\begin{align*}
    F\left(\mathbf{x}\right):=\left(f_{j}\left(x_{i}\right)\right)\in\mathrm{M}_{N, N+m}(\C),\quad G\left(\mathbf{x}\right):=\left(g_{j}\left(x_{i}\right)\right)\in\mathrm{M}_{N, N+n}(\C).
\end{align*}
Then the following identity holds:
\begin{align}\label{th:andreiev}
    &\int_{E^{N}} \det\begin{pmatrix}
    A \\ F\left(\mathbf{x}\right)
    \end{pmatrix}
    \det\begin{pmatrix}
    B \\ G\left(\mathbf{x}\right)
    \end{pmatrix}
\mathrm{d}\mu\left(x_{1}\right)\cdots\mathrm{d}\mu\left(x_{N}\right)\nonumber
\\
=& \left(-1\right)^{mn} N!
 \det\left(\begin{array}{c|c}
0 & B \\ \hline
A^{\top} & \Big( \int_{E} f_{i}\left(x\right)g_{j}\left(x\right) \, \mathrm{d}\mu(x) \Big)_{\substack{1\leq i\leq N+m\\1\leq j\leq N+n}}
\end{array}\right).
\end{align}
\end{proposition}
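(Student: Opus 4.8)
The plan is to derive \eqref{th:andreiev} from the classical Andr\'eief (Heine) identity
$\int_{E^{N}}\det(\phi_{i}(x_{j}))_{i,j=1}^{N}\det(\psi_{i}(x_{j}))_{i,j=1}^{N}\,\mathrm{d}\mu(x_{1})\cdots\mathrm{d}\mu(x_{N})=N!\det(\int_{E}\phi_{i}\psi_{j}\,\mathrm{d}\mu)_{i,j=1}^{N}$ by performing a Laplace (generalized cofactor) expansion on each side of the asserted formula. The classical identity itself already holds over an arbitrary measure space, since its proof only uses multilinearity of the determinant together with Fubini's theorem; the $\mathrm{L}^{2}$ hypothesis guarantees, via Cauchy--Schwarz, that every Gram integral $\int_{E}f_{i}g_{j}\,\mathrm{d}\mu$ exists and that all integrands appearing below are absolutely integrable, so the exchange of finite sums with the integral is legitimate throughout.

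First I would expand the two rectangular determinants on the left along their constant blocks. For an $N$-subset $S\subseteq\{1,\dots,N+m\}$ with complement $S^{c}$, write $A_{S^{c}}$ for the $m\times m$ submatrix of $A$ formed by the columns in $S^{c}$ and $F_{S}(\mathbf{x})=(f_{j}(x_{i}))_{1\le i\le N,\,j\in S}$; Laplace expansion along the first $m$ rows gives $\det\left(\begin{smallmatrix}A\\ F(\mathbf{x})\end{smallmatrix}\right)=\sum_{|S|=N}\varepsilon(S)\det(A_{S^{c}})\det(F_{S}(\mathbf{x}))$ with the standard cofactor sign $\varepsilon(S)=(-1)^{m(m+1)/2+\sum_{j\in S^{c}}j}$, and likewise $\det\left(\begin{smallmatrix}B\\ G(\mathbf{x})\end{smallmatrix}\right)=\sum_{|T|=N}\varepsilon(T)\det(B_{T^{c}})\det(G_{T}(\mathbf{x}))$ with $\varepsilon(T)=(-1)^{n(n+1)/2+\sum_{j\in T^{c}}j}$. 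Multiplying, exchanging the finite sums with the integral, and applying the classical Andr\'eief identity to each pair $(F_{S},G_{T})$ turns the left-hand side into $N!\sum_{S,T}\varepsilon(S)\varepsilon(T)\det(A_{S^{c}})\det(B_{T^{c}})\det(M_{S,T})$, where $M=(\int_{E}f_{i}g_{j}\,\mathrm{d}\mu)_{1\le i\le N+m,\,1\le j\le N+n}$ and $M_{S,T}$ is its $S\times T$ submatrix.

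Second, I would recognize this double sum as the iterated Laplace expansion of the block determinant on the right. Expanding $\det\left(\begin{smallmatrix}0&B\\ A^{\top}&M\end{smallmatrix}\right)$ along its first $n$ rows, the vanishing of the first $m$ entries in those rows forces the $n$ selected columns to lie among the last $N+n$ columns, i.e.\ to be indexed by a complement $T^{c}$, so this step produces $\det(B_{T^{c}})$ times the complementary minor $\det\left(\begin{smallmatrix}A^{\top}&M_{\bullet,T}\end{smallmatrix}\right)$ of size $N+m$; expanding the latter along its first $m$ columns (the $A^{\top}$ block) then produces $\det(A_{S^{c}})\det(M_{S,T})$. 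Collecting the two cofactor signs and comparing with $\varepsilon(S)\varepsilon(T)$ shows they agree up to the uniform factor $(-1)^{mn}$, which is precisely what is picked up in the first step because the $n$ chosen $B$-columns occupy positions shifted by $m$ (they "skip over" the $m$ columns of $A^{\top}$). This yields $\det\left(\begin{smallmatrix}0&B\\ A^{\top}&M\end{smallmatrix}\right)=(-1)^{mn}\sum_{S,T}\varepsilon(S)\varepsilon(T)\det(A_{S^{c}})\det(B_{T^{c}})\det(M_{S,T})$, and combining with the first step gives \eqref{th:andreiev}.

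The only genuinely delicate point is the sign bookkeeping in the second step: one must fix the convention for the generalized cofactor sign once and carefully track how the two nested Laplace expansions on the right recombine relative to the product $\varepsilon(S)\varepsilon(T)$ on the left, verifying that the discrepancy equals $(-1)^{mn}$ uniformly in $S$ and $T$. Everything else --- the absolute integrability, the exchange of the finite sums with the integral, and the classical Andr\'eief identity --- is routine and, as in \cite{kieburgDerivationDeterminantalStructures2010,forresterMeetAndreiefBordeaux2019}, carries over verbatim to an arbitrary measurable space $(E,\mu)$, which is the only new feature of the present formulation.
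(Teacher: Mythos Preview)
The paper does not actually prove this proposition: it merely restates the identity from \cite[Appendix~C.1]{kieburgDerivationDeterminantalStructures2010} (see also \cite{forresterMeetAndreiefBordeaux2019}) in the slightly more general setting of an arbitrary measure space, so there is no in-text argument to compare against. Your approach---Laplace-expand both rectangular determinants along their constant blocks, apply the classical Andr\'eief identity termwise, and then recognise the resulting double sum as an iterated Laplace expansion of the block determinant---is the standard derivation and is correct; the key sign check works out exactly as you indicate, since choosing the $n$ $B$-columns in positions $m+j$ ($j\in T^{c}$) contributes an extra $(-1)^{mn}$ relative to $\varepsilon(T)$, while the second expansion along the $A^{\top}$ columns reproduces $\varepsilon(S)$ with no further shift.
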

Let $z_{1},\ldots,z_{N},\alpha,\beta$ be a collection of complex numbers and denote
\begin{align}
    \mathrm{V}_{m}\left( \textbf{z} \right)=\mathrm{V}_{m}\left( z_{1},\ldots,z_{N} \right)=\left( z^{j-1}_{i} \right)_{\substack{1\leq i\leq N\\1\leq j\leq m}}\in\mathrm{M}_{N,m}\left(\C\right),
    \\
    h\left(\textbf{z},\beta\right)=\Bigg(\frac{1}{\beta-z_{j}}\Bigg)_{1\leq j\leq N}\in\mathrm{M}_{N,1}\left(\C\right).
\end{align}
For the computation of kernel~\eqref{kernel3.a} we use the elementary identity
\begin{align}\label{eq:vandermonde}
    \Delta_{N+1}\left(\textbf{z},\alpha\right)=\Delta_{N}\left(\textbf{z}\right)\prod\limits_{j=1}^{N}\left( \alpha-z_{j} \right)
\end{align}
Applying this identity twice and combining it with the generalized Andr\'eief identity with $\mathrm{d}\mu\left(z_{i}\right)=f\left( \left| z_{i} \right|^{2} \right)\mathrm{d}\Re\left( z_{i} \right)\mathrm{d}\Im\left( z_{i} \right)$ gives
    \begin{equation}
    \begin{split}
        &\int_{\C^{N}}^{}\Delta_{N+1}\left( \textbf{z},\alpha \right)\Delta_{N+1}\left( \overline{\textbf{z}},\beta \right)\mathrm{d}\mu\left( \mathbf{z} \right)\\
        =&-N!\det\left(\begin{array}{c|c}
0 & \mathrm{V}_{N+1}\left( \beta \right) \\\hline
\mathrm{V}_{N+1}\left( \alpha \right)^{\top} & \left(\int_{\C}z^{i-1}\overline{z}^{j-1}\mathrm{d}\mu(z) \right)_{1\leq i,j\leq N+1}
\end{array}\right).
    \end{split}
    \end{equation}
The integral in the lower-right block, computed in~\eqref{id:compint0}, is diagonal. Using the Schur complement formula with $D$ an invertible matrix yields
    \begin{equation}
    \begin{split}
        &\int_{\C^{N}}^{}\Delta_{N+1}\left( \textbf{z},\alpha \right)\Delta_{N+1}\left( \overline{\textbf{z}},\beta \right)\mathrm{d}\mu\left( \mathbf{z} \right)=N!\left(\prod_{j=1}^{N+1}\pi\mathcal{M}[f](j)\right)\sum_{j=0}^N\frac{(\beta\alpha)^j}{\pi\mathcal{M}[f](j+1)},
    \end{split}
    \end{equation}
which gives~\eqref{kernel3.a}.

Next we assume that, for all $i$ in $[1,N]$, $\beta\neq z_{i}$. We apply the same identity in
\begin{equation}
\begin{split}
        &\int_{\C^{N}}^{} \left | \Delta_{N}\left ( \mathbf{z} \right ) \right |^{2}\prod\limits_{j=1}^{N}f\left ( \left | z_{j} \right |^{2} \right )\Bigg(\frac{ \alpha-z_{j} }{\beta-z_{j}}\Bigg)\mathrm{d}^{2}z_j\\
        =&(\beta-\alpha)\int_{\C^{N}}^{} \Delta_{N}\left ( \overline{\mathbf{z}} \right ) \det\left(\begin{array}{c|c}
\mathrm{V}_{N}\left( \textbf{z} \right)^{\top} & h\left( \textbf{z},\beta \right) \\ \hline
\mathrm{V}_{N}\left( \alpha \right)^{\top} & \displaystyle\overset{}{\frac{1}{\beta-\alpha}}
\end{array}\right)\prod\limits_{j=1}^{N}\mathrm{d}\mu(z_j)\\
=&N!(\beta-\alpha)\det\left(\begin{array}{c|c}
\left(\int_{\C}\overline{z}^{i-1}z^{j-1}\mathrm{d}\mu(z) \right)_{1\leq i,j\leq N} & \displaystyle\underset{}{\Bigl(\int_{\C}\frac{\overline{z}^{i-1}{\rm d}\mu(z)}{\beta-z}\Bigl)_{1\leq i\leq N}}\\ \hline
\mathrm{V}_{N}\left( \alpha \right)^{\top} & \displaystyle\overset{}{\frac{1}{\beta-\alpha}}
\end{array}\right).
\end{split}
\end{equation}
Here, the upper-left block is diagonal by~\eqref{id:compint0} and the upper-right block can be evaluated using~\eqref{id:compint1}:
\begin{equation}
    \int_{\C}\frac{\overline{z}^{i-1}{\rm d}\mu(z)}{\beta-z}=\frac{\pi}{\beta^i}\mathcal{M}[f](i,|\beta|^2).    
\end{equation}
Evaluating the determinant using the Schur complement formula gives
\begin{equation}
    \begin{split}
        &\int_{\C^{N}}^{} \left | \Delta_{N}\left ( \mathbf{z} \right ) \right |^{2}\prod\limits_{j=1}^{N}f\left ( \left | z_{j} \right |^{2} \right )\Bigg(\frac{ \alpha-z_{j} }{\beta-z_{j}}\Bigg)\mathrm{d}^{2}z_j\\
=&N!\left(\prod_{j=1}^{N}\pi\mathcal{M}[f](j)\right)\left[1-(\beta-\alpha)\sum_{j=1}^{N}\frac{\mathcal{M}[f](j,|\beta|^2)}{\mathcal{M}[f](j)}\frac{\alpha^{j-1}}{\beta^{j}}\right]
    \end{split}
    \end{equation}
which corresponds to~\eqref{kernel3.b}.

For the final formula~\eqref{kernel3.c}, we use
    \begin{align}
        \frac{\Delta_{N}\left(\mathbf{z}\right)}{\prod\limits_{j=1}^{N}\left(\alpha-z_{j}\right)}=\det\begin{pmatrix}
\mathrm{V}_{N-1}\left( \mathbf{z} \right) & h\left( \mathbf{z},\alpha \right)
\end{pmatrix}
    \end{align}
and the same identity for $\beta$ and the complex conjugate Vandermonde determinant to obtain
    \begin{equation}
    \begin{split}
        &\int_{\C^{N}}^{} \left | \Delta_{N}\left ( \mathbf{z} \right ) \right |^{2}\prod\limits_{j=1}^{N}\frac{f\left ( \left | z_{j} \right |^{2} \right )}{\left ( \alpha-z_{j} \right )\left ( \beta-\overline{z_{j}} \right )}\mathrm{d}^{2}z\\
        =&\int_{\C^{N}}^{}\det\begin{pmatrix}
\mathrm{V}_{N-1}\left( \mathbf{z} \right) & h\left( \mathbf{z},\alpha \right)
\end{pmatrix}\det\begin{pmatrix}
\mathrm{V}_{N-1}\left( \overline{\mathbf{z}} \right) & h\left( \overline{\mathbf{z}},\beta \right)
\end{pmatrix}\mathrm{d}\mu\left(\mathbf{z}\right)\\
=&N!\det\left(\begin{array}{c|c}
\left( \int_{\C}^{}f\left( \left| z \right|^{2} \right)z^{i-1}\overline{z}^{j-1}\mathrm{d}^{2}z \right)_{1\leq i,j\leq N-1} &\displaystyle \underset{ }{\left( \int_{\C}^{}\frac{f\left( \left| z \right|^{2} \right)z^{i-1}}{\beta-\overline{z}}\mathrm{d}^{2}z \right)_{1\leq i\leq N-1}} \\\hline \displaystyle
\left( \int_{\C}^{}\frac{f\left( \left| z \right|^{2} \right)\overline{z}^{j-1}}{\alpha-z}\mathrm{d}^{2}z \right)_{1\leq j\leq N-1} &\displaystyle \int_{\C}^{}\frac{f\left( \left| z \right|^{2} \right)}{\left( \alpha-z \right)\left( \beta-\overline{z} \right)}\mathrm{d}^{2}z 
\end{array}\right).
\end{split}
\end{equation}
Applying the generalized Andr\'eief identity \ref{th:andreiev} and evaluating the integrals via (\ref{id:compint0}-\ref{id:compint2}), the upper-left block is diagonal and the lower-right entry is the Stieltjes transform ~\eqref{stiel}. Applying the Schur complement formula once more yields
    \begin{equation}
    \begin{split}\label{id:vandermonde3}
    &\int_{\C^{N}}^{} \left | \Delta_{N}\left ( \mathbf{z} \right ) \right |^{2}\prod\limits_{j=1}^{N}\frac{f\left ( \left | z_{j} \right |^{2} \right )}{\left ( \alpha-z_{j} \right )\left ( \beta-\overline{z_{j}} \right )}\mathrm{d}^{2}z \nonumber
    \\
    =&\Bigg(\pi^{N}N!\prod\limits_{k=1}^{N-1}\mathcal{M}\left[ f \right]\left( k \right)\Bigg)\Bigg(\mathcal{S}\left[ f\sigma_{r,R} \right]\left( \alpha\beta \right)-\sum\limits_{k=1}^{N-1}\frac{\mathcal{M}\left[ f \right]\left( k,\left| \alpha \right|^{2} \right)\mathcal{M}\left[ f \right]\left( k,\left| \beta \right|^{2} \right)}{\left( \alpha\beta \right)^{k}\mathcal{M}\left[ f \right]\left( k \right)}\Bigg)
    \end{split}
    \end{equation}
which concludes the proof of~\eqref{kernel3.c}.

\end{appendices}

\nocite{*}
\bibliography{wdAIII}

\end{document}